 \pgfplotsset{compat=newest}
    \pgfplotsset{plot coordinates/math parser=false}
    \pgfplotsset{
    label style={anchor=near ticklabel},
    xlabel style={yshift=0.0em},
    ylabel style={yshift=-0.3em},
    tick label style={font=\footnotesize },
    label style={font=\footnotesize},
    legend style={font=\footnotesize},
    title style={font=\fontsize{7}}}
\definecolor{iso}{rgb}{0.7,0.7,0.7}
\newcommand{\codensemble}{\msr{C}}
\newcommand{\rosymb}{y} 
\newcommand{\absoverhead}{\delta}
\newcommand{\Pf } { \mathsf{P}_{\mathsf{F}}}
\newcommand{\barPf } { \bar {\mathsf{P}}_{\mathsf{F}}}
\newcommand{\Grx}{\tilde{\mathbf{G}}}
\newcommand{\dmax}{ d_{\max}}
\newcommand{\x}{x}
\newcommand{\z}{z}
\newcommand{\Raptorinput}{u}
\newcommand{\vecu}{\mathbf{\Raptorinput}}
\newcommand{\Rintermsymbol}{v}
\newcommand{\vecv}{{\mathbf{\Rintermsymbol}} \mkern-2mu}
\newcommand{\vecgen}{r}
\newcommand{\Rosymb}{c}
\newcommand{\precodegeneric}{\mathcal{C}}
\newcommand{\GrxLT}{\Grx_{\text{LT}}}
\newcommand{\GLT}{\G_{\text{LT}}}
\newcommand{\GLTA}{\G_{\text{LT}}^A}
\newcommand{\GLTB}{\G_{\text{LT}}^B}
\newcommand{\Gp}{\G_{\text{o}}}
\newcommand{\Rrosymb}{\rosymb}
\newcommand{\we}{A}
\newcommand{\weo}{A}
\newcommand{\weodual}{B}
\newcommand{\weoensemble}{\mathsf{A}}
\newcommand{\pil}{\pi_{\l}}
\newcommand{\pilt}{\pi_{\l,t}}
\renewcommand{\l}{l}
    \newcommand{\G}{\mathbf{G}}
    \renewcommand{\G}{\mathbf{G}}
\newcommand{\Exp}{\mathbb{E}}
\newcommand{\Omegarten}{ {\Omega}_{\mathsf{A}} }
\newtheorem{prop}{Proposition}
\newtheorem{theorem}{Theorem}
\newtheorem{remark}{Remark}
\newtheorem{example}{Example}
\newtheorem{lemma}{Lemma}
\newtheorem{corollary}{Corollary}
    \newcommand{\A}{\we}
    \renewcommand{\A}{\we}
\newcommand{\argmax}{{\arg\max}}
\renewcommand{\argmax}{{\arg\max}}
\newcommand{\krawt}{ \mathcal{K}}
\newcommand{\krawtchouk}[4]{\krawt_{#1}^{#3,#4}(#2)}
\newcommand{\rank}{\mathsf{rank}}
\newcommand{\pifroml}{\vartheta_{i,l,j} }
\newcommand{\pifromlA}{\vartheta_{i,l,j}^{(A)} }
\newcommand{\pifromlB}{\vartheta_{d,t,s}^{(B)} }
\newcommand{\qi}{\varphi_i }
\newcommand{\qiAB}{\varphi_{i,d} }
\newcommand{\lab}{f}
\newcommand{\labv}{\mathbf{\lab}\mkern-2mu}
\newcommand{\labg}{\gamma\mkern-2mu}
\newcommand{\labgv}{\bm{\labg}}
\newcommand{\Labgv}{\mathbf{\Gamma}\mkern-2mu}
\newcommand{\Labgvsetj}{\mathsf{\Gamma}_j\mkern-2mu}
\newcommand{\Labgvsets}{\mathsf{\Gamma}_s\mkern-2mu}
\newcommand{\lenum}{\mathcal{Q}_\labv}
\newcommand{\lenumbi}{\mathcal{Q}_{\labv_A, \labv_B}}
\newcommand{\lenumensemble}{\mathsf{Q}_\labv}
\newcommand{\lenumbiensemble}{\mathsf{Q}_{\labv_A, \labv_B}}
\newcommand{\wh}{{w}}
\newcommand{\hw}{\wh}
\newcommand{\comp}{{\varsigma}}
\newcommand{\charact}{{\chi}}
\newcommand{\JWEF}{J}
\newcommand{\JCEF}{\mathcal{S}}
\newcommand{\avgJWEF}{\mathsf{J}}
\newcommand{\avgJCEF}{\mathsf{S}}
\newcommand{\cD}{\mathscr{D}}
\newcommand{\cT}{\mathscr{T}}
\newcommand{\jw}{\tau}
\newcommand{\jc}{\kappa}
\newcommand{\bkappa}{\bm{\kappa}}
\newcommand{\ensemble}{\mathscr{C}}
\newcommand{\btau}{\bm{\tau}}
\newcommand{\cK}{\mathscr{K}}
\newcommand{\bgamma}{\bm{\gamma}}
\newcommand{\bupsilon}{\bm{\upsilon}}
\newcommand{\bUpsilon}{\bm{\Upsilon}}
\newcommand{\bfz}{\mathbf{z}}
\newcommand{\transpose}[1]{#1^{\mathsf{T}}}
\newcommand{\GF}[1]{\mathbb{F}_{\!#1}}
\newcommand{\bbF}[1]{\GF{#1}}
\newcommand{\vast}{\bBigg@{4}}
\newcommand{\Vast}{\bBigg@{5}}
\begin{document}
\begin{acronym}
\acro{BEC}{binary erasure channel}
\acro{DFT}{discrete Fourier transform}
\acro{$q$-EC}{$q$-ary erasure channel}
\acro{WE}{weight enumerator}
\acro{WEF}{weight enumerator function}
\acro{IOWEF}{input output weight enumerator function}
\acro{IOWE}{input output weight enumerator}
\acro{LT}{Luby Transform}
\acro{BP}{belief propagation}
\acro{ML}{maximum likelihood}
\acro{MDS}{maximum distance separable}
\acro{LDPC}{low-density parity-check}
\acro{i.i.d.}{independent and identically distributed}
\acro{VN}{variable node}
\acro{CN}{check node}
\end{acronym}

\title{Bounds on the Error Probability of Raptor Codes under Maximum Likelihood Decoding}

\author{
	Francisco L\'azaro, \IEEEmembership{Member, IEEE}, Gianluigi Liva, \IEEEmembership{Senior Member, IEEE},\\
	Gerhard Bauch, \IEEEmembership{Fellow, IEEE},
	Enrico Paolini, \IEEEmembership{Senior Member, IEEE} 
	\thanks{Francisco L\'azaro and Gianluigi Liva are with the Institute of Communications and
		Navigation of the German Aerospace Center (DLR), Muenchner Strasse 20, 82234
		Wessling, Germany.
		Email:\{\texttt{Francisco.LazaroBlasco}, \texttt{Gianluigi.Liva}\}\texttt{@dlr.de}.}
	\thanks{Gerhard Bauch is with the Institute for Telecommunication,  Hamburg University of Technology, Hamburg, Germany.
		E-mail: \texttt{Bauch@tuhh.de}.}
    \thanks{Enrico Paolini is with CNIT, DEI, University of Bologna, via Dell'Universit{\`a} 50, 47522 Cesena (FC), Italy.
        E-mail: \texttt{e.paolini@unibo.it}.}	
	\thanks{Corresponding Address:
		Francisco L\'azaro, KN-SAN, DLR, Muenchner Strasse 20, 82234 Wessling, Germany. Tel: +49-8153 28-3211, Fax: +49-8153 28-2844, E-mail: \texttt{Francisco.LazaroBlasco@dlr.de}.}
    \thanks{This work has been presented in part at IEEE Globecom, Washington DC, USA, December 2016 \cite{lazaro:Globecom2016}.
	}
	\thanks{Copyright $\copyright$ 2020 IEEE. Personal use of this material is permitted. However, permission to use this material for any other purposes must be obtained from the IEEE by sending a request to pubs-permissions@ieee.org.}
	\thanks{Digital Object Identifier: 10.1109/TIT.2020.3049061}
}
\maketitle

\thispagestyle{empty}
\pagestyle{empty}

\begin{abstract}
In this paper upper and lower bounds on the probability of decoding failure under \acl{ML} decoding are derived for different (nonbinary) Raptor code constructions.
 In particular four different constructions are considered; \emph{(i)} the standard Raptor code construction, \emph{(ii)} a multi-edge type construction, \emph{(iii)} a construction where the Raptor code is nonbinary but the generator matrix of the LT code has only binary entries, \emph{(iv)} a combination of (ii) and (iii). The latter construction resembles the one employed by RaptorQ codes, which at the time of writing this article represents the state of the art in fountain codes. 
The bounds are shown to be tight, and provide an important aid for the design of Raptor codes.
\end{abstract}

\begin{IEEEkeywords}
Erasure correction, fountain codes, inactivation decoding, LT codes, maximum likelihood decoding, Raptor codes.
\end{IEEEkeywords}

\section{Introduction}\label{sec:Intro}

\IEEEPARstart{F}{ountain} codes \cite{byers02:fountain} are a class of erasure codes which have the property of being rateless. Thus, they are potentially able to generate an endless amount of encoded (or output) symbols from $k$ information (or input) symbols.  This property makes them suitable for application in situations where the channel erasure rate is not a priori known.
The first class of practical fountain codes, \ac{LT} codes, was introduced in \cite{luby02:LT} together with an iterative  decoding algorithm that achieves a good performance when the number of input symbols  is large. In \cite{luby02:LT,shokrollahi06:raptor}  it was shown how, in order to achieve a low probability of decoding error, the encoding and iterative decoding cost\footnote{In \cite{shokrollahi06:raptor} the cost per output symbol is defined as the encoding/decoding complexity normalized by the number of output symbols. The complexity is defined as the number operations needed to carry out encoding/decoding.} per output symbol is $O \left(\ln(k)\right)$.

Raptor codes were introduced in \cite{shokrollahi06:raptor} and outperform \ac{LT} codes in several aspects. They consist of a serial concatenation of an outer code $\mathcal C$ (or \emph{precode}) with an inner \ac{LT} code. On erasure channels, this construction allows relaxing the design of the \ac{LT} code, requiring only the recovery of a fraction $1-\sigma$ of the input symbols, with $\sigma$ small. This can be achieved with linear encoding and decoding complexity  (under iterative decoding). The outer code is responsible for recovering the remaining fraction $\sigma$ of input symbols. If the outer code $\mathcal C$ is linear-time encodable and decodable, then the Raptor code has linear encoding and (iterative) decoding complexity over erasure channels.

Most of the existing works on \ac{LT} and Raptor codes consider iterative decoding and assume large input block lengths ($k$ at least in the order of a few tens of thousands). However, in practice, smaller values of $k$ are more commonly used. For example, for the binary Raptor codes standardized in \cite{MBMS16:raptor} and \cite{luby2007rfc} the supported values of $k$ range from $4$ to $8192$.  For these input block lengths, iterative decoding performance degrades considerably. In this regime,
a different decoding algorithm may be adopted that is an efficient \ac{ML} decoder, in the form of inactivation decoding \cite{berlekamp1968algebraic,lamacchia91:solving,fekri:ldpc,miller04:bec,shokrollahi2005systems}.
An inactivation decoder solves a system of equations in several stages. First a set of variables is declared to be \emph{inactive}. Next a system of equations involving only the set of inactive variables needs to be solved, for example using Gaussian elimination. Finally, once the value of the inactive variables is known, all other variables (those which were not inactive) are recovered using iterative  decoding (back substitution).

Recently, some works have addressed the complexity of inactivation decoding for Raptor and \ac{LT} codes \cite{mahdaviani2012raptor,lazaro:ITW,lazaro:scc2015,lazaro2017inactivation}.
The probability of decoding failure of \ac{LT} and Raptor codes under \ac{ML} decoding has also been subject of study in several works. In \cite{Rahnavard:07}  upper and lower bounds on the  symbol erasure rate were derived for \ac{LT} codes and Raptor codes with outer codes in which the elements of the parity-check matrix are \ac{i.i.d.} Bernoulli random variables.
This work was elegantly extended in \cite{schotsch:2013,Schotsch:14}, where upper and lower bounds on the error probability of \ac{LT} codes under \ac{ML} decoding were derived.  Moreover, \cite{Schotsch:14} introduced an approximation to the probability of error of Raptor codes under \ac{ML} decoding, that was derived under the assumption that the number of erasures correctable by the outer code is small. Hence, the approximation holds when the rate of the outer code is sufficiently high.
In \cite{Liva10:fountain} it was shown by means of simulations how the error probability 
of Raptor codes constructed on $\GF{q}$, the finite field of order $q$,
 is very close to that of linear random fountain codes.
In \cite{wang:2015} upper and lower bounds on the probability of decoding failure of Raptor codes were derived. The outer codes considered in \cite{wang:2015} are binary linear random codes with a systematic encoder. Ensembles of Raptor codes with linear random outer codes were also studied in a fixed-rate setting in \cite{lazaro:ISIT2015},\cite{lazaro:JSAC}.
In \cite{Zhang:bounds}, $q$-ary Raptor codes are considered, but only for the case in which the outer code is a low-density generator matrix code.
Although a number of works have studied the probability of decoding failure of Raptor codes, to the best of the authors' knowledge, up to now the results hold only for specific outer codes (see \cite{Rahnavard:07,wang:2015,lazaro:ISIT2015,lazaro:JSAC,Zhang:bounds}).

In this paper upper and lower bounds on the probability of decoding failure of different Raptor code constructions are derived. 
The upper bounds derived in this paper follow the footsteps of \cite{schotsch:2013,Schotsch:14}, where bounds to the error probability of LT codes were derived. In contrast to other works in literature \cite{Rahnavard:07,wang:2015,lazaro:ISIT2015,lazaro:JSAC,Zhang:bounds}, the bounds presented in this paper are general since they are valid for any outer code, requiring only the (joint) weight enumerator (or composition enumerator, a quantity to be defined later) of the outer code.  Furthermore, simulation results are presented which show how the derived bounds are tight.
In particular four different constructions are considered, namely:
\begin{enumerate}[label=\roman*)]
  \item  a Raptor code construction over $\GF{q}$, where the outer code is built over $\GF{q}$ as well as the generator matrix of the LT code;
  \item a multi-edge type Raptor construction over $\GF{q}$, where intermediate symbols of two different types can be distinguished;
  \item a construction where the Raptor code is built over $\GF{q}$ but the generator matrix of the LT code has only entries belonging to $\{0,1\} \subseteq \GF{q}$;
  \item a combination of (ii) and (iii).
\end{enumerate}
The bounds are applicable for the two Raptor codes present in standards. In particular, the R10 Raptor code in its nonsystematic form \cite{MBMS16:raptor} is an example of construction (i), since binary Raptor codes are simply a special case ($q=2$). Furthermore, the RaptorQ code in its nonsystematic form \cite{lubyraptorq} is an example of construction (iv).
The RaptorQ code is,  at the timing, the state of the art fountain code construction, and it is an {IETF} standard \cite{lubyraptorq}. To the best of the authors' knowledge, this is the first work which analyzes the performance of the RaptorQ construction\footnote{In \cite{Zhang:bounds} a $q$-ary Raptor code construction is analyzed, but it does not consider all the peculiarities of the RaptorQ code.}.

The upper bounds on the probability of decoding failure  are derived for all the above four constructions and they all result from application of the union bound. As mentioned before, they generalize the results in literature to the case where the outer codes are chosen arbitrarily (with the caveat of having sufficient knowledge of the outer code distance properties). In the same general setting, two types of lower bounds are obtained. A first lower bound is a consequence of the degree-two Bonferroni inequality (as for the lower bounds introduced in \cite{Rahnavard:07}). A second, tighter lower bound is obtained by means of the Dawson-Sankoff inequality \cite{dawson67:inequality}, which generalizes the Bonferroni inequality.\footnote{Note that the Dawson-Sankoff inequality was used in \cite{Barak07} to lower bound the expected error probability of regular \ac{LDPC} code ensembles over the \ac{BEC}.} The bounds are shown to be remarkably tight at large overheads, and sufficiently tight at overheads approaching zero. Starting from the upper bound on the probability of decoding failure, an error exponent analysis of Raptor codes is presented, which allows characterizing the overhead regions for which an exponential decay (in the input block length) of the expected failure probability can be attained.
Examples of the application of the proposed bounds to the design of Raptor codes are finally provided.

The paper is organized as follows. In Section~\ref{sec:prelim} some preliminary definitions are given. Section~\ref{sec:results_joint} presents a  number of results on joint compositions.
Section~\ref{sec:constructions} addresses the different Raptor code constructions considered in this paper. Section~\ref{sec:bounds} presents several theorems with upper and lower bounds on the probability of decoding failure for the different Raptor code constructions. Proofs of the bounds are given in Section~\ref{sec:proofs}. Section \ref{sec:errexp} introduces the error exponent analysis. Numerical results comparing the bounds with Monte Carlo simulations are illustrated in Section~\ref{sec:numres}, while code design examples are discussed in Section \ref{sec:code_design}. Section~\ref{sec:Conclusions} presents the conclusions of our work.

\section{Preliminaries}\label{sec:prelim}

\subsection{Vector and Matrix Notation}

We use boldface letters to denote vectors and matrices. Vectors are conventionally assumed as row vectors with indices starting from~$0$; matrix row and column indices also start from~$0$. For any integer matrix $\mathbf{A}$ we denote by $|\mathbf{A}|$ the sum of all matrix elements. We use the same notation for integer vectors, i.e., $|\mathbf{a}|$ represents the sum of all elements of vector $\mathbf{a}$. We also denote by $\mathbf{1}(\mathbf{A})$ the matrix obtained from $\mathbf{A}$ by turning to $1$ all its nonzero elements. The transpose of any matrix $\mathbf{A}$ is denoted by $\transpose{\mathbf{A}}$.

We say that a zero-one square matrix $\mathbf{A}$ is a circulant permutation matrix when: (i) it is a permutation matrix; (ii) each row of $\mathbf{A}$ is obtained from the previous row by the right cyclic shift of one position. We say that a zero-one square matrix $\mathbf{A}$ is an \emph{incomplete} circulant permutation matrix when: (i) it is nonzero; (ii) it can be obtained from a circulant permutation matrix by turning to $0$ some $1$ elements.

For a nonnegative integer vector $\mathbf{a}=(a_0,a_1,\dots,a_{n-1})$ such that $|\mathbf{a}|=h$ we denote the multinomial coefficient ${\binom{h}{a_0, a_1, \dots,a_{n-1}}}$ by ${\binom{h} {\mathbf{a}}}$. With a slight abuse of notation, for an $m \times n$ nonnegative integer matrix $\mathbf{A}=[a_{s,t}]$ such that $|\mathbf{A}|=h$ we write ${\binom{h}{\mathbf{A}}}$ as a compact notation for 
${\binom{h}{a_{0,0}, \dots, a_{0,n-1}, \dots, a_{m-1,0}, \dots, a_{m-1,n-1}}}$.

\subsection{Bonferroni-Type Inequalities}

Let $A_{1}$, $\dots$, $A_{n}$ be events in a probability space and
\begin{align*}
S_k = \sum_{1 \leq i_1 < \dots < i_k \leq n} \Pr \{A_{i_1} \cap \dots \cap A_{i_k} \} .
\end{align*}
The general Bonferroni inequality states that, for any $1 \leq t \leq n$, we have \cite{Bonferroni36}
\begin{align}\label{eq:bonferroni_general}
(-1)^t\, \Pr \{ A_1 \cup \dots \cup A_{n}\} \geq (-1)^t \sum_{i=1}^t (-1)^{i-1} S_i .
\end{align}
Inequality~\eqref{eq:bonferroni_general} holds with equality for $t=n$ (inclusion-exclusion identity). Notable special cases are obtained for $t=1$ and $t=2$. Specifically, for $t=1$ it reduces to the union upper bound
\begin{align}\label{eq:boole}
\Pr \{ A_1 \cup \dots \cup A_{n} \} \leq S_1 = \sum_{i=1}^n \Pr \{A_i\}
\end{align}
while for $t=2$ it yields the degree-two Bonferroni lower bound
\begin{align}\label{eq:order2_bonferroni}
\Pr \{ A_1 \cup \dots \cup A_{n} \} &\geq S_1 - S_2 \notag \\ 
& = \sum_{i=1}^n \Pr \{A_i\} - \sum_{1 \leq i < j \leq n} \Pr \{ A_i \cap A_j \} .
\end{align}
A tighter version of \eqref{eq:order2_bonferroni} was developed in \cite{dawson67:inequality}, where it was shown that, for any $r \in \{1,\dots,n\}$,
\begin{align}\label{eq:pre_dawson_sankoff}
\Pr \{ A_1 \cup \dots \cup A_{n} \} \geq \frac{2}{r+1} S_1 - \frac{2}{r(r+1)} S_2 .
\end{align}
Moreover, maximization with respect to $r$ yields
\begin{align}\label{eq:dawson_sankoff}
\Pr \{ A_1 \cup \dots \cup A_{n} \} \geq \frac{\theta S_1^2}{(2-\theta)S_1 + 2 S_2} + \frac{(1-\theta)S_1^2}{(1-\theta)S_1+2S_2}
\end{align}
where $\theta = 2 S_2 / S_1 - \lfloor 2 S_2 / S_1 \rfloor$. Indeed, it was proved in \cite{kwerel75:most} that \eqref{eq:dawson_sankoff} is the sharpest possible lower bound for ${\Pr\{A_1 \cup \cdots \cup A_n\}}$ based on a linear combination of $S_1$ and $S_2$. As such, it is tighter than $S_1-S_2$. Hereafter, \eqref{eq:dawson_sankoff} will be referred to as Dawson-Sankoff lower bound.

\subsection{Weight and Composition Enumerators}

For any linear block code $\precodegeneric$ constructed over $\mathbb{F}_q$ and any codeword $\vecv \in \mathcal{C}$, we let $\hw(\vecv)$ be the Hamming weight (often referred to simply as the weight) of $\vecv$. Letting $h$ be the codeword length, we  denote the weight enumerator of $\mathcal{C}$
as $\weo = \{\weo_0, \weo_1 \hdots \weo_h\}$, where $\weo_i$ denotes the multiplicity of codewords of weight $i$.
Similarly, given an ensemble $\codensemble$ of linear block codes, all with the same block length $h$, along with a probability distribution on the codes in the ensemble, we denote the expected weight enumerator of a random code in $\codensemble$ as ${\weoensemble = \{\weoensemble_0, \weoensemble_1 \hdots \weoensemble_h\}}$, where $\weoensemble_l$ denotes the expected multiplicity of codewords of weight $l$.

Next, consider a linear block code $\precodegeneric \subset \mathbb{F}_q^h$, whose codeword symbols are partitioned into two different types, namely, type $A$ and type $B$. Let $h_A$ and $h_B$ be the number of codeword symbols of types $A$ and $B$, respectively, such that ${h_A+h_B=h}$. A generic codeword after reordering can be expressed as
${\vecv=(\vecv_A,\vecv_B)}$, where $\vecv_A$ and $\vecv_B$ denote the vectors of encoded symbols of type $A$ and type $B$ respectively. In this context the bivariate weight enumerator polynomial of the code is defined as
\begin{equation}\label{eq:weo_bivariate}
\weo(x,z) = \sum_{l=0}^{h_A} \sum_{t=0}^{h_B}  \weo_{l,t} \, x^l z^t
\end{equation}
where $\weo_{l,t}$ denotes the multiplicity of codewords with $\hw(\vecv_A)=l$  and $\hw(\vecv_B)=t$.
Similarly, given an ensemble $\codensemble$ of block codes  with block length $h$ and with two types of codeword symbols as defined above, along with a probability distribution on the codes in the ensemble, we define its expected bivariate weight enumerator polynomial as
\[
\weoensemble(x,z) = \sum_{l=0}^{h_A} \sum_{t=0}^{h_B}  \weoensemble_{l,t} \, x^l z^t
\]
where $\weoensemble_{l,t}$ denotes the expected multiplicity of codewords with $\hw(\vecv_A)=l$  and $\hw(\vecv_B)=t$.

Given a vector $\mathbf{\vecgen}=(\vecgen_0, \vecgen_1, \dots, \vecgen_{h-1}) \in \mathbb{F}_q^h$, we define its composition $\comp(\mathbf{\vecgen})$ as
\[
\comp(\mathbf{\vecgen}) = \left( \comp_0(\mathbf{\vecgen}), \comp_1(\mathbf{\vecgen}), \dots,  \comp_{q-1}(\mathbf{\vecgen})  \right)
\]
where
\[
\comp_i(\mathbf{\vecgen}) = \left| \left\{ \vecgen_j:\vecgen_j = \alpha^{i-1} \right\} \right|\]
for $j \in \{0, \dots, h-1 \}$, and $i \in \{1, 2, \dots, q-1\}$,
being $\alpha$ the residue class of the polynomial $x$, and
\[
\comp_0(\mathbf{\vecgen}) = \left| \left\{ \vecgen_j:\vecgen_j =0 \right\} \right| \, \,\text{for } j \in \{1, \dots, h \}.
\]
That is, $\comp_i(\mathbf{\vecgen})$, $i \in \{1, 2, \dots, q-1\}$, is the number of elements in $\mathbf{\vecgen}$ that take value $\alpha^{i-1}$ whereas  $\comp_0(\mathbf{\vecgen})$ is the number of null elements in $\mathbf{\vecgen}$.
Given a linear block code $\precodegeneric$, we define its composition enumerator, $\lenum$,
as the number of codewords $\vecv \in \precodegeneric$ with composition $\comp(\vecv)=\labv$. Similarly, for a code ensemble we define its expected composition enumerator $\lenumensemble$ as the expected multiplicity of codewords with composition~$\labv$.

Consider also a linear block code $\precodegeneric$ of length $h$, with two types of codeword symbols as defined above. We define the bivariate composition enumerator $\lenumbi$ of a code $\precodegeneric$
as the number of codewords  $\vecv= (\vecv_A, \vecv_B)$ in $\precodegeneric$ for which $\vecv_A$ has composition $\labv_A$ and $\vecv_B$ has composition $\labv_B$. This definition can be easily extended to code ensembles. In particular, we define the
expected bivariate composition enumerator $\lenumbiensemble$ of a random code in the ensemble as the expected multiplicity of codewords $\vecv=(\vecv_A, \vecv_B)$ for which $\vecv_A$ has composition $\labv_A$ and $\vecv_B$ has composition $\labv_B$.

Given the composition $\labv$ of a vector $\mathbf{\vecgen} \in \mathbb{F}_q^h$, ${\labv=\comp(\mathbf{\vecgen})}$, as defined above, we define $B(\labv \,)$ as an indicator function that takes value $1$ only if $\sum_{i=1}^{h} \vecgen_i =0$, i.e.,
\begin{equation}
B(\labv \,) = \begin{cases}
                  1, & \mbox{if } \sum_{i=1}^{q-1} \sum_{s=1}^{\lab_i} \alpha^{i-1}=0 \\
                  0, & \mbox{otherwise.}
                \end{cases}
\end{equation}

\subsection{Joint Weight and Joint Composition Enumerators}\label{sec:joint_weight_enum}

Given two vectors $\mathbf{r}_1 \in \bbF{q}^h$ and $\mathbf{r}_2 \in \bbF{q}^h$, we define the joint weight of $\mathbf{r}_1$ and $\mathbf{r}_2$, denoting it as $\btau = \jw(\mathbf{r}_1,\mathbf{r}_2)$, as the vector $(\tau_0,\tau_1,\tau_2,\tau_3)$ such that:
\begin{itemize}
\item There are $\tau_0$ positions in which both $\mathbf{r}_1$ and $\mathbf{r}_2$ are zero;
\item There are $\tau_1$ positions in which $\mathbf{r}_1$ is zero and $\mathbf{r}_2$ is nonzero;
\item There are $\tau_2$ positions in which $\mathbf{r}_1$ is nonzero and $\mathbf{r}_2$ is zero;
\item There are $\tau_3$ positions in which both $\mathbf{r}_1$ and $\mathbf{r}_2$ are nonzero.
\end{itemize}
The elements of $\btau = \jw(\mathbf{r}_1,\mathbf{r}_2)$ are nonnegative integers and $|\btau| = h$.

Given two vectors $\mathbf{r}_1\in \bbF{q}^h$ and $\mathbf{r}_2 \in \bbF{q}^h$, we define the joint composition of $\mathbf{r}_1$ and $\mathbf{r}_2$, denoting it as $\bkappa = \jc(\mathbf{r}_1,\mathbf{r}_2)$, as the $q \times q$ matrix $[\kappa_{s,t}]$, $(s,t) \in \{0,\dots,q-1\} \times \{0,\dots,q-1\}$, such that:
\begin{itemize}
\item There are $\kappa_{0,0}$ positions in which both $\mathbf{r}_1$ and $\mathbf{r}_2$ are zero;
\item There are $\kappa_{0,t}$ positions, $t \neq 0$, in which $\mathbf{r}_1$ is zero and $\mathbf{r}_2$ is equal to $\alpha^{t-1}$;
\item There are $\kappa_{s,0}$ positions, $s \neq 0$, in which $\mathbf{r}_1$ is equal to $\alpha^{s-1}$ and $\mathbf{r}_2$ is zero;
\item There are $\kappa_{s,t}$ positions, $s \neq 0$, $t \neq 0$, in which $\mathbf{r}_1$ is equal to $\alpha^{s-1}$ and $\mathbf{r}_2$ is equal to $\alpha^{t-1}$.
\end{itemize}
The elements of  $\jc(\mathbf{r}_1,\mathbf{r}_2)$ are nonnegative integers and ${|\bkappa| = h}$. We write
\begin{align}\label{eq:bkappa_structure}
\bkappa = \left[ \begin{array}{cc}
\kappa_{0,0} & \bkappa_1 \\
\bkappa_2 & \bkappa_3
\end{array} \right]
\end{align}
where $\bkappa_1$ is the $1 \times (q-1)$ matrix $[\kappa_{0,1}, \dots, \kappa_{0,q-1}]$, $\bkappa_2$ is the $(q-1) \times 1$ matrix $\transpose{[\kappa_{1,0}, \dots, \kappa_{q-1,0}]}$, and $\bkappa_3$ is the $(q-1) \times (q-1)$ matrix $[\kappa_{s,t}]$, $(s,t) \in \{1,\dots,q-1\} \times \{1,\dots,q-1\}$.

There is a simple relationship between the joint weight $\btau=\jw(\mathbf{r}_1,\mathbf{r}_2)$ of two vectors and their joint composition $\bkappa=\jc(\mathbf{r}_1,\mathbf{r}_2)$. In particular, we have $\tau_0 = \kappa_{0,0}$, $\tau_1 = |\bkappa_1|$, $\tau_2 = |\bkappa_2|$, and $\tau_3 = |\bkappa_3|$. We write $\btau = \jw(\bkappa)$ to indicate the joint weight $\btau$ associated with the joint composition $\bkappa$. There also is a simple relationship between the joint composition $\bkappa=\jc(\mathbf{r}_1,\mathbf{r}_2)$ of two vectors and the composition of each of them. Specifically, denoting the composition of $\mathbf{r}_1$, $\varsigma(\mathbf{r}_1)$, by $\bgamma_1(\bkappa)$ and the composition of $\mathbf{r}_2$, $\varsigma(\mathbf{r}_2)$, by $\bgamma_{2}(\bkappa)$, we have
\begin{align}\label{eq:gamma_one}
\bgamma_1(\bkappa) &= \Big(\sum_{t=0}^{q-1} \kappa_{0,t}, \dots, \sum_{t=0}^{q-1} \kappa_{q-1,t} \Big)   
\\
\bgamma_2(\bkappa)&= \Big(\sum_{s=0}^{q-1} \kappa_{s,0}, \dots, \sum_{s=0}^{q-1} \kappa_{s,q-1} \Big) . \label{eq:gamma_two}
\end{align}

Given two linear block codes $\precodegeneric_1 \subset \bbF{q}^h$ of dimension $k_1$ and $\precodegeneric_2 \subset \bbF{q}^h$ of dimension $k_2$, we define their joint weight enumerator, denoting it by $\JWEF_{\btau}$, as the number of codeword pairs $(\vecv,\bfz) \in \precodegeneric_1 \times \precodegeneric_2$ such that $\jw(\vecv,\bfz)=\btau$. We also define their joint composition enumerator, denoting it by $\JCEF_{\bkappa}$, as the number of codeword pairs $(\vecv,\bfz) \in \precodegeneric_1 \times \precodegeneric_2$, such that $\jc(\vecv,\bfz)=\bkappa$. If $\mathcal{C}_1 = \mathcal{C}_2 = \mathcal{C}$, then $\JWEF_{\btau}$ and $\JCEF_{\bkappa}$ are called the biweight and the bicomposition enumerator of $\mathcal{C}$, respectively. For an ensemble $\codensemble$ of linear block codes, all with the same block length, we denote by $\avgJWEF_{\btau}$ and $\avgJCEF_{\bkappa}$ the expected biweight and bicomposition enumerators, respectively, of a random code in~$\codensemble$.\footnote{The concept of joint weight and joint weight enumerator was introduced in \cite{MMS:1972}, where examples of biweight numerators for some classical codes were obtained.}
\begin{remark}\label{remark:binary_equiv}
For $q=2$, if $\btau = \jw(\bkappa)$, then $\btau = (\kappa_{0,0}, \kappa_{0,1}, \kappa_{1,0}, \kappa_{1,1})$. Thus, in the binary case there exists a bijection between joint weights and joint compositions so that the two concepts become equivalent and can be used interchangeably. With this bijection in mind we can also write $\JCEF_{\bkappa} = \JWEF_{\btau}$. This is not the case in the nonbinary case.
\end{remark}

\subsection{Weight Spectral Shape of Code Ensemble Sequences}\label{subsec:growthrate}
A code ensemble sequence $\left\{\codensemble_k\right\}$ is a sequence of code ensembles, where $\codensemble_k$ is an ensemble of dimension-$k$ codes with block length $h=k/R$ defined over $\GF{q}$, being $R$ a constant, i.e., not dependent on $k$. The weight spectral shape of the  
ensemble sequence $\left\{\codensemble_k\right\}$ is given by
\[
G(\omega)=\lim_{h\rightarrow\infty} \frac{1}{h} \log_2 \weoensemble^{(hR)}_{\lfloor \omega h \rfloor}
\]
where $\weoensemble^{(hR)}$ is the expected weight enumerator of the code ensemble $\codensemble_{hR}$. In the definition above, $\omega$ can be regarded as the normalized Hamming weight.

We recall next the definition of uniform convergence, which will become essential for the results derived in Section \ref{sec:errexp}. A sequence $f_h$ of real-valued functions on $D\subseteq \mathbb{R}$ converges uniformly to the function $f: D \mapsto \mathbb{R}$ on $D_0 \subseteq D$ if for any $\varepsilon>0$ there exists $h_0(\varepsilon)$ such that, for all $h\geq h_0(\varepsilon)$, $\left|f_h(x)-f(x)\right|<\varepsilon$ for all $x\in D_0$. We write $f_h \xrightarrow{\mathsf{u}} f $ to indicate that $f_h$ converges to $f$ uniformly.

\subsection{Further Useful Definitions and Results}

For a positive integer $n$ and a prime or prime power $q$, we denote by $\krawtchouk{i}{x}{n}{q}$ the Krawtchouk polynomial of degree $i$ with parameters $n$ and $q$, which is defined as \cite{MacWillimas77:Book}
\[
\krawtchouk{i}{x}{n}{q} = \sum_{j=0}^i (-1)^j \binom{x}{j} \binom{n-x}{i-j} (q-1)^{i-j}.
\]

\noindent Moreover, we recall Chu-Vandermonde identity, stating that
\[
\binom{m+n}{r} = \sum_{k=0}^{r} \binom{m}{k} \binom{n}{r-k}.
\]
\section{Results on Joint Weights and Joint Compositions}\label{sec:results_joint}

This section presents a number of results on joint compositions. These results will be useful to develop a lower bound on the error probability of a class of Raptor codes.
\begin{lemma}\label{lemma:proportional}
Let $\mathbf{r}_1 \in \bbF{q}^h \setminus \{\mathbf{0}\}$ and $\mathbf{r}_2 \in \bbF{q}^h \setminus \{\mathbf{0}\}$. We have
\begin{align*}
\jc(\mathbf{r}_1,\mathbf{r}_2) = \left[ \begin{array}{cc}
\kappa_{0,0} & \mathbf{0} \\
\mathbf{0} & \bkappa_3
\end{array} \right]
\end{align*}
in which $\mathbf{1}(\bkappa_3)$ is a (possibly incomplete) circular permutation matrix, if and only if $\mathbf{r}_1 = \beta \mathbf{r}_2$ for some $\beta \in \GF{q} \setminus \{0\}$.
\end{lemma}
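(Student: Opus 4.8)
The plan is to prove both implications by translating the combinatorial structure of the joint composition matrix into field arithmetic, exploiting that $\alpha$ is a primitive element of $\GF{q}$ (so that $\alpha^0,\alpha^1,\dots,\alpha^{q-2}$ exhaust the nonzero elements and the row/column index $s$ of $\bkappa_3$ corresponds to the field value $\alpha^{s-1}$). A preliminary observation, used in both directions, is that by the structure \eqref{eq:bkappa_structure} the off-diagonal blocks $\bkappa_1$ and $\bkappa_2$ count exactly the positions where precisely one of $\mathbf{r}_1,\mathbf{r}_2$ vanishes; hence the condition $\bkappa_1=\bkappa_2=\mathbf{0}$ is equivalent to $\mathbf{r}_1$ and $\mathbf{r}_2$ having identical support, which is a necessary condition for $\mathbf{r}_1=\beta\mathbf{r}_2$ with $\beta\neq 0$.

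For the ``if'' direction I would assume $\mathbf{r}_1=\beta\mathbf{r}_2$ and write $\beta=\alpha^b$. Positions where $\mathbf{r}_2=0$ force $\mathbf{r}_1=0$, so $\kappa_{0,0}$ counts the common zeros and $\bkappa_1=\bkappa_2=\mathbf{0}$. At a position where $\mathbf{r}_2=\alpha^{t-1}$ one gets $\mathbf{r}_1=\alpha^{b+t-1}=\alpha^{s-1}$ with $s\equiv t+b \pmod{q-1}$; thus for each column $t$ of $\bkappa_3$ exactly one row is occupied, and all occupied entries lie on the single cyclic diagonal $s-t\equiv b \pmod{q-1}$. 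Since $\mathbf{r}_2\neq\mathbf{0}$, at least one column is occupied, so $\bkappa_3\neq\mathbf{0}$ and $\mathbf{1}(\bkappa_3)$ is precisely this diagonal restricted to the values actually attained by $\mathbf{r}_2$, i.e.\ a (possibly incomplete) circulant permutation matrix.

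For the ``only if'' direction I would reverse this reasoning. Identical support again places $\kappa_{0,0}$ as the count of common zeros and concentrates the remaining information in $\bkappa_3$. The assumption that $\mathbf{1}(\bkappa_3)$ is an incomplete circulant permutation matrix yields two facts: the at-most-one-per-column property means that wherever $\mathbf{r}_2$ takes a fixed value $\alpha^{t-1}$, the symbol $\mathbf{r}_1$ takes one and the same value $\alpha^{s-1}$ (so $\mathbf{r}_1$ is a well-defined position-wise function of $\mathbf{r}_2$ on the common support), and the circulant, constant-diagonal property means all occupied pairs $(s,t)$ satisfy $s-t\equiv c \pmod{q-1}$ for a single shift $c$. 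Setting $\beta=\alpha^c$, the congruence reads $\alpha^{s-1}=\alpha^{c}\alpha^{t-1}$, i.e.\ $\mathbf{r}_1=\beta\mathbf{r}_2$ at every nonzero position; on the common zero set the identity holds trivially, so $\mathbf{r}_1=\beta\mathbf{r}_2$ throughout.

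The main obstacle I anticipate lies in the careful handling of the ``incomplete'' aspect: I must argue that even when some rows and columns of $\bkappa_3$ are entirely zero (because $\mathbf{r}_1$ or $\mathbf{r}_2$ fails to attain certain nonzero values), the surviving entries still sit on a common cyclic diagonal with one shift $c$. This is exactly what the definition of an incomplete circulant permutation matrix guarantees, since such a matrix is obtained from a genuine circulant permutation matrix by zeroing $1$ entries, and the latter has all its ones on a single cyclic diagonal. The consistency of this single $c$ across all occupied columns is precisely what upgrades the per-position functional dependence into the global scalar proportionality $\mathbf{r}_1=\beta\mathbf{r}_2$.
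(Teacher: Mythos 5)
Your proposal is correct and follows essentially the same route as the paper's proof: both directions rest on identifying the row/column indices of $\bkappa_3$ with powers of $\alpha$, observing that equal supports force $\bkappa_1=\bkappa_2=\mathbf{0}$, and translating the proportionality factor $\beta=\alpha^b$ into the statement that all nonzero entries of $\bkappa_3$ lie on a single cyclic diagonal. Your treatment is somewhat more explicit than the paper's about the incomplete case and about extracting a single consistent shift $c$ in the converse, but the underlying argument is the same.
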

\begin{IEEEproof}
Let $\mathbf{r}_1 = \beta \mathbf{r}_2$ for some $\beta \in \GF{q} \setminus \{0\}$ ($\mathbf{r}_1$ and $\mathbf{r}_2$ are linearly dependent). With reference to \eqref{eq:bkappa_structure}, since $\mathbf{r}_1$ and $\mathbf{r}_2$ have the same support, both $\bkappa_1$ and $\bkappa_2$ must be null. Let the proportionality factor $\beta$ be equal to $\alpha^{s}$ for some $s \in \{0,\dots,q-2\}$. Every element of $\mathbf{r}_1$ equal to $\alpha^i$ corresponds to an element $\alpha^{(i+s)\,\mathrm{mod}\,(q-1)}$ in $\mathbf{r}_2$, making $\kappa_{1+i,1+(i+s)\,\mathrm{mod}\,(q-1)}>0$; any other element of $\bkappa$ in row of index $1+i$ must be zero. This suffices to conclude that $\mathbf{1}(\bkappa_3)$ is a circulant permutation matrix if all elements of $\bbF{q} \setminus \{0\}$ appear in $\mathbf{r}_1$. It is an incomplete circulant permutation matrix otherwise. Conversely, let $\bkappa_1=\bkappa_2=\mathbf{0}$ and $\mathbf{1}(\bkappa_3)$ be a (possibly incomplete) circulant permutation matrix. The vectors $\mathbf{r}_1$ and $\mathbf{r}_2$ must have the same support. Moreover, there must exist $s\in\{0,\dots,q-2\}$ such that every nonzero element of $\bkappa$, apart from $\kappa_{0,0}$, is in the form $\kappa_{1+i,1+(i+s)\,\mathrm{mod}\,(q-1)}$ for some $i \in \{1,\dots,q-1\}$. But then $\mathbf{r}_1 = \alpha^s \mathbf{r}_2$. 
\end{IEEEproof}

Let $\precodegeneric \subset \bbF{q}^h$ be a linear block code of dimension $k$. We partition the codebook of $\precodegeneric$ into $M_{q,k}=(q^k-1)/(q-1)+1$ parts $\mathcal{P}_a$, $a = \{0,1,\dots,M_{q,k}-1\}$, as follows. Part $\mathcal{P}_0$ only contains the null codeword, while any other part contains $q-1$ codewords having the same support and being linearly dependent. Moreover, we index the codewords in $\precodegeneric$ from $0$ to $q^k-1$, as follows. The index $0$ is reserved to the null codeword; the indices from $(a-1)(q-1)+1$ to $a(q-1)$ are reserved to the codewords in $\mathcal{P}_a$, $a \in \{1,\dots,M_{q,k}-1\}$. For every $a \in \{1,\dots,M_{q,k}-1\}$ we take one representative in $\mathcal{P}_a$, denoting it by $\tilde{\vecv}_a$. In particular, we choose as $\tilde{\vecv}_a$ the codeword in $\mathcal{P}_a$ having the smallest index. Letting $\{\mathbf{0}, \vecv_1, \vecv_2, \dots, \vecv_{q^k-1}\}$ be the codebook of $\precodegeneric$, with the above-mentioned indexing convention we have $\tilde{\vecv}_a = \vecv_{(a-1)(q-1)+1}$.

We define the set $\cD_{q,k}$ as 
\begin{align*}
\cD_{q,k} &= \{ (s,t) \in \{1,q^k-1\} \\
& \phantom{=}\times \{1,q^k-1\} : \lfloor \frac{s-1}{q-1} \rfloor \neq \lfloor \frac{t-1}{q-1} \rfloor \} .
\end{align*}
Moreover, we define the set $\tilde{\cD}_{q,k} \subseteq \cD_{q,k}$ as
\begin{align*}
\tilde{\cD}_{q,k} = \{&(s,t) \in \cD_{q,k} : s=(a-1)(q-1)+1; \\
&  t=(b-1)(q-1)+1;   a,b \in \{1,\dots,M_{q,k}-1\}; \\
&a \neq b \} \, .
\end{align*}
The set $\cD_{q,k}$ is the set of codeword index pairs $(s,t)$ such that: (i) $\vecv_s$ and $\vecv_t$ are both nonzero; (ii) $\vecv_s$ and $\vecv_t$ are not linearly dependent. Its cardinality is $(q-1)^2 (M_{q,k}-2) (M_{q,k}-1)$. The set $\tilde{\cD}_{q,k}$ is a subset of $\cD_{q,k}$. It includes all codeword index pairs $(s,t)$ such that $\vecv_s$ is the representative of part $\mathcal{P}_{(s-1)/(q-1)+1}$, $\vecv_t$ is the representative of part $\mathcal{P}_{(t-1)/(q-1)+1}$, and $\vecv_s \neq \vecv_t$. Its cardinality is $(M_{q,k}-2) (M_{q,k}-1)$.
\begin{figure}[!t]
\begin{center}
\includegraphics[width=0.8\columnwidth]{./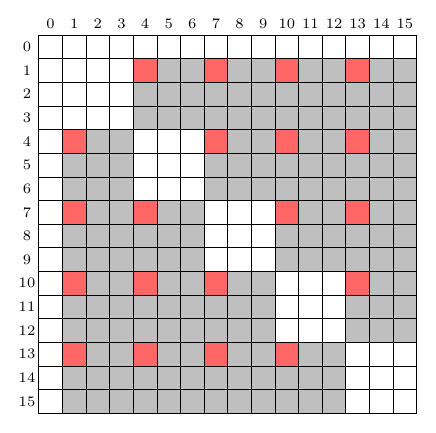}
\end{center}
\caption{Graphical interpretation of the set $\cD_{q,k}$ for $q=4$ and $k=2$.}\label{fig:chessboardq}
\end{figure}
\begin{example}
Let $q=4$ and $k=2$. A graphical interpretation of the set $\cD_{4,2}$ is provided in Fig.~\ref{fig:chessboardq}. The codebook is partitioned into the $M_{4,2}=6$ parts $\mathcal{P}_0=\{\mathbf{0}\}$, $\mathcal{P}_1=\{\vecv_1,\vecv_2,\vecv_3\}$, $\mathcal{P}_2=\{\vecv_4,\vecv_5,\vecv_6\}$, $\mathcal{P}_3=\{\vecv_7,\vecv_8,\vecv_9\}$, $\mathcal{P}_4=\{\vecv_{10},\vecv_{11},\vecv_{12}\}$, $\mathcal{P}_5=\{\vecv_{13},\vecv_{14},\vecv_{15}\}$, where all codewords in the same part are linearly dependent. The set $\cD_{4,2}$ is represented by the union of all grey and red cells of the ``chessboard'', while the set $\tilde{\cD}_{4,2}$ is represented only by the red cells. White cells, the ones not belonging to $\cD_{4,2}$, correspond either to pairs of codewords of which at least one is null or to pairs of linearly dependent codewords.
\end{example}
We define $\cK_{q,h}$ as the set of all joint compositions $\bkappa$ such that $|\bkappa|=h$ and such that any of the following two conditions holds: (1) at least two matrices out of $\bkappa_1$, $\bkappa_2$, $\bkappa_3$ are nonzero; (2) $\bkappa_1$ and $\bkappa_2$ are null matrices, $\bkappa_3$ is nonzero, $\mathbf{1}(\bkappa_3)$ is neither a complete nor an incomplete circulant permutation matrix.
\begin{lemma}\label{lemma:KD}
For any linear block code $\precodegeneric \subset \bbF{q}^h$ of dimension $k$ and any pair $(\vecv_s,\vecv_t) \in \precodegeneric \times \precodegeneric$, we have $\jc(\vecv_s,\vecv_t)\in \cK_{q,h}$ if and only if $(s,t) \in \cD_{q,k}$.
\end{lemma}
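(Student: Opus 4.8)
The plan is to prove the biconditional by translating the zero/nonzero pattern of the blocks $\bkappa_1,\bkappa_2,\bkappa_3$ of $\bkappa=\jc(\vecv_s,\vecv_t)$ into statements about the supports of $\vecv_s$ and $\vecv_t$, and then to invoke Lemma~\ref{lemma:proportional} in the one case that carries genuine content. First I would record the elementary dictionary between supports and blocks: writing $S_s=\mathrm{supp}(\vecv_s)$ and $S_t=\mathrm{supp}(\vecv_t)$, one has $\bkappa_1\neq\mathbf{0}$ iff $S_t\setminus S_s\neq\emptyset$, $\bkappa_2\neq\mathbf{0}$ iff $S_s\setminus S_t\neq\emptyset$, and $\bkappa_3\neq\mathbf{0}$ iff $S_s\cap S_t\neq\emptyset$. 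This is immediate from the definition of the four position-types counted by the blocks of $\bkappa$ in \eqref{eq:bkappa_structure}. Since the excerpt already identifies $\cD_{q,k}$ with the set of index pairs for which $\vecv_s,\vecv_t$ are both nonzero and linearly independent, it suffices to show that $\jc(\vecv_s,\vecv_t)\in\cK_{q,h}$ if and only if $\vecv_s$ and $\vecv_t$ are both nonzero and not proportional.

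For the ``$\Rightarrow$'' direction I would argue by contrapositive, splitting the ways a pair can fail to lie in $\cD_{q,k}$. If at least one codeword is null, say $\vecv_s=\mathbf{0}$, then $S_s=\emptyset$ forces $\bkappa_2=\bkappa_3=\mathbf{0}$, so at most $\bkappa_1$ is nonzero; condition~(1) of $\cK_{q,h}$ (two nonzero blocks) fails and condition~(2) fails because $\bkappa_3=\mathbf{0}$, whence $\bkappa\notin\cK_{q,h}$. If instead both codewords are nonzero but proportional, Lemma~\ref{lemma:proportional} yields $\bkappa_1=\bkappa_2=\mathbf{0}$ with $\mathbf{1}(\bkappa_3)$ a (possibly incomplete) circulant permutation matrix; again only $\bkappa_3$ is nonzero, so condition~(1) fails, and condition~(2) fails precisely because $\mathbf{1}(\bkappa_3)$ is circulant. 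Hence $\bkappa\notin\cK_{q,h}$ in both subcases.

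For the ``$\Leftarrow$'' direction I would assume $\vecv_s,\vecv_t$ both nonzero and linearly independent, and establish $\bkappa\in\cK_{q,h}$ by a trichotomy on how the supports compare. If $S_s\cap S_t=\emptyset$, then both supports being nonempty makes $\bkappa_1$ and $\bkappa_2$ nonzero, so condition~(1) holds. If the supports overlap but differ, then $\bkappa_3\neq\mathbf{0}$ together with at least one of $\bkappa_1,\bkappa_2$ nonzero (since $S_s\neq S_t$) again gives two nonzero blocks, i.e.\ condition~(1). The remaining case $S_s=S_t$ gives $\bkappa_1=\bkappa_2=\mathbf{0}$ and $\bkappa_3\neq\mathbf{0}$; here linear independence means $\vecv_s$ and $\vecv_t$ are not proportional, so Lemma~\ref{lemma:proportional} forbids $\mathbf{1}(\bkappa_3)$ from being a complete or incomplete circulant permutation matrix, which is exactly condition~(2). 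This exhausts all cases.

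The genuinely nontrivial step is the equal-support case: there the distinction between linear dependence and independence cannot be read off from which blocks vanish, and one must use the circulant-permutation characterization of Lemma~\ref{lemma:proportional}; all the other branches are mere bookkeeping on supports. I would therefore take care to apply the two directions of Lemma~\ref{lemma:proportional} in their correct roles—the proportionality-implies-structure direction to dispose of proportional pairs in ``$\Rightarrow$'', and the structure-implies-proportionality direction (read contrapositively) to settle the equal-support subcase in ``$\Leftarrow$''.
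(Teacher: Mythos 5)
Your proof is correct and follows essentially the same route as the paper's: both arguments reduce the claim to a case analysis on the supports of $\vecv_s$ and $\vecv_t$ and invoke Lemma~\ref{lemma:proportional} to handle the equal-support case, where the circulant-permutation structure of $\mathbf{1}(\bkappa_3)$ is what distinguishes proportional from non-proportional pairs. The only difference is organizational (you prove the forward implication by contrapositive and split the reverse direction into three support cases instead of two), which does not change the substance.
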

\begin{IEEEproof}
Let $\kappa(\vecv_s,\vecv_t) \in \cK_{q,h}$. If at least two matrices out of $\bkappa_1$, $\bkappa_2$, and $\bkappa_3$ are nonzero, then $\vecv_s$ and $\vecv_t$ are both nonzero and have different supports (so they cannot be linearly dependent). Thus we must have $(s,t) \in \cD_{q,k}$. If $\bkappa_1=\bkappa_2=\mathbf{0}$, $\bkappa_3\neq\mathbf{0}$, and $\mathbf{1}(\bkappa_3)$ is neither a circulant permutation matrix nor an incomplete one, then $\vecv_s$ and $\vecv_t$ have the same support but are not linearly dependent (Lemma~\ref{lemma:proportional}). Thus we must have $(s,t) \in \cD_{q,k}$ again. Conversely, let $(s,t) \in \cD_{q,k}$, meaning that $\vecv_s$ and $\vecv_t$ are both nonzero and they are not linearly dependent. If $\vecv_s$ and $\vecv_t$ have different supports then at least two matrices out of $\bkappa_1$, $\bkappa_2$, and $\bkappa_3$ must nonzero, so $\kappa(\vecv_s,\vecv_t) \in \cK_{q,h}$. If $\vecv_s$ and $\vecv_t$ have the same support, since they are not linearly dependent, by Lemma~\ref{lemma:proportional} $\bkappa_3$ can be neither a circulant permutation matrix, nor an incomplete one. Hence $\kappa(\vecv_s,\vecv_t) \in \cK_{q,h}$ again.
\end{IEEEproof}

\subsection{Binary codes}

In Remark~\ref{remark:binary_equiv} we pointed out that over $\bbF{2}$ the concepts of joint composition and joint weight become equivalent. Thus, in the binary case the quantities and results so far introduced in this section can be reformulated in terms of joint weight. Note at first that when $q=2$ the two sets $\cD_{2,k}$ and $\tilde{\cD}_{2,k}$ coincide and that $\cD_{2,k}$ can be simply defined as
\begin{align*}
\cD_{2,k} = \{ (s,t) \in \{1,2^k-1\} \times \{1,2^k-1\} : s \neq t  \} .
\end{align*}
This is the set of all codeword index pairs $(s,t)$ such that $\vecv_s \neq \mathbf{0}$, $\vecv_t \neq \mathbf{0}$, and $\vecv_s \neq \vecv_t$.

For $q=2$, $\cK_{2,h}$ may be simply defined as the set of all joint compositions $\bkappa=[\kappa_{s,t}]$, $s,t \in \{0,1\}$, such that $|\bkappa|=h$ and such that at least two parameters out of $\kappa_{0,1}$, $\kappa_{1,0}$, $\kappa_{1,1}$ are positive. Owing to the above-recalled equivalence between joint weights and joint compositions, we introduce the set $\cT_{2,h}$ as the equivalent of $\cK_{2,h}$ for joint weights. We define $\cT_{2,h}$ as the set of all joint weights $\btau=(\tau_0, \tau_1, \tau_2, \tau_3)$ such that $|\btau|=h$ and such that at least two parameters out of $\tau_1$, $\tau_2$, $\tau_3$ are positive.
The following result is an immediate corollary of Lemma~\ref{lemma:KD} for~$q=2$.
\begin{lemma}\label{lemma:no_tilde}
For any binary linear block code $\precodegeneric \subset \bbF{2}^h$ of dimension $k$ and any pair $(\vecv_s,\vecv_t) \in \precodegeneric \times \precodegeneric$, we have $\jw(\vecv_s,\vecv_t) \in \cT_{2,h}$ if and only if $(s,t) \in \cD_{2,k}$.
\end{lemma}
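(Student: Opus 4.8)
The plan is to obtain Lemma~\ref{lemma:no_tilde} as a direct specialization of Lemma~\ref{lemma:KD} to the case $q=2$, using the equivalence between joint compositions and joint weights established in Remark~\ref{remark:binary_equiv}. Since the statement is explicitly flagged as ``an immediate corollary of Lemma~\ref{lemma:KD} for $q=2$,'' the work is essentially to check that the three objects appearing in Lemma~\ref{lemma:no_tilde} -- namely $\jw(\vecv_s,\vecv_t)$, $\cT_{2,h}$, and $\cD_{2,k}$ -- correspond, under the $q=2$ specialization, to the three objects $\jc(\vecv_s,\vecv_t)$, $\cK_{2,h}$, and $\cD_{q,k}$ appearing in Lemma~\ref{lemma:KD}.

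First I would invoke Remark~\ref{remark:binary_equiv}: for $q=2$ the map $\btau=\jw(\bkappa)$ with $\btau=(\kappa_{0,0},\kappa_{0,1},\kappa_{1,0},\kappa_{1,1})$ is a bijection between joint compositions and joint weights, so $\jc(\vecv_s,\vecv_t)$ and $\jw(\vecv_s,\vecv_t)$ carry identical information and may be used interchangeably. Next I would match the index sets: for $q=2$ one has $M_{2,k}=2^k-1+1=2^k$, each part $\mathcal{P}_a$ with $a\geq 1$ contains the single nonzero codeword $\vecv_a$ (since $q-1=1$), and two distinct nonzero codewords are never linearly dependent over $\bbF{2}$. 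Hence the condition $\lfloor (s-1)/(q-1)\rfloor \neq \lfloor (t-1)/(q-1)\rfloor$ defining $\cD_{q,k}$ collapses to $s\neq t$, recovering precisely the simplified $\cD_{2,k}=\{(s,t)\in\{1,2^k-1\}\times\{1,2^k-1\}:s\neq t\}$ stated in the excerpt.

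It then remains to verify that $\cT_{2,h}$ is the image of $\cK_{2,h}$ under the bijection $\jw(\cdot)$. Under this map the submatrices $\bkappa_1,\bkappa_2,\bkappa_3$ reduce to the scalars $\kappa_{0,1},\kappa_{1,0},\kappa_{1,1}$, which correspond to $\tau_1,\tau_2,\tau_3$ respectively. Condition (1) defining $\cK_{q,h}$ (``at least two of $\bkappa_1,\bkappa_2,\bkappa_3$ nonzero'') becomes ``at least two of $\tau_1,\tau_2,\tau_3$ positive,'' which is exactly the defining condition of $\cT_{2,h}$. Condition (2) becomes vacuous for $q=2$: when $\bkappa_1=\bkappa_2=\mathbf 0$ and $\bkappa_3$ is the nonzero $1\times 1$ matrix $[\kappa_{1,1}]$, the matrix $\mathbf 1(\bkappa_3)=[1]$ is always a (complete) circulant permutation matrix, so no joint composition is admitted into $\cK_{2,h}$ by condition (2). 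Therefore $\cK_{2,h}$ and $\cT_{2,h}$ correspond under the bijection.

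Having established these three correspondences, the statement follows: applying Lemma~\ref{lemma:KD} with $q=2$ gives $\jc(\vecv_s,\vecv_t)\in\cK_{2,h}$ iff $(s,t)\in\cD_{2,k}$, and translating the left-hand side through the bijection yields $\jw(\vecv_s,\vecv_t)\in\cT_{2,h}$ iff $(s,t)\in\cD_{2,k}$, as claimed. I do not anticipate a genuine obstacle here, since everything is a specialization; the only point requiring a small amount of care is the observation that condition (2) of $\cK_{q,h}$ is vacuous at $q=2$ (every $1\times 1$ nonzero zero-one matrix is trivially a circulant permutation matrix), which is what guarantees that $\cK_{2,h}$ simplifies exactly to the ``at least two of $\tau_1,\tau_2,\tau_3$ positive'' description and nothing is inadvertently dropped or added.
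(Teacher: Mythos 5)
Your proposal is correct and follows exactly the paper's route: the paper itself states the result as an immediate corollary of Lemma~\ref{lemma:KD} for $q=2$, and your careful verification of the three correspondences (the joint-weight/joint-composition bijection, the collapse of $\cD_{q,k}$ to $s\neq t$, and the vacuity of condition (2) of $\cK_{q,h}$ at $q=2$) is precisely the specialization the authors leave implicit.
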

\section{Raptor Codes}\label{sec:constructions}

\subsection{Encoding and Decoding}

We consider four different Raptor code constructions, all of them  over $\GF{q}$, with $q \geq 2$, being $q$ a prime or prime power.
Fig.~\ref{fig:Raptor_enc} shows a block diagram of Raptor encoding. In particular we consider an outer linear block code $\precodegeneric$  whose length and dimension are denoted by $h$ and $k$, respectively.
We denote the $k$ input (or source) symbols of the Raptor code as ${\vecu=(\Raptorinput_0, \Raptorinput_1, \ldots, \Raptorinput_{k-1})}$.
Out of the $k$ input symbols, the outer code generates a vector of $h$ intermediate symbols ${\vecv=(\Rintermsymbol_0, \Rintermsymbol_1, \ldots, \Rintermsymbol_{h-1})}$. The rate of the outer code is hence $R=k/h$. Denoting by $\Gp$ the generator matrix of the outer code, of dimension $(k \times h)$, the intermediate symbol vector can be expressed as
\[
\vecv = \vecu \Gp.
\]
The intermediate symbols serve as input to an LT encoder, which generates the output symbols ${\mathbf{\Rosymb}=(\Rosymb_0, \Rosymb_1, \ldots, \Rosymb_{n-1})}$, where $n$ can grow unbounded.
For any $n$, we have 
\begin{equation}\label{eq:raptor_enc}
\mathbf{\Rosymb} = \vecv \, \GLT = \vecu \Gp \GLT
\end{equation}
where $\GLT$ is an $(h \times n)$ matrix. The different constructions addressed in this paper differ in how matrix $\GLT$ is built, as we will explain later in this section.

The output symbols  are transmitted over a \ac{$q$-EC}. At its output each transmitted symbol is either correctly received or erased.\footnote{We remark that, due to the fact that LT output symbols are generated independently of each other, the results developed in this paper remain valid regardless the statistic of the erasures introduced by the channel.}
We denote by $m$ the number of output symbols collected by the receiver, and we express it as $m=k+\absoverhead$, where $\absoverhead$ is the absolute receiver overhead. Let us denote by ${\mathbf{\Rrosymb}=(\Rrosymb_0, \Rrosymb_1, \ldots, \Rrosymb_{m-1})}$ the vector of $m$ received output symbols. Denoting by $\mathcal{I} = \{i_0, i_1, \hdots, i_{m-1} \}$ the set of indices corresponding to the $m$ non-erased symbols, we have $\Rrosymb_j = \Rosymb_{i_j}$. An \ac{ML} decoder proceeds by solving the linear system of equations
\begin{align}
\mathbf{\Rrosymb} = \vecu \Grx
\label{eq:sys_eq}
\end{align}
where
\begin{align}
\Grx = \Gp \GrxLT
\end{align}
and where $\GrxLT$ is the submatrix of $\GLT$ formed by the $m$ columns with indices in $\mathcal{I}$.

\begin{figure}[t]
    \centering
    \includegraphics[width=\columnwidth]{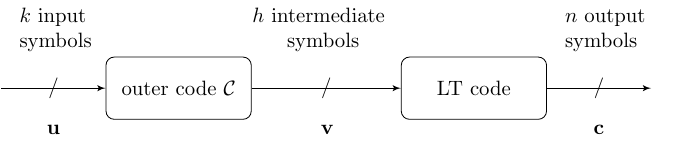}
    \caption{ Block diagram of Raptor encoding.}\label{fig:Raptor_enc}
\end{figure}

\subsection{Raptor Code Constructions} 

The first construction considered in this paper is referred to as \emph{Raptor code over $\GF{q}$}. In this construction each column of $\GLT$ is generated by first randomly drawing an output  degree $d$, according to a probability distribution ${\Omega= (\Omega_1, \Omega_2, \ldots, \Omega_{\dmax})}$, and then by drawing $d$ different indices uniformly at random between $1$ and $h$.
The distribution $\Omega$ is usually referred to as output degree distribution and its generating function is
\[
\Omega(x)=\sum_{d=1}^{d_{\max}} \Omega_d x^d.
\]
Finally, the elements of the column in the row positions corresponding to these indices are drawn independently and uniformly at random from $\mathbb {F}_{q} \backslash \{0\}$, while all other elements of the column are set to zero.

The second considered construction  is referred to as \emph{multi-edge type Raptor code}. This construction is characterized by having two different types of intermediate symbols, namely, type $A$ and type $B$. Thus, the vector of intermediate symbols after reordering can be expressed as $\vecv= (\vecv_A, \vecv_B)$, where $\vecv_A$ and $\vecv_B$ denote the vectors of intermediate symbols of types $A$ and $B$ respectively. Furthermore, we denote the number of intermediate symbols of type $A$ and $B$ as $h_A$ and $h_B$ respectively. We have  $h_A + h_B = h$.
This Raptor code construction is characterized by a relationship between output symbols and intermediate symbols in the form
\begin{equation} \label{eq:met_enc}
\mathbf{\Rosymb} = \vecv \, \GLT = (\vecv_A, \vecv_B) \GLT = (\vecv_A, \vecv_B)
 \left[ \begin{array}{c}
	\GLTA \\ \hline
	\GLTB
\end{array}
\right].
\end{equation}
Under the assumption that $n$ output symbols are generated, $\GLTA$ and  $\GLTB$ have sizes $(h_A \times n)$ and  $(h_B \times n)$ respectively. Each column of $\GLT$ is generated by first drawing two output degrees $j$ and $s$ according to a joint probability distribution $\Omega_{j,s}$ whose bivariate generating function is\footnote{This definition implies  $\Omega_{0,1}=\Omega_{1,0}=0$ (besides $\Omega_{0,0}=0$), which is in line with the distribution used for the RaptorQ code \cite{lubyraptorq}. This assumption is practically motivated but is not strictly necessary.}
\[
\Omega (x,z) = \sum_{j=1}^{h_A} \sum_{s=1}^{h_B} \Omega_{j,s} \, x^j z^s.
\]
For each column, $j$ different indices are drawn uniformly at random in  $\{1, 2, \dots ,h_A\}$ and the elements of the column in $\GLTA$ at the rows corresponding to these indices are drawn independently and uniformly from $\GF{q} \backslash \{0\}$, while all other elements of the column of $\GLTA$ are set to zero. In a similar way, $s$ different indices are picked uniformly at random in $\{1, 2, \dots,  h_B\}$ and the elements of the column in $\GLTB$ at the rows corresponding to these indices are drawn independently and uniformly from $\GF{q} \backslash \{0\}$, while all other elements of the column of $\GLTB$ are set to zero.

The third construction considered is referred to as  \emph{Raptor code over $\GF{q}$ with a $0/1$ LT code}. This construction is relevant to $q>2$, since otherwise it collapses to the first construction. It is similar to the first construction (Raptor code over $\GF{q}$), but all non-zero coefficients of $\GLT$ are equal to $1 \in \GF{q}$. Thus, each column of $\GLT$ is generated by first drawing an output  degree $d$ according to the degree distribution ${\Omega= (\Omega_1, \Omega_2, \ldots, \Omega_{\dmax})}$, and then by picking $d$ different indices uniformly at random in $\{1, 2, \dots,  h\}$. Finally, the elements of the column with rows corresponding to these indices are set to $1$, while all other elements of the column are set to zero.  The relationship between input and output symbols is still given by \eqref{eq:raptor_enc}, where vectors $\mathbf{\Rosymb}$, $\vecv$ and $\vecu$ have elements in $\GF{q}$, matrix $\Gp$ has elements in $\GF{q}$ as well, and the elements of $\GLT$ belong to $\{0,1\} \subset \GF{q}$.
The advantage of this construction is that encoding and decoding complexities are significantly reduced when using a standard computing platform, particularly when $q$ is  a power of $2$.

Finally, the fourth construction considered is referred to as \emph{multi-edge type Raptor code over $\GF{q}$ with a $0/1$ LT code}. As its name indicates this construction is a combination of the second and third constructions described before. In particular, this construction is the same as the second construction, except for the fact that the non-zero elements in $\GLT$, and therefore in $\GLTA$ and $\GLTB$, take always value $1$.

This last construction closely resembles the RaptorQ code \cite{lubyraptorq}, representing the state of art fountain code at the time of writing. 
The RaptorQ code is built over $\GF{256}$. Its outer code is itself obtained as the serial concatenation of two block codes, the first code being a quasi-cyclic nonbinary \ac{LDPC} code and the second code being a nonbinary code defined by a dense parity-check matrix. In particular, the quasi-cyclic \ac{LDPC} code has all its nonzero elements in the parity-check matrix equal to $1 \in \GF{256}$, whereas the second code resembles a random code over $\GF{256}$. The intermediate symbols belong to two different classes, which are called \emph{LT symbols} and \emph{permanently inactive symbols}. The LT code is a $0/1$ LT code characterized by the bivariate degree distribution
\[
\Omega (\x,\z) = \Omega (\x) \left( \frac{\z^2 + \z^3}{2} \right)
\]
where $\x$ and $\z$ are, respectively, the dummy variables associated with LT and permanently inactive symbols, and $\Omega (\x)$ is a degree distribution with maximum output degree $30$. Finally, we remark that the RaptorQ construction can be made systematic.\footnote{A Raptor code is made systematic by adding a further precoding stage and specifying the seed of the pseudorandom generator which is used to generate the LT output symbols, see \cite{shokrollahi2003systematic,Lazaro:phd} for more details.}
Thus, the RaptorQ code in its non-systematic form\footnote{The RaptorQ code is in non-systematic form when random Encoding Symbol Identifiers (ESI) are used \cite{luby2011rfc}. } is an example of the fourth construction considered in this paper (multi-edge type Raptor code over $\GF{q}$ with a $0/1$ LT code). For more details about the RaptorQ construction as well as the design choices involved we refer the reader to \cite{shokrollahi2011raptor}.

\section{Bounds on the Error Probability of Raptor Codes} \label{sec:bounds}

This section contains the main contribution of this paper, a series of bounds on the performance of the different Raptor code constructions presented in Section~\ref{sec:constructions}. Proofs of these bounds are deferred to Section~\ref{sec:proofs}.
The first theorem establishes a bound on the probability of decoding failure of a Raptor code over $\mathbb {F}_{q}$.
\begin{theorem}\label{theorem:rateless}
Consider a Raptor code  over $\mathbb {F}_{q}$ with an $(h,k)$ outer code $\precodegeneric$ characterized by a weight enumerator $\weo$, and an inner \ac{LT} code with output degree distribution $\Omega$.
The probability of decoding failure under \ac{ML} erasure decoding, given that ${k+\absoverhead}$ output symbols have been collected by the receiver, can be upper bounded as
\begin{align}\label{eq:UB_Fq}
\Pf  \leq \frac{1}{q-1} \sum_{l=1}^h \weo_{\l} \pil^{k+\absoverhead}
\end{align}
where  $\pil$ is the probability that the generic output symbol $\rosymb$ is equal to $0$ given that the vector $\vecv$ of intermediate symbols has Hamming weight $l$. The expression of $\pil$ is
\begin{align}
\pil &= \frac{1}{q} +  \frac{q-1}{q} \sum_{j=1}^{\dmax} \Omega_j    \frac{ \krawtchouk{j}{l}{h}{q} }{\krawtchouk{j}{0}{h}{q}}.
\label{eq:pl}
\end{align}
\end{theorem}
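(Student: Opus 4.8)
The plan is to apply the union bound to the event that \ac{ML} decoding fails, and then average the per-codeword failure contribution over the weight enumerator $\weo$ of the outer code. Decoding fails if and only if the linear system \eqref{eq:sys_eq} does not have a unique solution, equivalently if and only if the matrix $\Grx = \Gp \GrxLT$ does not have full column rank. This happens precisely when there exists a nonzero intermediate codeword $\vecv \in \precodegeneric$ in the (right) kernel of $\GrxLT$, i.e. a nonzero $\vecv$ with $\vecv \, \GrxLT = \mathbf{0}$. So the failure event is the union, over nonzero codewords $\vecv \in \precodegeneric$, of the events $E_{\vecv} = \{\vecv \, \GrxLT = \mathbf{0}\}$, and \eqref{eq:boole} gives $\Pf \leq \sum_{\vecv \neq \mathbf{0}} \Pr\{E_{\vecv}\}$.

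Next I would compute $\Pr\{E_{\vecv}\}$ for a fixed nonzero $\vecv$ of Hamming weight $l$. Because the $m = k+\absoverhead$ received output symbols are generated independently, $\vecv \, \GrxLT = \mathbf{0}$ is the intersection of $m$ independent events, each being $\{\vecv \cdot \mathbf{g} = 0\}$ for an independently drawn LT column $\mathbf{g}$. Hence $\Pr\{E_{\vecv}\} = \pil^{k+\absoverhead}$, where $\pil := \Pr\{\vecv \cdot \mathbf{g} = 0\}$ depends on $\vecv$ only through its weight $l$ (by the symmetry of the column-generation process under permutations and scalings of the coordinates). Summing over codewords and grouping by weight, $\sum_{\vecv \neq \mathbf{0}} \Pr\{E_{\vecv}\} = \sum_{l=1}^{h} \weo_l \, \pil^{k+\absoverhead}$. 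The factor $1/(q-1)$ in \eqref{eq:UB_Fq} then arises by observing that scalar multiples of a codeword share the same kernel event: the $q-1$ nonzero multiples of any $\vecv$ all vanish against $\mathbf{g}$ simultaneously, so they index the \emph{same} failure event and must be counted once rather than $q-1$ times. Equivalently, one partitions the codebook as in the $\mathcal{P}_a$ construction preceding Lemma~\ref{lemma:proportional} and sums over one representative per part.

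**The hard part** is the explicit evaluation of $\pil$ in \eqref{eq:pl}. Here I would condition on the output degree $d$ (weight $\pr$ $\Omega_d$) and compute the probability that a random weight-$d$ column, with its $d$ nonzero support positions chosen uniformly among the $h$ coordinates and its nonzero values drawn uniformly from $\GF{q}\setminus\{0\}$, satisfies $\vecv \cdot \mathbf{g} = 0$. The key sub-step is to count, for a fixed codeword of weight $l$, how many of the $d$ chosen positions land in its support; given $j$ positions in the support (hypergeometric over $h$, $l$, $d$), the inner product is a sum of $j$ independent uniform nonzero $\GF{q}$-elements scaled by the corresponding codeword entries, and one evaluates the probability that such a random sum is zero. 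The combinatorial sum over the number of support hits is exactly what the Krawtchouk polynomial $\krawtchouk{j}{l}{h}{q}$ packages, so the main obstacle is recognizing the hypergeometric/sign-balance sum as a ratio of Krawtchouk values $\krawtchouk{j}{l}{h}{q}/\krawtchouk{j}{0}{h}{q}$ and assembling the $1/q$ and $(q-1)/q$ weights correctly. I expect this Krawtchouk identification — rather than the union-bound skeleton — to be where the real work lies; the $q$-EC erasure statistic plays no role because each collected column is an independent fresh draw regardless of which symbols were erased.
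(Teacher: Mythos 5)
Your proposal is correct and follows essentially the same route as the paper's proof: union bound over nonzero outer codewords, the factor $1/(q-1)$ from grouping scalar multiples into a single event, independence of output symbols to get $\pil^{k+\absoverhead}$, and evaluation of $\pil$ by conditioning on the output degree, the hypergeometric count of support hits, and the probability that a sum of i.i.d.\ uniform nonzero field elements vanishes (the paper's Lemma~\ref{lemma:galois}), repackaged via Chu--Vandermonde into the Krawtchouk ratio. The only blemishes are terminological (the failure condition is a nonzero vector in the \emph{left} kernel of $\GrxLT$, equivalently $\rank(\Grx)<k$, not a column-rank deficiency), and do not affect the argument.
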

The upper bound in Theorem~\ref{theorem:rateless} also applies to LT codes. In that case, $h=k$ and $\weo_{\l}$ is simply the total number of sequences of Hamming weight $l$ and length $k$,
\[
\weo_{\l}= \binom{k}{l} (q-1)^{l}.
\]
The upper bound thus obtained for LT codes coincides with the bound in \cite[Theorem 1]{schotsch:2013}. Theorem~\ref{theorem:rateless}  may be extended to multi-edge type Raptor codes over $\mathbb{F}_q$ as follows.
\begin{theorem}\label{theorem:bound_multi_edge}
Consider a multi-edge type Raptor code over $\mathbb {F}_{q}$ with
an $(h,k)$ outer code $\precodegeneric$ characterized by a bivariate weight enumerator polynomial $\weo (x,z)$ and an inner \ac{LT} code with bivariate output degree distribution $\Omega(x,z)$.
The probability of decoding failure under \ac{ML} erasure decoding given that ${k+\absoverhead}$ output symbols have been collected by the receiver can be upper bounded as
\[
\Pf  \leq \frac{1}{q-1} \sum_{ \substack{0 \leq l \leq h_A \\ 0 \leq t \leq h_B  \\ l+t > 0 }} \weo_{\l,t} \pilt^{k+\absoverhead}
\]
where
\begin{equation}\label{eq:pilt}
\pilt = \frac{1}{q} + \frac{q-1}{q} \sum_{j=1}^{h_A} \sum_{s=1}^{h_B} \Omega_{j,s}  \frac{ \krawtchouk{j}{l}{h_A}{q} }{ \krawtchouk{j}{0}{h_A}{q} }  \frac{ \krawtchouk{s}{t}{h_B}{q} }{ \krawtchouk{s}{0}{h_B}{q} }.
\end{equation}
\end{theorem}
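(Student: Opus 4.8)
The plan is to mirror the three-step route behind Theorem~\ref{theorem:rateless}, altering only the computation of the single-symbol probability so as to carry the two symbol types. First I would characterize the decoding failure event. \ac{ML} decoding of \eqref{eq:sys_eq} fails exactly when $\Grx = \Gp \GrxLT$ lacks full row rank, i.e.\ when some nonzero $\vecu$ satisfies $\vecu \Grx = \mathbf{0}$. Writing $\vecv = \vecu \Gp = (\vecv_A,\vecv_B)$ and using that $\Gp$ has full row rank, this is equivalent to the existence of a nonzero intermediate codeword $\vecv \in \precodegeneric$ with $\vecv \GrxLT = \mathbf{0}$, so that
\[
\Pf = \Pr\Big\{ \bigcup_{\vecv \in \precodegeneric \setminus \{\mathbf{0}\}} \{ \vecv \GrxLT = \mathbf{0} \} \Big\}.
\]

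Next I would apply Boole's inequality \eqref{eq:boole}. Since $\vecv \GrxLT = \mathbf{0}$ implies $\beta\vecv \GrxLT = \mathbf{0}$ for every $\beta \in \GF{q}\setminus\{0\}$, the events attached to the $q-1$ nonzero scalar multiples of a codeword coincide; grouping the nonzero codewords into one-dimensional subspaces, and noting that scalar multiplication preserves both $\hw(\vecv_A)=l$ and $\hw(\vecv_B)=t$, contributes the factor $1/(q-1)$ and a single representative per subspace with a well-defined composition $(l,t)$. Because the columns of $\GLT$ (hence the $m=k+\absoverhead$ columns of $\GrxLT$) are drawn independently and identically, the event $\{\vecv\GrxLT=\mathbf{0}\}$ factorizes over the $m$ collected symbols as $\pilt^{\,k+\absoverhead}$, where $\pilt$ is the probability that one output symbol equals $0$ given $\hw(\vecv_A)=l$, $\hw(\vecv_B)=t$. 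Summing over compositions with $l+t>0$ and weighting by the multiplicities $\weo_{l,t}$ then delivers the claimed bound.

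The substantive new step is evaluating $\pilt$. A generic output symbol is $\Rosymb = \vecv_A \mathbf{g}_A + \vecv_B \mathbf{g}_B$, where, conditioned on a degree pair $(j,s)$ drawn from $\Omega_{j,s}$, the partial column $\mathbf{g}_A$ places $j$ independent uniform nonzero entries on $j$ uniformly chosen type-$A$ positions and $\mathbf{g}_B$ does likewise with $s$ type-$B$ positions, independently of $\mathbf{g}_A$. I would compute $\Pr\{\Rosymb=0\}$ through the additive characters of $\GF{q}$, writing $\Pr\{\Rosymb=0\} = q^{-1}\sum_{\charact} \Exp[\charact(\vecv_A\mathbf{g}_A)]\,\Exp[\charact(\vecv_B\mathbf{g}_B)]$, the product form being legitimate precisely because $\mathbf{g}_A$ and $\mathbf{g}_B$ are independent. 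The trivial character contributes $1/q$; for each of the $q-1$ nontrivial characters the single-type computation from Theorem~\ref{theorem:rateless} applies verbatim to each type, giving $\Exp[\charact(\vecv_A\mathbf{g}_A)] = \krawtchouk{j}{l}{h_A}{q}/\krawtchouk{j}{0}{h_A}{q}$ and the analogue with $(s,t,h_B)$ for type $B$. Averaging the product over $\Omega_{j,s}$ then yields \eqref{eq:pilt}.

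I expect the main obstacle to be justifying the character-sum factorization and, in particular, verifying that every nontrivial character produces the \emph{same} Krawtchouk ratio regardless of the concrete support of $\vecv_A$: one must check that averaging over the uniform nonzero coefficients makes $\Exp[\charact(\vecv_A\mathbf{g}_A)]$ depend on $\vecv_A$ only through $\hw(\vecv_A)=l$ (together with $j$ and $h_A$), so that the single-type identity of Theorem~\ref{theorem:rateless} can be reused without modification. Once that symmetry is in hand, the independence of the two partial columns makes the product structure of $\pilt$ immediate, and the remainder is the routine union-bound bookkeeping described above.
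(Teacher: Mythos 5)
Your proposal is correct, and the skeleton (characterizing failure as the existence of a nonzero intermediate codeword annihilating $\GrxLT$, the union bound with the $1/(q-1)$ reduction over scalar multiples, the factorization of $\Pr\{E_{\vecv}\}$ into $\pilt^{k+\absoverhead}$ by independence of output symbols, and the grouping by split weight via $\weo_{l,t}$) coincides with the paper's. Where you genuinely diverge is in the evaluation of $\pilt$. The paper conditions on the degree pair $(j,s)$ and then on the numbers $(i,d)$ of nonzero selected intermediate symbols of each type, evaluates $\Pr\{y=0\mid I=i,D=d\}$ through Lemma~\ref{lemma:galois} (itself a DFT argument over $\mathbb{Z}_2^m$, stated only for characteristic $2$), multiplies by two hypergeometric factors, and finally invokes the Chu--Vandermonde identity to collapse the sum over $(i,d)$ into the Krawtchouk ratios of \eqref{eq:pilt}. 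You instead expand $\Pr\{\Rosymb=0\}$ over all additive characters of $\GF{q}$ and factor the expectation across the two independent partial columns; the identity $\Exp[\charact(\vecv_A\mathbf{g}_A)]=\sum_i \binom{l}{i}\binom{h_A-l}{j-i}\binom{h_A}{j}^{-1}(-1/(q-1))^i=\krawtchouk{j}{l}{h_A}{q}/\krawtchouk{j}{0}{h_A}{q}$ holds for every nontrivial $\charact$ because $\sum_{g\neq 0}\charact(vg)=-1$ for any $v\neq 0$, which settles the symmetry concern you raise. Your route reaches \eqref{eq:pilt} without the intermediate combinatorial recasting, works verbatim for any prime power $q$ rather than only characteristic $2$, and makes the product structure of $\pilt$ transparent; the paper's route keeps the probabilistic meaning of each factor ($\qiAB$, $\pifromlA$, $\pifromlB$) explicit, which it reuses in the proofs of the other constructions. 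Both are valid proofs of the same bound.
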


The next result establishes a bound on the probability of decoding failure of a Raptor code over $\mathbb {F}_{q}$ with a $0/1$ LT code.

\begin{theorem}\label{theorem:bound_LTbin}
Consider a Raptor code over $\mathbb {F}_{q}$  with a $0/1$ LT code  having an output degree distribution $\Omega$ and with an $(h,k)$ outer code $\precodegeneric$ characterized by a composition enumerator $\lenum$.
The probability of decoding failure under \ac{ML} erasure decoding given that ${k+\absoverhead}$ output symbols have been collected by the receiver can be upper bounded as
\begin{align}\label{eq:UB_01LT}
\Pf  \leq  \mkern-4mu \frac{1}{q-1} \mkern-4mu \sum_{\labv \neq \comp(\mathbf{0})} \mkern-5mu \lenum  \mkern-2mu \left(  \sum_{j=1}^{\dmax}  \Omega_j  \mkern-7mu \sum_{ \labgv \, \in \Labgvsetj}  B(\labgv) \,
\frac{  \binom{ \lab_0}{ \labg_0} \binom{ \lab_1}{ \labg_1} \mkern-2mu{\cdots} \binom{ \lab_{q-1}}{ \labg_{q-1}} } {\binom{h}{j}} \right)^{ \mkern-8mu k+\delta}
\end{align}
where $\Labgvsetj$ is the set of all possible compositions for vectors in $\mathbb{F}_q^j$.
\end{theorem}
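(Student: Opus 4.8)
The plan is to mirror the structure of the union-bound argument that underlies Theorem~\ref{theorem:rateless}, but to track the full composition $\labv$ of each intermediate codeword rather than merely its Hamming weight. The starting point is that \ac{ML} decoding fails if and only if the matrix $\Grx = \Gp\GrxLT$ has rank less than $k$, equivalently if there exists a nonzero input vector $\vecu$ for which $\vecu\Grx = \mathbf{0}$. Since $\vecv = \vecu\Gp$ ranges over the nonzero codewords of the outer code $\precodegeneric$ as $\vecu$ ranges over the nonzero input vectors, a decoding failure corresponds to the existence of a nonzero intermediate codeword $\vecv$ with $\vecv\,\GrxLT = \mathbf{0}$. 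First I would define, for each nonzero codeword $\vecv$, the event $E_\vecv$ that $\vecv$ lies in the (right) kernel of $\GrxLT$, i.e.\ that all $k+\delta$ collected output symbols are zero, and apply the union bound \eqref{eq:boole} over these events.

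Next I would compute $\Pr\{E_\vecv\}$. Because the $k+\delta$ collected output symbols are generated independently and identically, this probability factorizes as $(\Pr\{\vecv\cdot\mathbf{g}=0\})^{k+\delta}$, where $\mathbf{g}$ is a single randomly generated $0/1$ LT column. The key departure from Theorem~\ref{theorem:rateless} is that here the nonzero entries of $\mathbf{g}$ are forced to equal $1$, so the per-symbol zero probability no longer depends only on the Hamming weight of $\vecv$: it depends on how the support of $\vecv$ distributes across the $q-1$ nonzero field values, that is, on the composition $\labv = \comp(\vecv)$. Conditioning on the LT degree $d$ (drawn with probability $\Omega_d$) and then on which $d$ of the $h$ positions are selected, the inner product $\vecv\cdot\mathbf{g}$ equals the sum of the $d$ selected entries of $\vecv$. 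I would express the probability that this sum vanishes by summing over all compositions $\labgv \in \Labgvsetj$ of the selected sub-vector: the number of ways to pick a sub-support with composition $\labgv$ is the product of binomials $\binom{\lab_0}{\labg_0}\cdots\binom{\lab_{q-1}}{\labg_{q-1}}$, normalized by $\binom{h}{d}$, and the indicator $B(\labgv)$ singles out exactly those compositions whose selected entries sum to zero in $\GF{q}$. This reproduces the inner bracketed expression of \eqref{eq:UB_01LT} as $\Pr\{\vecv\cdot\mathbf{g}=0\}$.

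Finally I would collect terms. Summing $(\Pr\{\vecv\cdot\mathbf{g}=0\})^{k+\delta}$ over all nonzero codewords $\vecv$ and noting that this probability depends on $\vecv$ only through its composition $\labv$, I can group codewords by composition and replace the sum over codewords by the sum over compositions $\labv \neq \comp(\mathbf{0})$ weighted by the composition enumerator $\lenum$. The factor $\frac{1}{q-1}$ arises exactly as in Theorem~\ref{theorem:rateless}: the $q-1$ scalar multiples $\beta\vecv$, $\beta \in \GF{q}\setminus\{0\}$, share the same kernel event (if $\vecv\,\GrxLT=\mathbf{0}$ then so does $\beta\vecv$), so the events are not distinct and the union bound overcounts each failure pattern by a factor $q-1$; dividing corrects this. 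Assembling these pieces yields \eqref{eq:UB_01LT}.

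The main obstacle I anticipate is the careful bookkeeping in the per-symbol computation, namely verifying that the normalized product of binomials with the indicator $B(\labgv)$ correctly counts the probability that a uniformly chosen $d$-subset of positions yields a zero field-sum, given the fixed composition $\labv$ of $\vecv$. One must check that $\Labgvsetj$ is indexed over compositions of length-$j$ vectors (so $|\labgv|=j$ and $\labg_i \le \lab_i$ componentwise), and that the vanishing condition $\sum_{i=1}^{q-1}\sum_{s=1}^{\labg_i}\alpha^{i-1}=0$ encoded by $B(\labgv)$ is precisely the event $\vecv\cdot\mathbf{g}=0$ for the $0/1$ construction. The reduction to compositions — rather than weights — is what makes this bound genuinely different from Theorem~\ref{theorem:rateless}, and the correctness of that combinatorial substitution is the crux of the proof.
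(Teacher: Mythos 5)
Your proposal is correct and follows essentially the same route as the paper: start from the union bound over nonzero outer codewords with the $1/(q-1)$ reduction for scalar multiples, use independence of output symbols to raise the per-symbol zero probability to the power $k+\delta$, condition on the LT degree and on the composition of the selected positions via the multivariate hypergeometric distribution, apply the indicator $B(\labgv)$ for the zero-sum condition, and group codewords by composition using $\lenum$. The only cosmetic slip is calling $\{\vecv : \vecv\,\GrxLT=\mathbf{0}\}$ the \emph{right} kernel (it is the left null space), but your defining clause makes the intended event unambiguous.
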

The upper bound in Theorem~\ref{theorem:bound_LTbin} can be extended to the multi-edge type case as follows.
\begin{theorem}\label{lemma:bound_multi_edge_bin_LT}
Consider a multi-edge type Raptor code over $\GF{q}$ with a $0/1$ LT code having bivariate output degree distribution $\Omega(x,z)$, and with
an $(h,k)$ outer code $\precodegeneric$ characterized by a bivariate composition enumerator $\lenumbi$.
The probability of decoding failure under \ac{ML} erasure decoding given that ${k+\absoverhead}$ output symbols have been collected by the receiver can be upper bounded as
\begin{align}\label{eq:met_q_bin}
\Pf  & \leq \frac{1}{q-1} \mkern-20mu \sum_{\substack{\labv_A, \labv_B \\ \labv_A + \labv_B \neq \comp(\mathbf{0})}}  \mkern-20mu \lenumbi  \mkern-4mu \left(  \sum_{j=1}^{h_A} \sum_{s=1}^{h_B}  \Omega_{j,s}   \mkern-10mu \sum_{ \labgv_A  \in \Labgvsetj}  \sum_{ \labgv_B  \in \Labgvsets} \mkern-10mu  B(\labgv_A + \labgv_B) \right.\\
& \mathrel{\phantom{=}} \left. \times  \frac{  \binom{ \lab_{A,0}}{ \labg_{A,0}} \binom{ \lab_{A,1}}{ \labg_{A,1}} \mkern-2mu{\cdots} \binom{ \lab_{A, q-1}}{ \labg_{A, q-1}} } {\binom{h_A}{j}} \,
\frac{  \binom{ \lab_{B,0}}{ \labg_{B,0}} \binom{ \lab_{B,1}}{ \labg_{B,1}} \mkern-2mu{\cdots} \binom{ \lab_{B, q-1}}{ \labg_{B, q-1}} } {\binom{h_B}{s}} \right)^{k+\delta}
\end{align}
where $\Labgvsetj$ and $\Labgvsets$ are the set of all possible compositions for vectors in $\mathbb{F}_{\!q}^j$ and in $\mathbb{F}_{\!q}^s$, respectively.
\end{theorem}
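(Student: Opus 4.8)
The plan is to reuse the union-bound machinery behind Theorems~\ref{theorem:rateless}--\ref{theorem:bound_LTbin}, now carrying the \emph{bivariate} composition enumerator $\lenumbi$ through the argument. First I would characterize the failure event: under \ac{ML} erasure decoding the collected system $\mathbf{\Rrosymb}=\vecu\Grx$ with $\Grx=\Gp\GrxLT$ has a unique solution iff $\Grx$ has rank $k$, and since $\Gp$ has full row rank $k$ this fails precisely when some nonzero intermediate codeword $\vecv=\vecu\Gp\in\precodegeneric$ obeys $\vecv\,\GrxLT=\mathbf{0}$. Writing $E_{\vecv}$ for the event $\{\vecv\,\GrxLT=\mathbf{0}\}$, the failure event is $\bigcup_{\vecv\neq\mathbf{0}}E_{\vecv}$.

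Next I would extract the prefactor $\tfrac{1}{q-1}$. By linearity of $\GrxLT$ one has $E_{\beta\vecv}=E_{\vecv}$ for every $\beta\in\GF{q}\setminus\{0\}$, so the nonzero codewords contribute in groups of $q-1$ coincident events, one group per one-dimensional subspace; bounding the union over a single representative of each subspace gives $\Pf\leq\tfrac{1}{q-1}\sum_{\vecv\neq\mathbf{0}}\Pr\{E_{\vecv}\}$. Because the $k+\absoverhead$ collected output symbols are generated independently, $\Pr\{E_{\vecv}\}=p(\vecv)^{k+\absoverhead}$, where $p(\vecv)$ is the probability that one output symbol equals $0$ given $\vecv$.

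The crux is evaluating $p(\vecv)$ for the $0/1$ multi-edge LT code, and here I would mirror the derivation of \eqref{eq:UB_01LT} in Theorem~\ref{theorem:bound_LTbin} while splitting it across the two symbol types. An output symbol is the $\GF{q}$-sum of the selected intermediate symbols. Conditioning on the degrees $(j,s)$ drawn with probability $\Omega_{j,s}$, the encoder chooses $j$ of the $h_A$ type-$A$ positions and, independently, $s$ of the $h_B$ type-$B$ positions, uniformly at random; conditioning further on the compositions $\labgv_A\in\Labgvsetj$ and $\labgv_B\in\Labgvsets$ of the selected symbols, the probability of realizing $\labgv_A$ from a type-$A$ part of composition $\labv_A$ is the hypergeometric ratio $\binom{\lab_{A,0}}{\labg_{A,0}}\cdots\binom{\lab_{A,q-1}}{\labg_{A,q-1}}/\binom{h_A}{j}$, and likewise for type $B$. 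Since the sum of the chosen symbols depends on them only through the aggregate count vector $\labgv_A+\labgv_B$, it vanishes iff $B(\labgv_A+\labgv_B)=1$. Collecting these factors reproduces the inner parenthesis of \eqref{eq:met_q_bin} and shows that $p(\vecv)$ depends on $\vecv$ solely through $(\labv_A,\labv_B)$; grouping $\sum_{\vecv\neq\mathbf{0}}$ by bivariate composition (with $\lenumbi$ codewords per class and the class $\labv_A+\labv_B=\comp(\mathbf{0})$, i.e.\ the zero codeword, excluded) then yields the claimed bound.

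The hard part will be this last probability computation: I must argue that the $\GF{q}$-sum of the selected symbols is a function of the combined composition alone, so that $\labgv_A+\labgv_B$ is a sufficient statistic, and then match its vanishing to the definition of $B(\cdot)$. Once that equivalence and the two hypergeometric selection probabilities are in place, the remaining bookkeeping---including confirming that $\labv_A+\labv_B=\comp(\mathbf{0})$ singles out exactly the zero codeword---is routine.
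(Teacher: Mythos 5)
Your proposal is correct and follows essentially the same route as the paper's proof: the union bound \eqref{eq:bound_v} with the $1/(q-1)$ factor from linear dependence, grouping codewords by split composition via $\lenumbi$, independence of output symbols, and conditioning on the degrees $(j,s)$ and the selected-symbol compositions $(\labgv_A,\labgv_B)$ with the two multivariate hypergeometric factors and the indicator $B(\labgv_A+\labgv_B)$. No gaps.
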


Each of the above theorems specializes the union bound~\eqref{eq:boole} for a specific Raptor construction, providing an explicit expression for the corresponding $S_1$ parameter. By developing an expression for $S_2$, it is also possible to bound the decoding failure probability from below via \eqref{eq:order2_bonferroni} or \eqref{eq:dawson_sankoff}. Hereafter we provide such a lower bound for a Raptor code over $\GF{q}$ with a $0/1$ LT code and, as a particular case, for a Raptor code over $\bbF{2}$. The lower bounds exploit the sets $\cK_{q,h}$ and $\cT_{2,h}$ defined in Section~\ref{sec:results_joint}.
\begin{theorem}\label{th:LB_01LT}
Consider a Raptor code over $\mathbb{F}_q$ with a 0/1 LT code  having output degree distribution $\Omega$, and an $(h,k)$ outer code $\mathcal{C}$ characterized by a composition enumerator $\mathcal{Q}_{\mathbf{f}}$. The probability of decoding failure under ML erasure decoding, given that $ k + \delta$ output symbols have been collected by the receiver, fulfills
\begin{equation}\label{eq:LB_01LT}
\Pf \geq \frac{\theta S_1^2}{(2-\theta)S_1 + 2 S_2} + \frac{(1-\theta)S_1^2}{(1-\theta)S_1+2S_2} \geq S_1 - S_2
\end{equation}
where $\theta = 2 S_2 / S_1 - \lfloor 2 S_2 / S_1 \rfloor$, $S_1$ equals the right-hand side of \eqref{eq:UB_01LT}, and
\begin{align}
S_2 =& \frac{1}{2(q-1)^2} \sum_{\bkappa \in \cK_{q,h}} \JCEF_{\bkappa} \notag \\ 
&\times \Bigg( \sum_{j=1}^{d_{\max}} \Omega_j \sum_{\bupsilon \in \mathsf{\Upsilon}_j} B(\bgamma_1(\bupsilon)) B(\bgamma_2(\bupsilon)) \frac{\prod_{s,t} {\kappa_{s,t} \choose \upsilon_{s,t}}} {{h \choose j}} \Bigg)^{k+\delta}. \label{eq:S2_01LT}
\end{align}
In \eqref{eq:S2_01LT}, $\mathsf{\Upsilon}_j$ is the set of all possible joint compositions for vector pairs in $\bbF{q}^j \times \bbF{q}^j$.\\ Moreover, for $q=2$: (i) the parameter $S_1$ equals the right-hand side of \eqref{eq:UB_Fq} (expressed with $q=2$); (ii) the parameter $S_2$ reduces to
\begin{align}\label{eq:LB_binary}
S_2 = \frac{1}{2} \mkern-3mu \sum_{\btau  \in \cT_{2,h}} \mkern-9mu \JWEF_{\btau} \mkern-6mu \left( \mkern-3mu \sum_{j=1}^{d_{\max}} \mkern-3mu \Omega_j \mkern-7mu \sum_{(i_1,i_2,i_3)} 
\mkern-7mu \frac{{\tau_0 \choose j - i_{1} - i_{2} - i_{3}}{\tau_1 \choose i_{1}}{\tau_2 \choose i_{1}}{\tau_3 \choose i_{3}}}{{h \choose j}}  \right)^{\mkern-6muk+\delta}
\end{align}
where $\JWEF_{\btau}$ is the biweight enumerator of the outer code and where the most inner sum in \eqref{eq:LB_binary} is over all integer triplets ${(i_1,i_2,i_3)}$ such that $i_1 + i_2+i_3=j$; both $i_1+i_3$ and ${i_2+i_3}$ are even; $0 \leq i_1 \leq \min\{\tau_1,j\}$, $0 \leq i_2 \leq \min\{\tau_2,j\}$, ${0 \leq i_3 \leq \min\{\tau_3,j\} }$.
\end{theorem}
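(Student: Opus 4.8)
The plan is to derive a lower bound on $\Pf$ by instantiating the Dawson--Sankoff inequality~\eqref{eq:dawson_sankoff} and the degree-two Bonferroni inequality~\eqref{eq:order2_bonferroni}. Decoding failure occurs if and only if the matrix $\Grx$ in~\eqref{eq:sys_eq} is rank-deficient, which is equivalent to the existence of a nonzero codeword $\vecv$ of the outer code lying in the right kernel of $\GrxLT$, i.e.\ $\vecv \GrxLT = \mathbf{0}$. Following the union-bound logic already used to establish $S_1$ in Theorem~\ref{theorem:bound_LTbin}, I would associate to each nonzero codeword $\vecv_s$ the event $A_s = \{\vecv_s \GrxLT = \mathbf{0}\}$; then $\Pf = \Pr\{\bigcup_s A_s\}$. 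Since the $q-1$ scalar multiples of a codeword share the same support and the $0/1$ LT code treats them jointly (they vanish against a given output column simultaneously or not at all), the events come in groups of $q-1$, which explains the $1/(q-1)$ and $1/(q-1)^2$ normalizations in $S_1$ and $S_2$. The two terms of~\eqref{eq:dawson_sankoff} and the cruder $S_1 - S_2$ in~\eqref{eq:order2_bonferroni} then follow directly once $S_1$ and $S_2$ are identified; the chain of inequalities in~\eqref{eq:LB_01LT} is precisely the statement that Dawson--Sankoff dominates degree-two Bonferroni, as already noted in the preliminaries.

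\textbf{Computing $S_1$.} This is immediate: $S_1 = \sum_s \Pr\{A_s\}$, and since $\Pr\{A_s\}$ depends on $\vecv_s$ only through its composition (the $0/1$ LT column hits a fixed set of $j$ positions uniformly at random, and whether the weighted sum over those positions vanishes is a composition-level event captured by the indicator $B(\cdot)$), grouping the $q^h-1$ nonzero codewords by composition recovers exactly the right-hand side of~\eqref{eq:UB_01LT}. So I would simply assert $S_1$ equals that expression, consistent with the proof of Theorem~\ref{theorem:bound_LTbin}.

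\textbf{Computing $S_2$.} The core work is the pairwise term $S_2 = \sum_{s<t} \Pr\{A_s \cap A_t\}$. For a single output column, independence across columns gives $\Pr\{A_s \cap A_t\} = \left(\Pr\{\text{one column annihilates both } \vecv_s \text{ and } \vecv_t\}\right)^{k+\delta}$. The key observation is that this per-column probability depends on the pair $(\vecv_s,\vecv_t)$ only through their \emph{joint composition} $\bkappa = \jc(\vecv_s,\vecv_t)$: a degree-$j$ column selects $j$ positions uniformly, inducing a joint composition $\bupsilon$ on the restricted pair with probability $\prod_{s,t}\binom{\kappa_{s,t}}{\upsilon_{s,t}}/\binom{h}{j}$, and the column annihilates both codewords exactly when both induced marginal sums vanish, i.e.\ when $B(\bgamma_1(\bupsilon)) = B(\bgamma_2(\bupsilon)) = 1$ (using the marginal relations~\eqref{eq:gamma_one}--\eqref{eq:gamma_two}). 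Summing over $\bupsilon \in \mathsf{\Upsilon}_j$ and over $j$ weighted by $\Omega_j$ yields the inner bracket of~\eqref{eq:S2_01LT}. Finally, grouping codeword pairs by joint composition requires restricting to pairs that actually contribute, namely the linearly \emph{independent} nonzero pairs indexed by $\cD_{q,k}$; by Lemma~\ref{lemma:KD} these are exactly the pairs with $\bkappa \in \cK_{q,h}$, so replacing the pair sum by $\sum_{\bkappa \in \cK_{q,h}} \JCEF_{\bkappa}$ is justified, and the $1/2$ together with $1/(q-1)^2$ accounts for unordered pairs and for the $q-1$-fold multiplicity within each linear-dependence class.

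\textbf{The binary specialization and the main obstacle.} For $q=2$, Remark~\ref{remark:binary_equiv} identifies joint compositions with joint weights and Lemma~\ref{lemma:no_tilde} replaces $\cK_{2,h}$ by $\cT_{2,h}$; the per-column annihilation probability then becomes a purely combinatorial count of how a degree-$j$ column's $j$ selected positions distribute among the four joint-weight classes $(\tau_0,\tau_1,\tau_2,\tau_3)$, subject to the parity constraints that make both binary sums vanish. Writing $i_1,i_2,i_3$ for the number of selected positions falling in the $\tau_1,\tau_2,\tau_3$ classes, both marginal sums are zero iff $i_1+i_3$ and $i_2+i_3$ are even, producing the multinomial expression in~\eqref{eq:LB_binary}. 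The main obstacle I anticipate is the careful bookkeeping in the general-$q$ case: correctly handling the grouping of codewords into linear-dependence classes so that the $(q-1)$ and $(q-1)^2$ factors, the restriction to $\cK_{q,h}$, and the indicator conditions all line up, and verifying via Lemma~\ref{lemma:KD} that no spurious pairs (linearly dependent ones, which can never be jointly annihilated in a way that distinguishes them) are counted. Establishing that the joint-composition statistic is a sufficient summary of the per-column event, and that summing over $\mathsf{\Upsilon}_j$ is exhaustive and non-overlapping, is the technical heart; once that is in place, the algebraic assembly of $S_2$ and the final inequality chain are routine.
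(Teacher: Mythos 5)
Your proposal is correct and follows essentially the same route as the paper: deduplicate the failure events into linear-dependence classes (one representative per class, yielding the $1/(q-1)$ and $1/(2(q-1)^2)$ normalizations), apply the degree-two Bonferroni and Dawson--Sankoff inequalities to the union over representatives, reuse the $S_1$ computation from Theorem~\ref{theorem:bound_LTbin}, evaluate the pairwise term by conditioning on the joint composition of the $j$ selected positions with the product-of-binomials hypergeometric probability and the two indicator functions, invoke Lemma~\ref{lemma:KD} to identify the contributing pairs $\cD_{q,k}$ with the joint compositions $\cK_{q,h}$, and specialize to $q=2$ via the joint-weight bijection and the parity conditions on $i_1+i_3$ and $i_2+i_3$. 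No gaps beyond the routine bookkeeping you yourself flag, which the paper carries out exactly as you describe.
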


Theorems~\ref{theorem:rateless}-\ref{th:LB_01LT} apply to Raptor codes with a given outer code.
Next we extend these results to the case of a random outer code drawn from an ensemble of codes. Specifically, we consider a parity-check based ensemble of outer codes, denoted by $\codensemble$, defined by a random matrix of size $(h - k) \times h$ whose elements belong to $\GF{q}$ (here, $k$ may not coincide with the dimension of a specific code in the ensemble, as it will be discussed later). A linear block code of length $h$ belongs to $\codensemble$ if and only if at least one of the instances of the random matrix is a valid parity-check matrix for it. Moreover, the probability measure of each code in the ensemble is the sum of the probabilities of all instances of the random matrix which are valid parity-check matrices for that code. Note that all codes $\precodegeneric$ in $\codensemble$ are linear, have length $h$, and have dimension $k_\precodegeneric \geq k$.
In the following we use the expression \emph{Raptor code ensemble} to refer to the set of Raptor codes obtained by concatenating an outer code belonging to the ensemble $\codensemble$ with an \ac{LT} code. Given a Raptor code ensemble we define its expected probability of decoding failure as
\begin{align}
\barPf = \Exp_{  \precodegeneric } [ \Pf(\precodegeneric)]
\label{eq:ensemble}
\end{align}
where the expectation is taken over all codes $\precodegeneric$ in the ensemble of outer codes $\codensemble$.

The following corollary extends the result of Theorem~\ref{theorem:rateless} to Raptor code ensembles.
\begin{corollary}\label{corollary:rateless}
Consider a Raptor code ensemble over $\GF{q}$ with an outer code randomly drawn from the ensemble $\codensemble$, characterized by an expected weight enumerator ${\weoensemble= \{ \weoensemble_0,\weoensemble_1,\dots,\weoensemble_h \}}$ and an LT code with degree distribution $\Omega$.  Under \ac{ML} erasure decoding and given that ${k+\absoverhead}$ output symbols have been collected by the receiver, the expected probability of the decoding failure can be upper bounded as
\[
\barPf  \leq   \frac{1}{q-1} \sum_{l=1}^h \weoensemble_{\l}  \pil^{k+\absoverhead} \, .
\]
\end{corollary}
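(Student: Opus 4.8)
The plan is to derive the corollary directly from Theorem~\ref{theorem:rateless} by taking the expectation over the outer code ensemble and exploiting the linearity of expectation. The key observation is that Theorem~\ref{theorem:rateless} gives, for any \emph{fixed} outer code $\precodegeneric$ with weight enumerator $\weo = \{\weo_0, \dots, \weo_h\}$, the bound $\Pf(\precodegeneric) \leq \frac{1}{q-1} \sum_{l=1}^h \weo_{\l} \, \pil^{k+\absoverhead}$. Crucially, the quantity $\pil$ defined in \eqref{eq:pl} depends only on the inner LT code parameters (the degree distribution $\Omega$, the intermediate block length $h$, and the field size $q$) and on the weight index $l$; it does \emph{not} depend on which particular outer code is drawn from the ensemble. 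Hence $\pil^{k+\absoverhead}$ is a deterministic coefficient when we average over $\codensemble$.

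First I would apply the definition \eqref{eq:ensemble} of the expected decoding failure probability and monotonicity of expectation to the per-code bound:
\begin{align*}
\barPf = \Exp_{\precodegeneric}[\Pf(\precodegeneric)] \leq \Exp_{\precodegeneric}\!\left[ \frac{1}{q-1} \sum_{l=1}^h \weo_{\l}(\precodegeneric) \, \pil^{k+\absoverhead} \right].
\end{align*}
Next I would pull the constant $\frac{1}{q-1}$ and the deterministic factors $\pil^{k+\absoverhead}$ outside the expectation, using linearity of expectation to interchange $\Exp_{\precodegeneric}$ with the finite sum over $l$:
\begin{align*}
\barPf \leq \frac{1}{q-1} \sum_{l=1}^h \Exp_{\precodegeneric}\!\left[ \weo_{\l}(\precodegeneric) \right] \pil^{k+\absoverhead}.
\end{align*}
By the definition of the expected weight enumerator given in the Preliminaries, $\Exp_{\precodegeneric}[\weo_{\l}(\precodegeneric)] = \weoensemble_{\l}$, the expected multiplicity of codewords of weight $l$ in a random code drawn from $\codensemble$. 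Substituting this yields exactly the claimed bound
\begin{align*}
\barPf \leq \frac{1}{q-1} \sum_{l=1}^h \weoensemble_{\l} \, \pil^{k+\absoverhead}.
\end{align*}

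This proof is essentially a one-line averaging argument, so there is no serious technical obstacle; the only point requiring a word of care is justifying the interchange of expectation and summation, which is immediate here because the sum over $l$ is finite (running from $1$ to $h$) and each term is nonnegative, so Fubini/Tonelli applies trivially. The essential structural fact that makes the argument go through is precisely that $\pil$ is independent of the realized outer code, a property that can be read off directly from \eqref{eq:pl}. One subtlety worth noting, though not an obstacle, is the remark preceding the corollary that codes in the ensemble may have dimension $k_\precodegeneric \geq k$ rather than exactly $k$; since Theorem~\ref{theorem:rateless} is stated for a generic outer code characterized by its weight enumerator and does not rely on the dimension equalling $k$ in the bound itself, the per-code inequality still holds for every realization, and the averaging goes through unchanged.
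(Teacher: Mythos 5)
Your proof is correct and follows essentially the same route as the paper: apply Theorem~\ref{theorem:rateless} to each outer code in the ensemble, note that $\pil$ in \eqref{eq:pl} is deterministic (it depends only on $\Omega$, $h$, $q$, and $l$), and average using linearity of expectation together with $\Exp_{\precodegeneric}[\weo_{\l}(\precodegeneric)]=\weoensemble_{\l}$. The only divergence is in handling the fact that a code $\precodegeneric$ drawn from the parity-check ensemble may have dimension $k_\precodegeneric>k$: the paper applies the per-code bound with exponent $k_\precodegeneric+\absoverhead$ and then uses $\pil\leq 1$ together with $k_\precodegeneric\geq k$ to conclude $\pil^{k_\precodegeneric+\absoverhead}\leq\pil^{k+\absoverhead}$, whereas you observe that the exponent in Theorem~\ref{theorem:rateless} is simply the number of collected output symbols, fixed here at $k+\absoverhead$ regardless of $k_\precodegeneric$ --- both resolutions of this (the one point the paper flags as requiring ``more care'') are sound.
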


The following three corollaries extend Theorems~\ref{theorem:bound_multi_edge}, \ref{theorem:bound_LTbin}, \ref{corollary:bound_multi_edge_bin_LT} and  to Raptor code ensembles.
\begin{corollary}\label{corollary:bound_multi_edge}
Consider a multi-edge type Raptor code ensemble over $\GF{q}$, whose outer code is randomly drawn from a code ensemble characterized by an expected bivariate weight enumerator polynomial $\weoensemble (x,z)$ and an inner \ac{LT} code with bivariate output degree distribution $\Omega(x,z)$.
The expected probability of decoding failure under \ac{ML} erasure decoding given that ${k+\absoverhead}$ output symbols have been collected by the receiver can be upper bounded as
\[
\barPf  \leq \frac{1}{q-1} \sum_{ \substack{0 \leq l \leq h_A \\ 0 \leq t \leq h_B  \\ l+t > 0 }} \weoensemble_{\l,t} \pilt^{k+\absoverhead}
\]
where $\pilt$ is defined in \eqref{eq:pilt}.
\end{corollary}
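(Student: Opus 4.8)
The plan is to derive Corollary~\ref{corollary:bound_multi_edge} from Theorem~\ref{theorem:bound_multi_edge} by averaging the per-code upper bound over the outer code ensemble $\codensemble$ and exchanging the expectation with the finite double sum over codeword compositions. First I would start from the definition $\barPf = \Exp_{\precodegeneric}[\Pf(\precodegeneric)]$ in \eqref{eq:ensemble}. For any fixed outer code $\precodegeneric$ in the ensemble, Theorem~\ref{theorem:bound_multi_edge} gives
\[
\Pf(\precodegeneric) \leq \frac{1}{q-1} \sum_{\substack{0 \leq l \leq h_A \\ 0 \leq t \leq h_B \\ l+t>0}} \weo_{l,t}(\precodegeneric)\, \pilt^{k+\absoverhead},
\]
where $\weo_{l,t}(\precodegeneric)$ denotes the bivariate weight enumerator of that particular code. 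The key observation is that the quantity $\pilt$ defined in \eqref{eq:pilt} depends only on the inner LT code's bivariate degree distribution $\Omega(x,z)$ and on the block-length parameters $h_A, h_B, q$; it does \emph{not} depend on which outer code is drawn from $\codensemble$. Hence $\pilt^{k+\absoverhead}$ is a deterministic constant that can be pulled outside the expectation.

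Next I would take expectations on both sides. Since expectation is monotone, $\barPf = \Exp_\precodegeneric[\Pf(\precodegeneric)] \leq \Exp_\precodegeneric$ of the right-hand side. By linearity of expectation, and because the sum is finite (it ranges over $0 \leq l \leq h_A$ and $0 \leq t \leq h_B$, with $l+t>0$), the expectation commutes with the double summation:
\[
\barPf \leq \frac{1}{q-1} \sum_{\substack{0 \leq l \leq h_A \\ 0 \leq t \leq h_B \\ l+t>0}} \Exp_\precodegeneric\!\left[\weo_{l,t}(\precodegeneric)\right] \pilt^{k+\absoverhead}.
\]
By the definition of the expected bivariate weight enumerator given in Section~\ref{sec:prelim}, $\Exp_\precodegeneric[\weo_{l,t}(\precodegeneric)] = \weoensemble_{l,t}$, the expected multiplicity of codewords with $\hw(\vecv_A)=l$ and $\hw(\vecv_B)=t$ under the ensemble's probability measure. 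Substituting this yields exactly the claimed bound.

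There is no genuinely hard step here; the result is a routine application of monotonicity and linearity of expectation to the fixed-code bound of Theorem~\ref{theorem:bound_multi_edge}. The only point requiring mild care is confirming that the expectation is well-defined and that the interchange of $\Exp_\precodegeneric$ with the sum is legitimate. This is immediate because the index set is finite, so no convergence issue arises, and because the ensemble probability measure assigns a well-defined weight to each code (as specified in the definition of $\codensemble$), making $\Exp_\precodegeneric[\weo_{l,t}(\precodegeneric)]$ a finite weighted average that equals $\weoensemble_{l,t}$ by construction. I would therefore expect the main (very minor) obstacle to be purely notational: ensuring the averaging is carried out over the outer code ensemble $\codensemble$ while the LT-dependent factor $\pilt$ remains fixed, rather than any substantive analytic difficulty.
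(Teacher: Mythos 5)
Your approach is the same as the paper's: the paper proves Corollary~\ref{corollary:rateless} by taking the expectation of the per-code bound and invoking linearity, and then states that the proof of Corollary~\ref{corollary:bound_multi_edge} "follows closely" and omits it. Your observation that $\pilt$ depends only on the LT code and on $h_A$, $h_B$, $q$ (so it factors out of the expectation), and that the finite sum commutes with $\Exp_{\precodegeneric}$, is exactly the core of that argument.

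There is, however, one small step you skip which the paper explicitly flags as the part requiring care for parity-check-based ensembles. The ensemble $\codensemble$ is defined by a random $(h-k)\times h$ parity-check matrix, so a drawn code $\precodegeneric$ may have dimension $k_{\precodegeneric} > k$ (the matrix need not have full rank). Under the paper's convention the receiver collects $k_{\precodegeneric}+\absoverhead$ symbols for that code, so Theorem~\ref{theorem:bound_multi_edge} applied to $\precodegeneric$ yields the exponent $k_{\precodegeneric}+\absoverhead$, not $k+\absoverhead$ as you write. The paper bridges this by noting that $\pilt \leq 1$, hence $\pilt^{k_{\precodegeneric}+\absoverhead} \leq \pilt^{k+\absoverhead}$, after which your expectation-and-linearity argument goes through verbatim. (If instead one fixes the number of collected symbols at $m=k+\absoverhead$ for every code in the ensemble, your direct application is fine; but that is not the convention the paper adopts for the upper-bound corollaries, as its footnote in the proof of Corollary~\ref{corollary:LB_ensembles} makes explicit.) Your proof is correct as written only for fixed-dimension (e.g., generator-based) ensembles; adding the one-line monotonicity argument $\pilt^{k_{\precodegeneric}+\absoverhead}\leq\pilt^{k+\absoverhead}$ makes it complete in the generality the corollary claims.
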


\begin{corollary}\label{corollary:bound_rq}
Consider an ensemble of Raptor codes over $\GF{q}$  with a $0/1$ LT code with degree distribution $\Omega$ and where the outer code is randomly drawn from a code ensemble $\codensemble$  characterized by an expected composition enumerator $\lenumensemble$.
The expected probability of decoding failure under \ac{ML} erasure decoding given that ${k+\absoverhead}$ output symbols have been collected by the receiver can be upper bounded as
\begin{align}\label{eq:S1_01LT_ensemble}
\barPf  \leq  \mkern-4mu \frac{1}{q-1} \mkern-4mu \sum_{\labv \neq \comp(\mathbf{0})} \mkern-10mu \, \lenumensemble \mkern-2mu  \left(  \sum_{j=1}^{\dmax}  \Omega_j   \sum_{ \labgv \in \Labgvsetj} \mkern-2mu B(\labgv) \,
\frac{  \binom{ \lab_0}{ \labg_0} \binom{ \lab_1}{ \labg_1} \mkern-2mu{\cdots} \binom{ \lab_{q-1}}{ \labg_{q-1}} } {\binom{h}{j}} \right)^{ \mkern-8mu k+\delta}
\end{align}
where $\Labgvsetj$ is the set of all possible compositions for vectors in $\mathbb{F}_q^j$.
\end{corollary}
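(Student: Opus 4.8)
The plan is to derive the ensemble bound \eqref{eq:S1_01LT_ensemble} directly from Theorem~\ref{theorem:bound_LTbin} by applying the latter code-by-code and then averaging over the outer-code ensemble $\codensemble$. The crucial observation is that, for a fixed outer code $\precodegeneric$, the factor raised to the power $k+\delta$ in \eqref{eq:UB_01LT} is a function of the composition $\labv$ alone (together with $\Omega$, $h$, and $q$) and carries no dependence on the specific code; only the multiplicity $\lenum$ does. This clean separation between a code-dependent count and a code-independent exponentiated factor is precisely what permits pulling the expectation through.

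First I would fix an arbitrary code $\precodegeneric \in \codensemble$ and abbreviate the code-independent per-symbol factor as
\[
\Pi_{\labv} \coloneqq \sum_{j=1}^{\dmax} \Omega_j \sum_{\labgv \in \Labgvsetj} B(\labgv)\, \frac{\binom{\lab_0}{\labg_0}\binom{\lab_1}{\labg_1}\cdots\binom{\lab_{q-1}}{\labg_{q-1}}}{\binom{h}{j}},
\]
so that Theorem~\ref{theorem:bound_LTbin} reads $\Pf(\precodegeneric) \leq \frac{1}{q-1}\sum_{\labv \neq \comp(\mathbf{0})} \lenum(\precodegeneric)\, \Pi_{\labv}^{\,k+\delta}$. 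Here I would flag the one point that requires care. Although the codes in $\codensemble$ may have distinct dimensions $k_\precodegeneric \geq k$, the exponent $k+\delta$ in Theorem~\ref{theorem:bound_LTbin} is precisely the number $m$ of output symbols collected by the receiver, which is held fixed across the ensemble and is therefore independent of $k_\precodegeneric$. Moreover, since every code in $\codensemble$ has the same length $h$, the admissible compositions $\labv$ (those of length-$h$ vectors over $\GF{q}$) form a code-independent index set, and $\Pi_{\labv}$ is identical for every code. Thus both the summation range and the exponentiated summand are the same for all $\precodegeneric \in \codensemble$.

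Next I would take the expectation $\Exp_\precodegeneric[\cdot]$ of the per-code inequality. Monotonicity of expectation preserves the inequality, and because the outer sum over $\labv$ is finite while $\Pi_{\labv}^{\,k+\delta}$ is deterministic, linearity of expectation yields
\[
\barPf = \Exp_\precodegeneric[\Pf(\precodegeneric)] \leq \frac{1}{q-1}\sum_{\labv \neq \comp(\mathbf{0})} \Exp_\precodegeneric[\lenum(\precodegeneric)]\, \Pi_{\labv}^{\,k+\delta}.
\]
Finally, by the very definition of the expected composition enumerator given in Section~\ref{sec:prelim}, $\Exp_\precodegeneric[\lenum(\precodegeneric)] = \lenumensemble$, the expected multiplicity of codewords with composition $\labv$. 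Substituting this identity reproduces \eqref{eq:S1_01LT_ensemble}.

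I expect no genuine difficulty here: the argument is a routine linearity-of-expectation averaging of an already-established union bound. The only substantive obstacle is the bookkeeping remark above, namely verifying that the exponent and the index set are ensemble-invariant so that the expectation acts solely on $\lenum$. This invariance is exactly what makes the interchange of the (finite) sum and the expectation legitimate and lets the entire code-dependence collapse cleanly into the expected composition enumerator $\lenumensemble$.
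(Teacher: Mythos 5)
Your overall strategy coincides with the paper's: the paper proves Corollary~\ref{corollary:rateless} by applying the per-code theorem and averaging via linearity of expectation, and then states that Corollary~\ref{corollary:bound_rq} ``follows closely'' that proof. The one place where you diverge is exactly the point the paper singles out as requiring ``some more care'' for parity-check-based ensembles: codes $\precodegeneric\in\codensemble$ may have dimension $k_\precodegeneric>k$. You resolve this by declaring the exponent to be the fixed number $m=k+\absoverhead$ of collected symbols, identical for every code; under that convention your argument is complete and the expectation passes through trivially. The paper's own convention for the \emph{upper} bounds (stated in the footnote to the proof of Corollary~\ref{corollary:LB_ensembles}) is different: the receiver collects $k_\precodegeneric+\absoverhead$ symbols for each drawn code, i.e.\ a fixed overhead relative to the code's true dimension, so the per-code exponent is \emph{not} ensemble-invariant. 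Under that reading your claim that the exponent ``is held fixed across the ensemble'' is false, and one additional step is needed before taking the expectation: since the bracketed quantity $\Pi_{\labv}$ is a probability, $\Pi_{\labv}\le 1$, and $k_\precodegeneric\ge k$ implies $\Pi_{\labv}^{\,k_\precodegeneric+\absoverhead}\le \Pi_{\labv}^{\,k+\absoverhead}$, after which linearity of expectation applies exactly as you wrote. So your proof is valid for the fixed-$m$ interpretation of the statement, and becomes the paper's proof once you insert this one-line monotonicity bound to cover the fixed-relative-overhead interpretation; it would be worth stating explicitly which convention you adopt and why $\Pi_{\labv}\le 1$ rescues the other one.
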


\begin{corollary}\label{corollary:bound_multi_edge_bin_LT}
Consider a multi-edge type Raptor code ensemble over $\GF{q}$ with a $0/1$ LT code with bivariate output degree distribution $\Omega(x,z)$ and where the outer code is randomly drawn from an ensemble $\codensemble$ characterized by an expected bivariate composition enumerator $\lenumbiensemble$.
The expected probability of decoding failure under \ac{ML} erasure decoding given that ${k+\absoverhead}$ output symbols have been collected by the receiver can be upper bounded as
\begin{align}\label{eq:met_q_bin_ensemble}
\barPf  & \leq \frac{1}{q-1} \mkern-20mu \sum_{\substack{\labv_A, \labv_B \\ \labv_A + \labv_B \neq \comp(\mathbf{0})}}  \mkern-20mu \lenumbiensemble  \mkern-4mu \left(  \sum_{j=1}^{h_A} \sum_{s=1}^{h_B}  \Omega_{j,s}  \mkern-10mu \sum_{ \labgv_A \in \Labgvsetj} \sum_{ \labgv_B \in \Labgvsets} \mkern-10mu B(\labgv_A + \labgv_B) \right.\\
& \mathrel{\phantom{=}} \left. \times  \frac{  \binom{ \lab_{A,0}}{ \labg_{A,0}} \binom{ \lab_{A,1}}{ \labg_{A,1}} {\cdots} \binom{ \lab_{A, q-1}}{ \labg_{A, q-1}} } {\binom{h_A}{j}} \,
\frac{  \binom{ \lab_{B,0}}{ \labg_{B,0}} \binom{ \lab_{B,1}}{ \labg_{B,1}} {\cdots} \binom{ \lab_{B, q-1}}{ \labg_{B, q-1}} } {\binom{h_B}{s}} \right)^{\mkern-2muk+\delta}
\end{align}
where $\Labgvsetj$ and $\Labgvsets$ are the set of all possible compositions for vectors in $\GF{q}^j$ and in $\GF{q}^s$, respectively.
\end{corollary}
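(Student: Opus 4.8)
The plan is to reduce this corollary to Theorem~\ref{lemma:bound_multi_edge_bin_LT} by exploiting the linearity of expectation, exactly as Corollaries~\ref{corollary:rateless}--\ref{corollary:bound_rq} presumably follow from their respective fixed-code theorems. The starting point is the definition \eqref{eq:ensemble} of the expected failure probability, $\barPf = \Exp_{\precodegeneric}[\Pf(\precodegeneric)]$, where the expectation runs over the outer code $\precodegeneric$ drawn from the ensemble $\codensemble$. For each fixed outer code $\precodegeneric$ in the ensemble, Theorem~\ref{lemma:bound_multi_edge_bin_LT} already supplies the bound \eqref{eq:met_q_bin} on $\Pf(\precodegeneric)$. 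The key observation is that the right-hand side of \eqref{eq:met_q_bin} is a linear functional of the bivariate composition enumerator $\lenumbi$: the large parenthesized factor raised to the power $k+\delta$ depends only on the LT code (through $\Omega(x,z)$), on $q$, and on the composition pair $(\labv_A,\labv_B)$, and is therefore a deterministic constant independent of which code $\precodegeneric$ was drawn. The only code-dependent quantity is the coefficient $\lenumbi$ multiplying it.

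First I would apply the expectation operator $\Exp_{\precodegeneric}[\cdot]$ to both sides of the fixed-code inequality \eqref{eq:met_q_bin}. Since expectation is monotone, $\barPf = \Exp_{\precodegeneric}[\Pf(\precodegeneric)] \leq \Exp_{\precodegeneric}[\text{RHS of }\eqref{eq:met_q_bin}]$. Next I would push the expectation inside the finite double sum over $(\labv_A,\labv_B)$, again by linearity. In each summand only the factor $\lenanbi$ is random; replacing it by its expectation $\Exp_{\precodegeneric}[\lenumbi] = \lenumbiensemble$ (which is precisely the expected bivariate composition enumerator as defined in Section~\ref{sec:prelim}) yields the claimed bound \eqref{eq:met_q_bin_ensemble}. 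The sets $\Labgvsetj$ and $\Labgvsets$ of admissible compositions, the binomial factors, and the indicator $B(\labgv_A+\labgv_B)$ are all unchanged, since they do not depend on the particular outer code.

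The only point requiring genuine care—and the main obstacle—is justifying that the expectation of the \emph{product} $\lenumbi \cdot (\text{deterministic factor})^{k+\delta}$ equals the \emph{product} of $\lenumbiensemble$ with that same factor. This is immediate here precisely because the parenthesized base, and hence its $(k+\delta)$-th power, is a constant that does not vary across the ensemble; the randomness enters only multiplicatively through the enumerator count $\lenumbi$. Thus no independence assumption between the outer code and the LT code is needed, and there is no subtlety from Jensen's inequality because the nonlinearity (the $(k+\delta)$-th power) acts only on the deterministic LT factor, never on a random quantity. I would remark that this is the same mechanism underlying Corollaries~\ref{corollary:rateless}--\ref{corollary:bound_rq}, so the proof is a direct structural analogue and the sums remain finite throughout, legitimizing the interchange of expectation and summation.
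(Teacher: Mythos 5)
Your overall strategy---apply the fixed-code bound of Theorem~\ref{lemma:bound_multi_edge_bin_LT}, take the expectation over the outer code, and use linearity plus the fact that the parenthesized LT factor depends only on the composition pair $(\labv_A,\labv_B)$ and not on the particular code---is exactly the mechanism the paper uses (it states that this corollary's proof ``follows closely that of Corollary~1'' and omits it). However, there is one genuine gap: you assert that ``the only code-dependent quantity is the coefficient $\lenumbi$,'' and in particular that the exponent $k+\absoverhead$ is deterministic. For the parity-check-based ensembles $\codensemble$ considered in the paper, the dimension $k_{\precodegeneric}$ of an individual code is itself random and satisfies only $k_{\precodegeneric}\geq k$ (the random parity-check matrix need not have full rank). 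Applying Theorem~\ref{lemma:bound_multi_edge_bin_LT} to a fixed $\precodegeneric$ therefore produces the exponent $k_{\precodegeneric}+\absoverhead$, not $k+\absoverhead$, so the right-hand side of \eqref{eq:met_q_bin} is \emph{not} a purely linear functional of $\lenumbi$ with code-independent weights, contrary to your key observation. The paper itself flags this in the remark following Corollary~\ref{corollary:LB_ensembles}: the fixed-dimension case is trivial by linearity, but ``the proofs for the case where the outer code is drawn from a parity-check ensemble require some more care.''

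The missing step is short but necessary: the base of the $(k_{\precodegeneric}+\absoverhead)$-th power is a probability (a convex combination of terms $\Pr\{\Rrosymb=0\mid\cdot\}$), hence lies in $[0,1]$, so $k_{\precodegeneric}\geq k$ implies that raising it to the power $k_{\precodegeneric}+\absoverhead$ can only decrease it relative to the power $k+\absoverhead$. After this monotonicity substitution the weights become deterministic and your linearity-of-expectation argument goes through verbatim, replacing $\Exp_{\precodegeneric}[\lenumbi]$ by $\lenumbiensemble$. With that one sentence added, your proof matches the paper's.
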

Theorem~\ref{th:LB_01LT} can also be extended to Raptor code ensembles where the outer code is drawn from an ensemble of linear block codes all with the same block length.
\begin{corollary}\label{corollary:LB_ensembles}
Consider an ensemble of Raptor codes over $\mathbb{F}_q$ with a 0/1 LT code with degree distribution $\Omega$, where the outer code is drawn randomly from a code ensemble $\codensemble$ characterized by an expected composition enumerator $\lenum$ and an expected bicomposition enumerator $\avgJCEF_{\bkappa}$. The probability of decoding failure under ML erasure decoding, given that $m$ output symbols have been collected by the receiver, fulfills
\smallskip
\begin{align}
\barPf &\geq \frac{\bar{\theta}\, [ \bar{S}_1(m) ]^2}{(2-\bar{\theta})\bar{S}_1(m) + 2 \bar{S}_2(m)} + \frac{(1-\bar{\theta}) [\bar{S}_1(m)]^2}{(1-\bar{\theta})\bar{S}_1(m) + 2 \bar{S}_2(m)} \\
&\geq 
\bar{S}_1(m) - \bar{S}_2(m) \label{eq:LB_01LT_ensemble}
\end{align}
\medskip
where $\bar{\theta} = 2 \bar{S}_2(m) / \bar{S}_1(m) - \lfloor 2 \bar{S}_2(m) / \bar{S}_1(m) \rfloor$ and
\begin{equation}\label{eq:S1_01LT_ensemble_m}
\bar{S}_1(m) = \mkern-2mu \frac{1}{q-1} \mkern-8mu \sum_{\labv \neq \comp(\mathbf{0})} \mkern-10mu \lenumensemble \mkern-2mu  \left( \mkern-4mu  \sum_{j=1}^{\dmax}  \mkern-7mu \Omega_j \mkern-4mu  \sum_{ \labgv \in \Labgvsetj} \mkern-2mu B(\labgv) 
\frac{  \binom{ \lab_0}{ \labg_0} \binom{ \lab_1}{ \labg_1} \mkern-2mu{\cdots} \binom{ \lab_{q-1}}{ \labg_{q-1}} } {\binom{h}{j}} \mkern-4mu \right)^{\mkern-8mu m}
\end{equation}
\begin{align}\label{eq:S2_01LT_ensemble_m}
\bar{S}_2(m) =& \frac{1}{2(q-1)^2} \sum_{\bkappa \in \cK_{q,h}} \JCEF_{\bkappa} \notag \\ 
&\times \Bigg( \sum_{j=1}^{d_{\max}} \Omega_j \sum_{\bupsilon \in \mathsf{\Upsilon}_j} B(\bgamma_1(\bupsilon)) B(\bgamma_2(\bupsilon)) \frac{\prod_{s,t} {\kappa_{s,t} \choose \upsilon_{s,t}}} {{h \choose j}} \Bigg)^{m}.
\end{align}
In \eqref{eq:S2_01LT_ensemble_m}, $\mathsf{\Upsilon}_j$ is the set of all possible joint compositions for vector pairs in $\bbF{q}^j \times \bbF{q}^j$.\\ Moreover, in the particular case $q=2$ we have ${\bar{S}_1(m) = \sum_{l=1}^h \weoensemble_{\l} \pil^{m}}$ where $\pil$ is given by \eqref{eq:pl} (with $q=2$) and
\begin{equation}\label{eq:LB_binary_ensemble}
\bar{S}_2(m) = \mkern-2mu \frac{1}{2} \mkern-7mu \sum_{\btau  \in \cT_{2,h}} \mkern-15mu \avgJWEF_{\btau} \mkern-7mu \left( \mkern-3mu \sum_{j=1}^{d_{\max}} \Omega_j  \mkern-10mu \sum_{(i_1,i_2,i_3)} 
 \mkern-15mu \frac{{\tau_0 \choose j - i_{1} - i_{2} - i_{3}}{\tau_1 \choose i_{1}}{\tau_2 \choose i_{1}}{\tau_3 \choose i_{3}}}{{h \choose j}}  \mkern-4mu \right)^{\mkern-10mu m} .
\end{equation}
%
In \eqref{eq:LB_binary_ensemble}, $\avgJWEF_{\btau}$ is the average bicomposition enumerator of the outer code ensemble. Furthermore, the most inner sum is over all integer triplets $(i_1,i_2,i_3)$ such that $i_1 + i_2+i_3=j$; both $i_1+i_3$ and $i_2+i_3$ are even; ${0 \leq i_1 \leq \min\{\tau_1,j\}}$, ${0 \leq i_2 \leq \min\{\tau_2,j\}}$, ${0 \leq i_3 \leq \min\{\tau_3,j\}}$.
\end{corollary}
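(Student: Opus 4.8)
The plan is to lift Theorem~\ref{th:LB_01LT} to the ensemble by exploiting the fact that the Dawson--Sankoff bound arises as the optimum over $r$ of the \emph{linear} family of lower bounds in \eqref{eq:pre_dawson_sankoff}. First I would recall, from the proof of Theorem~\ref{th:LB_01LT}, that for a fixed outer code $\precodegeneric \in \codensemble$ the decoding failure event is the union $\bigcup_a A_a$ over the $M_{q,k_\precodegeneric}-1$ representative classes of nonzero intermediate codewords, each $A_a$ being the event that the corresponding class is mapped to the all-zero word over the $m$ received positions. The quantities $S_1(\precodegeneric)$ and $S_2(\precodegeneric)$ associated with this union are exactly the right-hand sides of \eqref{eq:UB_01LT} and \eqref{eq:S2_01LT} (with exponent $m$), and, crucially, they are \emph{linear} in the composition enumerator $\lenum$ and in the bicomposition enumerator $\JCEF_{\bkappa}$ of $\precodegeneric$.

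Next I would fix an integer $r$ and apply \eqref{eq:pre_dawson_sankoff}. Since every code in $\codensemble$ has dimension $k_\precodegeneric \geq k$, the number of events satisfies $M_{q,k_\precodegeneric}-1 \geq M_{q,k}-1$, so \eqref{eq:pre_dawson_sankoff} holds \emph{uniformly} over the ensemble for each $r \in \{1,\dots,M_{q,k}-1\}$, giving
\[
\Pf(\precodegeneric) \geq \frac{2}{r+1}\,S_1(\precodegeneric) - \frac{2}{r(r+1)}\,S_2(\precodegeneric).
\]
Taking the expectation over $\precodegeneric\in\codensemble$ and using linearity of expectation together with the linearity noted above, I would identify $\Exp[S_1(\precodegeneric)] = \bar{S}_1(m)$ and $\Exp[S_2(\precodegeneric)] = \bar{S}_2(m)$, obtained from \eqref{eq:UB_01LT} and \eqref{eq:S2_01LT} by replacing $\lenum$ with $\lenumensemble$ and $\JCEF_{\bkappa}$ with $\avgJCEF_{\bkappa}$; these are precisely \eqref{eq:S1_01LT_ensemble_m} and \eqref{eq:S2_01LT_ensemble_m}. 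Hence $\barPf \geq \frac{2}{r+1}\bar{S}_1(m) - \frac{2}{r(r+1)}\bar{S}_2(m)$ for every admissible $r$.

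Finally I would maximize the right-hand side over $r$ exactly as in the passage from \eqref{eq:pre_dawson_sankoff} to \eqref{eq:dawson_sankoff}: setting $\bar{\theta} = 2\bar{S}_2(m)/\bar{S}_1(m) - \lfloor 2\bar{S}_2(m)/\bar{S}_1(m)\rfloor$, the optimizing $r$ is $\lfloor 2\bar{S}_2(m)/\bar{S}_1(m)\rfloor$ (or the adjacent integer), yielding the leftmost inequality of \eqref{eq:LB_01LT_ensemble}; the rightmost inequality $\geq \bar{S}_1(m)-\bar{S}_2(m)$ is the $r=1$ instance, equivalently the already-recorded fact that the Dawson--Sankoff bound dominates $S_1-S_2$. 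The binary case then follows by specialization: invoking Remark~\ref{remark:binary_equiv} and Lemma~\ref{lemma:no_tilde}, joint compositions collapse to joint weights, $\cK_{q,h}$ becomes $\cT_{2,h}$, $\avgJCEF_{\bkappa}$ becomes $\avgJWEF_{\btau}$, and $\bar{S}_1(m)$ reduces to $\sum_{l=1}^{h}\weoensemble_{\l}\,\pil^{m}$, producing \eqref{eq:LB_binary_ensemble}.

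The main obstacle I anticipate is the bookkeeping needed to make the range of $r$ uniform across the ensemble: because the codes in $\codensemble$ need not share a common dimension, the number of Bonferroni events varies from code to code, and one must check that the $r$ achieving the Dawson--Sankoff optimum is admissible for \emph{every} code simultaneously. This is resolved by the two facts used above, namely that the event count is at least $M_{q,k}-1$ for all codes and that the optimal $r$ is of order $\bar{S}_2(m)/\bar{S}_1(m)$, which is negligible compared with the exponentially large $M_{q,k}-1$. An alternative, equivalent route is to observe that the Dawson--Sankoff bound, being a pointwise maximum of affine functions of $(S_1,S_2)$, is convex, so Jensen's inequality pushes the expectation inward in the favorable direction; the same range caveat on $r$ reappears there and is handled identically.
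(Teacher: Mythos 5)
Your proposal is correct and follows essentially the same route as the paper's proof: apply the linear pre-Dawson--Sankoff bound $\Pf(\precodegeneric) \geq \frac{2}{r+1}S_1(\precodegeneric,m) - \frac{2}{r(r+1)}S_2(\precodegeneric,m)$ for fixed $r$, take the expectation over the outer code ensemble using linearity of $S_1$ and $S_2$ in the (bi)composition enumerators, and then maximize over $r$ exactly as in the original derivation of \eqref{eq:dawson_sankoff} from \eqref{eq:pre_dawson_sankoff}, with $r=1$ giving the looser $\bar{S}_1(m)-\bar{S}_2(m)$ bound and the binary case following by the joint-composition/joint-weight correspondence. Your extra care about making the admissible range of $r$ uniform over codes of varying dimension $k_\precodegeneric \geq k$ is a legitimate point the paper passes over quickly, and your resolution of it is sound.
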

\begin{remark}
Note that the bounds provided in Corollaries \eqref{corollary:rateless} to \eqref{corollary:LB_ensembles} hold also for Raptor code ensembles based on  outer codes of fixed dimension $k$ (e.g., systematic-form generator-based outer code ensembles). The proof for this case is trivial, and follows from the linearity of the expectation. The proofs for the case where the outer code is drawn from a parity-check ensemble require some more care, as illustrated in the following section.
\end{remark}
\section{Derivation of the Bounds}\label{sec:proofs}
This section contains the proofs of the results presented in Section~\ref{sec:bounds}.

\subsubsection{Proof of Theorem~\ref{theorem:rateless}}

The proof follows the same approach as for \cite[Theorem~1]{schotsch:2013}. An \ac{ML} decoder solves the linear system of equations in \eqref{eq:sys_eq}. Decoding fails whenever the system does not admit a unique solution, that is, if and only if  $\rank(\Grx)<k$, i.e., if
${\exists\,  \vecu \in \GF{q}^k \backslash \{ \textbf{0}\} \,\, \text{s.t.} \,\, \vecu \Grx = \textbf{0}}$.
For any two  vectors $\vecu \in \GF{q}^k$ and $\vecv \in \GF{q}^h$, we define  $E_{\vecu}$ as the event $\vecu\Gp \GrxLT = \mathbf{0}$, and $E_{\vecv}$ as the event  $\vecv \GrxLT = \mathbf{0}$. We have
\begin{align}
\Pf & =   \Pr\left\{ \bigcup_{\vecu \in \GF{q}^k \backslash \{ \textbf{0}\}} E_{\vecu}  \right\}  = \Pr\left\{ \bigcup_{\vecv \in \precodegeneric \backslash \{ \textbf{0}\} } E_{\vecv} \right\}
\label{eq:existence}
\end{align}
where we made use of the fact that due to outer code linearity, the all zero intermediate word is only generated by the all zero input vector.

Due to linearity of the outer code, if $\vecv \in \precodegeneric$, then $\beta \vecv \in \precodegeneric$ for any $\beta \in \GF{q} \backslash \{ 0\}$.
Furthermore, for any $\beta \in \GF{q} \backslash \{0\}$, $\vecv\, \GrxLT  = 0$ if and only if $\beta \vecv\, \GrxLT  = 0$.
Thus,
for any two outer codewords $\mathbf{v}_1$ and $\mathbf{v}_2$ such that $\mathbf{v}_1 = \beta \mathbf{v}_2$ for some $\beta \in \GF{q} \setminus \{0\}$, the event $E_{\mathbf{v}_1}$ holds if and only if $E_{\mathbf{v}_2}$ does, and we have ${E_{\mathbf{v}_1} \cup E_{\mathbf{v}_2} = E_{\mathbf{v}_1}}$.
If we take a union bound on \eqref{eq:existence}, this allows us dividing it by a factor $q-1$, leading to
\begin{equation}\label{eq:bound_v}
\Pf \leq \frac{1}{q-1} \sum_{\vecv \in  \mathbb \precodegeneric \backslash \{ \textbf{0}\} }  \Pr \left\{ E_{\vecv} \right\}.
\end{equation}
Defining $\precodegeneric_l$ as 
$\precodegeneric_l = \left\{ \vecv \in \precodegeneric : \wh(\vecv) = l \right\}$, the expression can be developed as
\begin{equation}
\Pf \leq \mkern-3mu  \frac{1}{q-1} \mkern-3mu \sum_{l=1}^{h}  \mkern-3mu \sum_{\vecv \in  \mathbb \precodegeneric_l } \mkern-3mu \Pr \mkern-4mu \left\{ E_{\vecv} \right\} = \frac{1}{q-1} \sum_{l=1}^{h} \weo_{\l} \Pr \left\{ E_{\vecv} | \wh(\vecv)=l  \right\}
\end{equation}
where we made use of the fact that, since the neighbors of an output symbol are chosen uniformly at random, $\Pr \left\{ E_{\vecv} \right\}$ does not depend on the specific vector $\vecv$, but only on its Hamming weight.

Observing that the output symbols are independent of each other, we have
\[
\Pr \left\{ E_{\vecv} | \wh(\vecv)=l \right\} = \pil^{k+\absoverhead}
\]
where $\pil = \Pr \{ y=0 | \wh(\vecv)=l\}$.

Let $J$ and $I$ be discrete random variables representing the number of intermediate symbols which are linearly combined to generate the generic output symbol $y$, and the number of non-zero such intermediate symbols, respectively. Note that $I \leq \min \{ J, \wh(\vecv) \}$.
An expression for $\pil$ may be obtained as
\begingroup
\allowdisplaybreaks
\begin{align*}
\pil &= \sum_{j=1}^{\dmax} \Pr \{ y=0 | \wh(\vecv)=l,J=j \} \Pr \{ J=j | \wh(\vecv)=l \} \\
      &\stackrel{(\mathrm{a})}{=} \sum_{j=1}^{\dmax} \Omega_j \Pr \{ y=0 | \wh(\vecv)=l,J=j \} \\
      &\stackrel{(\mathrm{b})}{=} \sum_{j=1}^{\dmax} \mkern-3mu \Omega_j \mkern-14mu\sum_{i=0}^{\min\{j,l\}} \mkern-14mu \Pr \{ y=0 | I=i \} \! \Pr \{ I=i | \wh(\vecv)=l, \mkern-2mu J=j \}
\end{align*}
\endgroup
where $(\mathrm{a})$ is due to
\[
\Pr \{ J=j | \wh(\vecv)=l \} = \Pr \{ J=j \} = \Omega_j
\]
 and $(\mathrm{b})$  to
\[
 \Pr \{ y=0 | \wh(\vecv)=l, J=j, I=i \} = \Pr \{ y=0 | I=i \}.
\]
  Letting \mbox{$\pifroml = \Pr \{ I=i | \wh(\vecv)=l, J=j \}$}, since the $j$ intermediate symbols are chosen uniformly at random by the LT encoder we have
\begin{align}\label{eq:neighbors}
\pifroml = \frac{ \binom{\l}{i} \binom{h-\l}{j-i} } { \binom{h}{j}} \, .
\end{align}
Let us denote $\Pr \{ y=0 | I=i \}$ by $\qi$ and let us observe that the non-zero elements of $\GrxLT$ are \ac{i.i.d.} and uniformly drawn in $\GF{q} \setminus \{0\}$. On invoking Lemma~\ref{lemma:galois} in  Appendix~ \ref{sec:appendix_sum},\footnote{The proof in Appendix~\ref{sec:appendix_sum} is only valid for fields with characteristic $2$, the case of most interest for practical purposes. The proof of the general case is a simple extension of Lemma~\ref{lemma:galois}.} we have
\begin{align}\label{eq:sum}
\qi =\frac{1}{q} \left( 1 + \frac{(-1)^i}{(q-1)^{i-1}}\right).
\end{align}
We conclude that $\pil$ is given by
\begin{align}
\pil &=  \sum_{j=1}^{\dmax} \Omega_j \!\!  \sum_{i=0}^{\min\{j,l\}} \!\! \pifroml  \, \qi
\end{align}
where $\pifroml$ and $\qi$ are given by \eqref{eq:neighbors} and \eqref{eq:sum}, respectively. Expanding this expression and rewriting it using Krawtchouk polynomials and making use of the Chu-Vandermonde identity, one obtains \eqref{eq:pl}.\footnote{The expression of $\pil$ was derived in \cite{schotsch:2013}, where an upper bound on the performance of LT codes was derived. However, the derivation of $\pil$ in \cite{schotsch:2013} is different from the one we provide in this paper. }
\null\hfill$\blacksquare$

\medskip
We remark that \eqref{eq:bound_v} holds not only for Raptor codes over $\GF{q}$, but also for the other three considered constructions. Hence, \eqref{eq:bound_v} represents the starting point in all subsequent proofs.

\subsubsection{Proof of Theorem~\ref{theorem:bound_multi_edge}}

For this construction we may develop \eqref{eq:bound_v} as
\begin{align}
\Pf  & \leq  \frac{1}{q-1}  \sum_{\substack{0 \leq l \leq h_A \\ 0 \leq t \leq h_B  \\ l+t > 0 } } \sum_{\vecv \in \precodegeneric_{l,t} } \Pr \left\{ E_{\vecv} \right\}
\label{eq:existence_multi_edge}
\end{align}
where $\precodegeneric_{l,t}$ is the set of codewords in $\precodegeneric$ with $l$ non-zero elements in $\vecv_A$ and $t$ non-zero elements in $\vecv_B$, formally $\precodegeneric_{l,t} = \left\{ \vecv = ( \vecv_A, \vecv_B) \in \precodegeneric : \hw(\vecv_A)=l, \hw(\vecv_B)=t\right\}$. Making use of the bivariate weight enumerator of the outer code, we can rewrite \eqref{eq:existence_multi_edge} as
\begin{equation}
\Pf  \leq \frac{1}{q-1} \sum_{ \substack{0 \leq l \leq h_A \\ 0 \leq t \leq h_B  \\ l+t > 0 }} \weo_{l,t} \Pr\{E_{\vecv} | \wh(\vecv_A) =l, \wh(\vecv_B)=t \}
\end{equation}
where we made use of the fact that since the neighbors of an output symbol are chosen uniformly at random, $\Pr \left\{ E_{\vecv} \right\}$ does not depend on the particular vector $\vecv$, but only on its split Hamming weight, $\wh(\vecv_A) =l$ and $\wh(\vecv_B)=t$.

Since output symbols are generated independently of each other
\[
 \Pr\{E_{\vecv} | \wh(\vecv_A) =l, \wh(\vecv_B)=t \} = \pilt ^{k+\absoverhead}
\]
where $\pilt = \Pr \{ y=0 | \wh(\vecv_A) =l, \wh(\vecv_B)=t \}$.

Let $J$ and $I$ be two discrete random variables representing, respectively, the number of intermediate symbols of type $A$ which are linearly combined to generate output symbol $y$, and the number of non-zero such intermediate symbols. Similarly, let $S$ and $D$ be two discrete random variables representing, respectively, the number of intermediate symbols of type $B$ which are linearly combined to generate output symbol $y$, and the number of non-zero such intermediate symbols.
Note that we have $I \leq  \min \{ J, \wh(\vecv_A) \}$ and $D \leq   \min \{ S, \wh(\vecv_B) \}$.
The expression of $\pilt$ can be obtained as

\vspace{-6mm}
\begingroup
\allowdisplaybreaks
\begin{align}
   \pilt &=  \mkern-3mu \sum_{j=1}^{h_A} \mkern-3mu \sum_{s=1}^{h_B} \mkern-3mu \Pr\{y=0 | \wh(\vecv_A)=l, \wh(\vecv_B)=t, J=j, S=s\}\\[-2mm]
  & \mathrel{\phantom{=}}\times \Pr\{ J=j, S=s | \wh(\vecv_A)=l, \wh(\vecv_B)=t\}\\
  & \mkern-8mu \stackrel{(\mathrm{a})}{=}  \sum_{j=1}^{h_A} \sum_{s=1}^{h_B} \Omega_{j,s}  \mbox{\small $\Pr\{y\mkern-1mu=\mkern-1mu 0 | \wh(\vecv_A)\mkern-1mu=\mkern-1mul, \wh(\vecv_B)\mkern-1mu=\mkern-1mut, J\mkern-1mu=\mkern-1muj, S\mkern-1mu=\mkern-1mus \mkern-1mu \} $}
   \\
  & \mkern-8mu \stackrel{(\mathrm{b})}{=} \sum_{j=1}^{h_A} \sum_{s=1}^{h_B} \Omega_{j,s} \sum_{i=0}^{\min(j,l)} \sum_{d=0}^{\min(s,t)}  \Pr\{ y=0 | I=i, D=d\}   \\
  &  \mathrel{\phantom{=}}\times \mbox{ \small $\Pr\{ I=i, D=d | \wh(\vecv_A)=l, \wh(\vecv_B)=t, J=j, S=s\}$ } \\
  & \mkern-8mu \stackrel{(\mathrm{c})}{=}\sum_{j=1}^{h_A} \sum_{s=1}^{h_B} \Omega_{j,s} \sum_{i=0}^{\min(j,l)} \sum_{d=0}^{\min(s,t)}  \Pr\{ y=0 | I=i, D=d\}   \\
  & \mathrel{\phantom{=}} \mkern-2mu \times \mbox{ \small $\Pr\{ I\mkern-1mu=\mkern-1mui | \wh(\vecv_A)\mkern-2mu=\mkern-2mu l, J\mkern-2mu =\mkern-2mu j\} \mkern-2mu
  \Pr\{ D\mkern-2mu =\mkern-2mu d | \wh(\vecv_B)\mkern-2mu = \mkern-2mu t, S\mkern-2 mu=\mkern-2mu s\} $ } 
  \end{align}
 \endgroup
where $(\mathrm{a})$ is due to
\begin{equation}
   \mbox{\small $ \Pr\{ J\mkern-2mu =\mkern-2mu j, S\mkern-2mu = \mkern-2mu s | \wh(\vecv_A)\mkern-1mu =\mkern-1mu l, \wh(\vecv_B)\mkern-1mu =\mkern-1mu t\mkern-1mu \} \mkern-2mu  =\mkern-2mu \Pr\{ J\mkern-1mu =\mkern-1mu j, S\mkern-1mu =\mkern-1mu s \mkern-1mu \} \mkern-2mu=  \mkern-2mu\Omega_{j,s} $}
\end{equation}
$(\mathrm{b})$ is  due to
\begin{align}
  & \mbox{ \small $\Pr \{ y=0 | \wh(\vecv_A)=l, \wh(\vecv_B)=t, J=j, S=s, I=i , D=d\} $} \\
  & \qquad \qquad  \qquad \qquad   = \Pr \{ y=0 | I=i , D=d\}
\end{align}
and $(\mathrm{c})$ follows from independence of $I$ and $D$.
Let us denote ${\Pr \{ y=0 | I=i, D=d \}}$ by $\qiAB$. Since the non-zero elements of $\GrxLT$ are \ac{i.i.d.} and uniformly drawn in $\GF{q} \setminus \{0\}$, on invoking Lemma~\ref{lemma:galois} in the Appendix we have
\begin{align}\label{eq:sumMET}
\qiAB =\frac{1}{q} \left( 1 + \frac{(-1)^{i+d}}{(q-1)^{i+d-1}}\right).
\end{align}
Similarly, letting $\pifromlA = \Pr\{ I=i | \wh(\vecv_A)=l, J=j\}$, we have
\[
\pifromlA = \frac{ \binom{\l}{i} \binom{h_A-\l}{j-i} } { \binom{h_A}{j}}.
\]
If we now define $\pifromlB =  \Pr\{ D=d | \wh(\vecv_B)=t, S=s\}$ and use the same reasoning for the intermediate symbols of type $B$, we have
\[
\pifromlB = \frac{ \binom{t}{d} \binom{h_B-t}{s-d} } { \binom{h_B}{s}}.
\]
Hence, the expression of $\pilt$ is given by
\[
\pilt = \sum_{j=1}^{h_A} \sum_{s=1}^{h_B} \Omega_{j,s} \sum_{i=0}^{\min(j,l)} \sum_{d=0}^{\min(s,t)} \qiAB \, \pifromlA \, \pifromlB
\]
Expanding and rewriting this expression using Krawtchouk polynomials yields \eqref{eq:pilt}.
\null\hfill$\blacksquare$

\subsubsection{Proof of Theorem~\ref{theorem:bound_LTbin}}

Starting again from  \eqref{eq:bound_v} and
defining $\precodegeneric_\labv$ as the set of codewords  with composition $\labv$ in the outer code $\precodegeneric$, i.e., $\precodegeneric_\labv = \left\{ \vecv \in \precodegeneric : \comp(\vecv) = \labv \right\}$,
we have
\begin{align}
\Pf   &\leq \frac{1}{q-1} 
 \sum_{\labv \neq \comp(\mathbf{0}) } \,\,  \sum_{\vecv \in  \mathbb \precodegeneric_\labv } \Pr \left\{ E_{\vecv}  \right\}  \\
& = \frac{1}{q-1} \sum_{\labv \neq \comp(\mathbf{0})}  \lenum  \, \Pr \left\{ E_{\vecv} | \comp(\vecv) = \labv \right\}
\label{eq:existencerq1}
\end{align}
where we made use of the fact that since the neighbors of an output symbol are chosen uniformly at random,
any two codewords having the same composition are characterized by the same probability $\Pr \left\{ E_{\vecv} \right\}$.

Due to independence among the output symbols, we have
\begin{align}
\Pr \left\{ E_{\vecv} | \comp(\vecv) = \labv \right\}  & =  \left(   \Pr \left\{ \Rrosymb=0 | \comp(\vecv) = \labv \right\}  \right)^{k+\delta}.
\label{eq:rq_evf}
\end{align}
Let us now introduce again an auxiliary discrete random variable $J$  to represent the output symbol degree, i.e., the number of intermediate symbols which are summed to generate the generic output symbol $y$. We have

\begin{align}
 \Pr & \left\{ \Rrosymb=0 | \comp(\vecv) = \labv \right\}  =   \sum_{j=1}^{\dmax}  \Omega_j  \Pr \left\{ \Rrosymb=0 | \comp(\vecv) = \labv, J=j \right\}.
\label{eq:rq_y0}
\end{align}

Next, let us introduce the random vector $\Labgv$ representing the composition of the $j$ intermediate output symbols that are added to obtain output symbol $\Rrosymb$.
Recalling that $\Labgvsetj$ is the set of possible compositions of length-$j$ vectors, we can recast $\Pr \left\{ \Rrosymb=0 |  \comp(\vecv) = \labv, J=j \right\}$  as
\begin{align}
\Pr \{ \Rrosymb& =0  | \comp(\vecv) = \labv, J=j \}  \\
&= \sum_{\labgv \in \Labgvsetj}  \Pr \left\{ \Rrosymb=0 | \comp(\vecv) = \labv, J=j, \Labgv=\labgv \,\right\} \\
& \qquad \, \, \, \, \, \times
\Pr \left\{\Labgv=\labgv \,|\comp(\vecv) = \labv , J=j \right\} \\
&= \sum_{\labgv \in \Labgvsetj}  \Pr \left\{ \Rrosymb=0 |\Labgv=\labgv \, \right\}
\Pr \left\{\Labgv=\labgv \, |\comp(\vecv) = \labv , J=j \right\}\\[-4mm]
&= \sum_{ \labgv \in \Labgvsetj}  B(\labgv\,) \, \Pr \left\{\Labgv=\labgv\, |\comp(\vecv) = \labv , J=j \right\}
\label{eq:rq_y0_g}
\end{align}
where the indicator function $B$ has been defined in Section~\ref{sec:prelim}.
The term $\Pr \left\{\Labgv=\labgv \,  | \comp(\vecv) = \labv , J=j \right\}$ can easily be computed making use of
a multivariate hypergeometric distribution. In particular:
\begin{equation}\label{eq:hyperg}
  \Pr\{\Labgv=\labgv \,  | \comp(\vecv) = \labv,  J=j  \} = \frac{  \binom{ \lab_0}{ \labg_0} \binom{ \lab_1}{ \labg_1} \cdots \binom{ \lab_{q-1}}{ \labg_{q-1}} } {\binom{h}{j}}.
\end{equation}
\null\hfill$\blacksquare$

\subsubsection{Proof of Theorem~\ref{lemma:bound_multi_edge_bin_LT}}

The proof tightly follows the proofs of  Theorems~\ref{theorem:bound_multi_edge} and \ref{theorem:bound_LTbin}.
Let us start by defining $\precodegeneric_{\labv_A, \labv_B}$ as the set of codewords in $\precodegeneric$ where $\vecv_A$ and $\vecv_B$ have, respectively, composition $\labv_A$ and $\labv_B$, formally $\precodegeneric_{\labv_A, \labv_B} = \left\{ \vecv = ( \vecv_A, \vecv_B) \in \precodegeneric : \comp(\vecv_A)=\labv_A, \comp(\vecv_B)=\labv_B \right\}$. From \eqref{eq:bound_v} we obtain
\begingroup
\allowdisplaybreaks
\begin{align}
\Pf  &\leq \frac{1}{q-1}
\sum_{\substack{\labv_A, \labv_B \\ \labv_A + \labv_B \neq \comp(\mathbf{0})}} \,\,  \sum_{\vecv \in  \mathbb \precodegeneric_{\labv_A, \labv_B} }  \Pr \left\{  E_{\vecv}  \right\} \\
&=  \frac{1}{q-1} \mkern-20mu \sum_{\substack{\labv_A, \labv_B \\ \labv_A + \labv_B \neq \comp(\mathbf{0})}}
\mkern-20mu \lenumbi  \, \Pr \left\{ E_{\vecv} | \comp(\vecv_A) = \labv_A , \comp(\vecv_B) = \labv_B \right\}.
\label{eq:existencerqmet}
\end{align}
\endgroup
Again we exploited the fact that since the neighbors of an output symbol are chosen uniformly at random, $\Pr \left\{ E_{\vecv} \right\}$
depends only on the split composition of $\vecv$, $\comp(\vecv_A) = \labv_A$ and  $\comp(\vecv_B) = \labv_B$.

Due to independence among the output symbols, we have
\begin{align}
\Pr &\left\{ E_{\vecv} | \comp(\vecv_A) = \labv_A, \comp(\vecv_B) = \labv_B \right\}   \\
& \qquad \qquad  =
\left(   \Pr \left\{ \Rrosymb=0 |\comp(\vecv_A) = \labv_A, \comp(\vecv_B) = \labv_B\right\}  \right)^{k+\delta}.
\label{eq:rq_evf_met}
\end{align}
Introducing the two auxiliary discrete random variables, $J$ and $S$ representing, respectively, the number of intermediate symbols of type $A$ and $B$ which are summed to generate the generic output symbol $y$, we have
\begin{align}
 &\Pr  \left\{ \Rrosymb=0 | \comp(\vecv_A) = \labv_A, \comp(\vecv_B) = \labv_B \right\}  \\
 & =  \mkern-4mu \sum_{j=1}^{h_A}  \mkern-2mu \sum_{s=1}^{h_B} \mkern-8mu  \mbox{ \small $ \Omega_{j,s}  \Pr \left\{ \Rrosymb=0 | \comp(\vecv_A) = \labv_A, \comp(\vecv_B) = \labv_B, J=j , S= s \right\}$}.
\end{align}
Next, let the two random vectors $\Labgv_A$ and $\Labgv_B$ represent, respectively,  the composition of the $j$ intermediate symbols of type $A$ and $s$ intermediate symbols of type $B$ that are added to obtain output symbol $\Rrosymb$. Let us also recall that $\Labgvsetj$ and $\Labgvsets$ represent the set of possible compositions of length-$j$ and $s$ vectors, respectively.
We can recast the rightmost term in the last expression  as
\begingroup
\allowdisplaybreaks
\begin{align}
&  \Pr  \{  \Rrosymb=0 | \comp(\vecv_A) = \labv_A, \comp(\vecv_B) = \labv_B, J=j , S= s  \}  \\
&= \sum_{\labgv_A \in \Labgvsetj} \sum_{\labgv_B \in \Labgvsets} \Pr \{ \Rrosymb=0 |\comp(\vecv_A) = \labv_A, \comp(\vecv_B) = \labv_B, J=j , \\[-2mm]
& \mkern137mu S= s,
 \Labgv_A=\labgv_A,  \Labgv_B=\labgv_B \} \\
& \mathrel{\phantom{=}} \mbox{ \small $ \times \mkern-2mu  \Pr \mkern-2mu \left\{\Labgv_A \mkern-2mu = \mkern-2mu \labgv_A , \Labgv_B \mkern-2mu = \mkern-2mu \mkern-2mu \labgv_B |\comp(\vecv_A) \mkern-2mu = \mkern-2mu \labv_A, \comp(\vecv_B)\mkern-2mu  = \mkern-2mu \labv_B, J\mkern-2mu = \mkern-2mu j , S\mkern-2mu = \mkern-2mu s  \mkern-2mu \right\} $} \\[-2mm]
&= \sum_{\labgv_A \in \Labgvsetj} \sum_{\labgv_B \in \Labgvsets}   \Pr \left\{ \Rrosymb=0 |\Labgv_A=\labgv_A , \Labgv_B=\labgv_B \right\} \\
& \mathrel{\phantom{=}} \mbox{ \small $ \times \mkern-2mu  \Pr \mkern-2mu \left\{\Labgv_A \mkern-2mu = \mkern-2mu \labgv_A , \Labgv_B \mkern-2mu = \mkern-2mu \mkern-2mu \labgv_B |\comp(\vecv_A) \mkern-2mu = \mkern-2mu \labv_A, \comp(\vecv_B)\mkern-2mu  = \mkern-2mu \labv_B, J\mkern-2mu = \mkern-2mu j , S\mkern-2mu = \mkern-2mu s  \mkern-2mu \right\} $} \\[-2mm]
&= \mkern-2mu \sum_{\labgv_A \in \Labgvsetj} \mkern-2mu \sum_{\labgv_B \in \Labgvsets} \mkern-6mu  B(\labgv_A+ \labgv_B) \,
\mkern-1mu \Pr \left\{\Labgv_A = \labgv_A |\comp(\vecv_A) = \labv_A , J=j \right\}
 \mkern-1mu \\[-2mm]
 & \mathrel{\phantom{=}} \times \Pr \left\{\Labgv_B = \labgv_B |\comp(\vecv_B) = \labv_B , S=s \right\}.
\end{align}
\endgroup
The term $\Pr \left\{\Labgv_A=\labgv_A |\comp(\vecv_A) = \labv_A , J=j \right\}$ can easily be computed making use of
a multivariate hypergeometric distribution. Concretely, we have
\begin{align}\label{eq:hyper_metg}
  \Pr\{\Labgv_A=\labgv_A  |\comp(\vecv_A) = \labv_A,  J=j  \} &= \frac{  \binom{ \lab_{A,0}}{ \labg_{A,0}} \binom{ \lab_{A,1}}{ \labg_{A,1}} \cdots \binom{ \lab_{A, q-1}}{ \labg_{A, q-1}} } {\binom{h_A}{j}}
\end{align}
and the same holds for
\begin{align}\label{eq:hyperg_metB}
  \Pr\{\Labgv_B=\labgv_B  |\comp(\vecv_B) = \labv_B,  S=s  \} &= \frac{  \binom{ \lab_{B,0}}{ \labg_{B,0}} \binom{ \lab_{B,1}}{ \labg_{B,1}}\mkern-2mu {\cdots} \binom{ \lab_{B, q-1}}{ \labg_{B, q-1}} } {\binom{h_B}{s}}.
\end{align}
\null\hfill$\blacksquare$

\subsubsection{Proof of Theorem~\ref{th:LB_01LT}}

Applying to the outer codebook the indexing and partition described in Section~\ref{sec:results_joint} we can~write
\begingroup
\allowdisplaybreaks
\begin{align*}
\Pf &= \Pr \left\{ \bigcup_{\vecv \in \precodegeneric \setminus \{\mathbf{0}\} }\, E_{\vecv} \right\} \stackrel{\mathrm{(a)}}{=} \Pr \left\{ \bigcup_{a=1}^{M_{q,k}-1}\, E_{\tilde{\vecv}_a} \right\} \notag \\
&\stackrel{\mathrm{(b)}}{\geq} \sum_{a=1}^{M_{q,k}-1} \Pr \left\{E_{\tilde{\vecv}_a}\right\} - \sum_{0<a<b<M_{q,k}} \Pr \left\{E_{\tilde{\vecv}_a} \cap E_{\tilde{\vecv}_b}\right\} \notag \\
&\stackrel{\mathrm{(c)}}{=} \sum_{a=1}^{M_{q,k}-1} \Pr \left\{E_{\tilde{\vecv}_a}\right\}- \frac{1}{2} \sum_{(s,t)\in\tilde{\cD}_{q,k}} \Pr \left\{E_{\vecv_s} \cap E_{\vecv_t}\right\} \notag \\
&\stackrel{\mathrm{(d)}}{=} \frac{1}{q-1} \mkern-4mu \sum_{\vecv \in \mathcal{C} \setminus \{\mathrm{0}\}} \mkern-10mu \Pr (E_{\vecv}) - \frac{1}{2(q-1)^2} \mkern-12mu \sum_{(s,t)\in\cD_{q,k}} \mkern-14mu \Pr \{E_{\vecv_s} \cap E_{\vecv_t}\}
\end{align*}
\endgroup
where: $\mathrm{(a)}$ is due to the fact that, if two codewords $\vecv$ and $\bfz$ belong to the same part $\mathcal{P}_a$ (i.e., they are linearly dependent), then $E_{\vecv}$ occurs if and only if $E_{\bfz}$ occurs; $\mathrm{(b)}$ is a direct application of degree-two Bonferroni inequality \eqref{eq:order2_bonferroni}; $\mathrm{(c)}$ follows from the definition of $\tilde{\cD}_{q,k}$ given in Section~\ref{sec:results_joint} and from $\Pr\{E_{\vecv_s} \cap E_{\vecv_t} \} = \Pr\{E_{\vecv_t} \cap E_{\vecv_s}\}$ for any $s$ and $t$; $\mathrm{(d)}$ is due the definition of $\cD_{q,k}$ given in Section~\ref{sec:results_joint} and to the fact that, if $\vecv_1$ and $\vecv_2$ belong to some part $\mathcal{P}_a$ and $\bfz_1$ and $\bfz_2$ belong to another part $\mathcal{P}_b$, then $E_{\vecv_1} \cap E_{\bfz_1}$ occurs if and only if $E_{\vecv_2} \cap E_{\bfz_2}$ occurs. The last obtained expression is a degree-two Bonferroni lower bound for $\Pf$ in the form $\Pf \geq S_1 - S_2$. The term $S_1$ has been developed in Theorem~\ref{theorem:bound_LTbin} and equals the right-hand side of \eqref{eq:UB_01LT}. The term $S_2$ can be further developed as
\begin{align}\label{eq:lower_bound_q_LT01}
S_2 &= \frac{1}{2(q-1)^2} \sum_{(s,t)\in\cD_{q,k}} \Pr \{E_{\vecv_s} \cap E_{\vecv_t}\}\\
& \stackrel{\mathrm{(e)}}{=} \mkern-4mu \frac{1}{2(q-1)^2} \mkern-10mu \sum_{\bkappa \in \cK_{q,h}} \mkern-14mu \JCEF_{\bkappa} \mkern-4mu  \Pr \{E_{\vecv} \cap E_{\bfz} | \jc(\vecv,\bfz) = \bkappa\} \notag \\
&\stackrel{\mathrm{(f)}}{=} \mkern-4mu \frac{1}{2(q-1)^2} \mkern-10mu \sum_{\bkappa \in \cK_{q,h}} \mkern-14mu  \JCEF_{\bkappa} \mkern-4mu \left( \Pr \{ \{ y_{\vecv}=0\} \mkern-4mu \cap \mkern-4mu \{y_{\bfz}=0\} |\, \jc(\vecv,\bfz\} \right)^{\mkern-2mu k+\delta} \mkern-3mu.
\end{align}
In the previous equation array, $\mathrm{(e)}$ holds since the probability $\Pr \{E_{\vecv} \cap E_{\bfz}\}$ is the same for all codeword pairs $(\vecv, \bfz)$ with the same bicomposition. In $\mathrm{(f)}$ we have denoted by $y_{\vecv}$ the output symbol given that $\vecv$ is the intermediate codeword and we have exploited independence of output symbols.

Next, let the random variable $J$ represent the output symbol degree. Moreover, for given bicomposition $\jc(\vecv,\bfz)=\bkappa$ and given $J=j$, define $\bUpsilon$ as the joint composition of the the two vectors in $\bbF{q}^j$ representing the $j$ symbols selected in $\vecv$ and $\bfz$. %
We have
\begingroup
\allowdisplaybreaks
\begin{align}\label{eq:LB_pi_q_01LT}
\Pr & \{ \{y_{\vecv}=0\} \cap \{y_{\bfz}=0 \}|\, \jc(\vecv,\bfz) = \bkappa\} \\
& = \sum_{j=1}^{d_{\max}} \Omega_j \Pr \{ \{y_{\vecv}=0\}\cap\{y_{\bfz}=0\} |\, \jc(\vecv,\bfz) = \bkappa, J=j\} \notag \\
& = \sum_{j=1}^{d_{\max}} \Omega_j \sum_{\bupsilon \in \mathsf{\Upsilon}_j} \Pr \{ \{ y_{\vecv}=0\} \cap \{y_{\bfz}=0\} | \bUpsilon = \bupsilon \} \\
& \qquad \qquad \qquad \, \, \, \times \Pr \{ \bUpsilon=\bupsilon | J=j, \jc(\vecv,\bfz) = \bkappa \} \notag \\
&= \sum_{j=1}^{d_{\max}} \Omega_j \sum_{\bupsilon \in \mathsf{\Upsilon}_j} B(\bgamma_1(\bupsilon)) B(\bgamma_2(\bupsilon)) \frac{\prod_{\substack{0\leq s\leq q-1\\0\leq t\leq q-1}} {\kappa_{s,t} \choose \upsilon_{s,t}}} {{h \choose j}} \, .
\end{align}
\endgroup
where $\bgamma_1(\bupsilon)$ and $\bgamma_2(\bupsilon)$, defined in \eqref{eq:gamma_one} and \eqref{eq:gamma_two}, are the compositions corresponding to $\bupsilon$. Expression \eqref{eq:S2_01LT} is obtained by substituting \eqref{eq:LB_pi_q_01LT} into \eqref{eq:lower_bound_q_LT01}. The two bounds in \eqref{eq:LB_01LT} then follow as a direct application of degree-two Bonferroni and Dawson-Sankoff bounds, and from the observation that Dawson-Sankoff bound is tighter than the $S_1-S_2$ one.

For $q=2$, the right-hand sides of \eqref{eq:UB_Fq} and \eqref{eq:UB_01LT} coincide. The $S_1$ term is therefore equal to right-hand side of \eqref{eq:UB_Fq} expressed with $q=2$. Next, recall from Remark~\ref{remark:binary_equiv} that for $q=2$ there is a one-to-one correspondence between joint compositions and joint weights. With this correspondence in mind we can write $\JCEF_{\bkappa} = \JWEF_{\btau}$. Again owing to this correspondence, we can establish a bijection between the set of joint compositions $\cK_{2,k}$ and the set of joint weights $\cT_{2,h}$. The right-hand side of \eqref{eq:S2_01LT} may thus be recast as
\begin{align*}
\frac{1}{2} \mkern-8mu \sum_{\btau  \in \cT_h} \mkern-16mu  \mbox{ \small $\JWEF_{\btau} $} \mkern-4mu \Bigg( \mkern-6mu \sum_{j=1}^{d_{\max}} \mkern-6mu \Omega_j \mkern-10mu \sum_{\bupsilon \in \mathsf{\Upsilon}_j} \mkern-16mu
\mbox{ \small $ B(\bgamma_1(\bupsilon)) B(\bgamma_2(\bupsilon)) \frac{{\tau_0 \choose \upsilon_{0,0}} {\tau_1 \choose \upsilon_{0,1}} {\tau_2 \choose \upsilon_{1,0}} {\tau_3 \choose \upsilon_{1,1}}} {{h \choose j}} \mkern-4mu \Bigg)^{\mkern-8mu k+\delta}  $}
\end{align*}
which yields the statement by simply letting
\begin{align*}
\bupsilon = \left[ \begin{array}{cc} j-i_1-i_2-i_3 & \,\,i_1 \\ i_2 & \,\,i_3 \end{array} \right] .
\end{align*}
\null\hfill$\blacksquare$

\subsubsection{Proof of Corollary~\ref{corollary:rateless}}
Due to Theorem~\ref{theorem:rateless}  we may write
\begin{align}
\barPf \leq \Exp_{ \precodegeneric} \Bigg[ \frac{1}{q-1} \sum_{l=1}^h \weo_{\l}(\precodegeneric)  \pil^{k_\precodegeneric+\absoverhead} \Bigg].
\label{eq:ensemble2}
\end{align}
For all outer codes $\precodegeneric \in \codensemble$ we have $k_\precodegeneric \geq k$. Since $\pil \leq 1$  we can write $\pil^{k_\precodegeneric+\absoverhead} \leq \pil^{k+\absoverhead}
$ which allows us to upper bound \eqref{eq:ensemble2} as
\[
\barPf \leq \Exp_{  \precodegeneric} \Bigg[ \frac{1}{q-1} \sum_{l=1}^h \weo_{\l}(\precodegeneric) \pil^{k+\absoverhead} \Bigg]= \frac{1}{q-1} \sum_{l=1}^h \weoensemble_{\l}  \pil^{k+\absoverhead}
\]
where the last equality follows from linearity of expectation.

\null\hfill$\blacksquare$

The proofs of Corollaries \ref{corollary:bound_multi_edge}, \ref{corollary:bound_rq} and \ref{corollary:bound_multi_edge_bin_LT} follow closely that of Corollary 1. Thus, they are omitted for the sake of brevity.

\subsubsection{Proof of Corollary~\ref{corollary:LB_ensembles}}
Let $m$ be the number of symbols collected by the receiver. Denote by $\precodegeneric$ the generic outer code in the ensemble. Denote by $S_1(\precodegeneric,m)$ and $S_2(\precodegeneric,m)$ the parameters $S_1$ and $S_2$ for code $\precodegeneric$ for a fixed number $m$ of collected symbols. Using \eqref{eq:pre_dawson_sankoff} we can write
\begin{align*}
\barPf &= \sum_{\precodegeneric \in \ensemble} \Pr\{\precodegeneric\} \Pf(\precodegeneric) \\
& \geq \sum_{\precodegeneric \in \ensemble} \Pr\{\precodegeneric\} \Big[\frac{2}{r+1} S_1(\precodegeneric,m) - \frac{2}{r(r+1)} S_2(\precodegeneric,m)\Big] \notag \\
&= \frac{2}{r+1} \Exp_{\precodegeneric}[S_1(\precodegeneric,m)] - \frac{2}{r(r+1)} \Exp_{\precodegeneric}[S_2(\precodegeneric,m)] \\
& = \frac{2}{r+1} \bar{S}_1(m) - \frac{2}{r(r+1)} \bar{S}_2(m)
\end{align*}
for any $r \in \{1,\dots,M_{q,k}\}$, where $\bar{S}_1(m)$ and $\bar{S}_2(m)$ are given by \eqref{eq:S1_01LT_ensemble_m} and \eqref{eq:S2_01LT_ensemble_m}, respectively. Taking $r=1$ we obtain the looser bound in \eqref{eq:LB_01LT_ensemble} (i.e., $\barPf \geq \bar{S}_1(m)-\bar{S}_2(m)$). Maximization with respect to $r$ leads us to the tighter bound in \eqref{eq:LB_01LT_ensemble}. (The calculation is the same as that used in \cite{dawson67:inequality} to obtain \eqref{eq:dawson_sankoff} from \eqref{eq:pre_dawson_sankoff} via maximization with respect to $r$.)\footnote{In the extension of the upper bounds to Raptor ensembles, we expressed the number of collected symbols at the receiver as $k_{\precodegeneric}+\delta$ for each randomly drawn outer code $\precodegeneric$, i.e., we considered a fixed absolute overhead with respect to the outer code dimension. In the extension of the lower bounds, instead, the number of collected symbols was expressed as a fixed $m$ for all outer codes. Note that we can also write $\barPf = \bar{\mathsf{P}}_{\mathsf{F}|k_{\precodegeneric}=k} \Pr\{k_{\precodegeneric}=k\} + \bar{\mathsf{P}}_{\mathsf{F}|k_{\precodegeneric}>k} \Pr\{k_{\precodegeneric}>k\}$. Since $\Pr\{k_{\precodegeneric}=k\} <1$ and $\bar{\mathsf{P}}_{\mathsf{F}|k_{\precodegeneric}>k}<1$, we obtain $\bar{\mathsf{P}}_{\mathsf{F}|k_{\precodegeneric}=k} > \barPf - \Pr\{k_{\precodegeneric}>k\}$. If $\Pr\{k_{\precodegeneric}>k\}$ 
is small compared to $\barPf$ (as an example, for a linear random outer code defined by $m$ equations we have $\Pr \{k_{\precodegeneric} > k\} < 2^{-(h-m)} $) then Corollary~\ref{corollary:LB_ensembles} with $m=k+\delta$ may be regarded as an approximate lower bound for the average error probability when the outer code ensemble is expurgated from all codes with dimension larger than $k$.}\null\hfill$\blacksquare$
\section{Error Exponent Analysis}\label{sec:errexp}

In this section, we aim at deriving an error exponent analysis of Raptor code. In particular, a lower bound to the error exponent is obtained for Raptor code ensembles as a function of the outer code ensemble weight spectral shape and of the inner \ac{LT} code distribution. The focus in on both binary and nonbinary Raptor codes.\footnote{The analysis of Raptor code ensemble sequences over $\GF{q}$ with $0/1$ \ac{LT} codes is omitted due to the lack of a definition of an equivalent of the weight spectral shape for (bivariate) composition enumerators.} Before proceeding with the derivation, we need to introduce a few definitions.

Following the definitions of Section \ref{subsec:growthrate} above, we refer to a Raptor code ensemble sequence as a sequence of Raptor code ensembles indexed by the code dimension $k$, where the $k$th Raptor code ensemble is defined by an outer code ensemble $\codensemble_k$ and an inner \ac{LT} code with degree distribution $\Omega(x)$, both over $\GF{q}$.
To emphasize the role of the code dimension, we re-write next \eqref{eq:ensemble} as
$\barPf^{(k)} = \Exp_{  \precodegeneric } [ \Pf(\precodegeneric)]$
where the average is over the outer code ensemble $\codensemble_k$. For a given relative overhead $\epsilon=\delta/k$, with $\epsilon \geq 0$, the error exponent of the Raptor code ensemble sequence is 
\begin{align}
    E(\epsilon)=\lim_{k\rightarrow \infty} -\frac{1}{k}\log_2 \barPf^{(k)}(\epsilon).
    \label{eq:errorexpdef}
\end{align}
Before proceeding with the derivation of a lower bound to the error exponent for general Raptor code ensemble sequences, we illustrate the case of linear random fountain codes as an example.
\begin{example}
The probability of decoding failure for a dimension-$k$ linear random fountain code over $\GF{q}$ can be tightly upper bounded as \cite{Liva10:fountain}
\[
\barPf^{(k)}<\frac{1}{q-1}q^{-\epsilon k}.
\]
For linear random fountain codes we hence have
\begin{align}
    E(\epsilon)&=\lim_{k\rightarrow \infty} -\frac{1}{k}\log_2 \barPf^{(k)}(\epsilon)\\
    &>\lim_{k\rightarrow \infty} -\frac{1}{k}\log_2 \left(\frac{1}{q-1}q^{-\epsilon k}\right)\\
    &=\epsilon \log_2 q.
    \label{eq:errorexp_LRFC}
\end{align}
Note that \eqref{eq:errorexp_LRFC} is positive for positive $\epsilon$, i.e., a positive relative overhead is sufficient to achieve an exponential (in $k$) decay of the decoding failure probability. 
\end{example}

For general Raptor code ensemble sequences, the following theorem provides a lower bound to the error exponent (under mild conditions on the outer code ensemble sequence).

\begin{theorem}
Consider a Raptor code ensemble sequence over $\GF{q}$ defined by an outer code ensemble sequence $\left\{\codensemble_k\right\}$ and an inner \ac{LT} code degree distribution $\Omega(x)$. Let the outer code ensemble sequence spectral shape $G(\omega)$ be well-defined in $[0,1]$. 
If $\frac{1}{h} \log_2 \weoensemble^{(k)}_{\lfloor\omega h\rfloor}\xrightarrow{\mathsf{u}}G(\omega)$ then the Raptor code ensemble sequence error exponent can be lower bounded as
\begin{align}
    E(\epsilon)&\geq - \sup_{\omega \in (0,1]} \left[\frac{1}{R} G(\omega) + (1+\epsilon)\log_2 \varrho_\omega \right] \label{eq:errexplb}
\end{align}   
where $\varrho_\omega=\frac{1}{2}\sum_{j=1}^{\dmax}\Omega_j\left[1-(1-2\omega)^j\right]$.
\end{theorem}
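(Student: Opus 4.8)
The plan is to read the error exponent straight off the ensemble union bound and then to isolate its exponential order by a largest-term (Laplace) estimate. Writing the absolute overhead as $\delta = \epsilon k$ and using $h = k/R$, so that $k + \delta = (1+\epsilon)Rh$, Corollary~\ref{corollary:rateless} gives
\[
\barPf^{(k)}(\epsilon) \leq \frac{1}{q-1} \sum_{l=1}^{h} \weoensemble^{(k)}_{l}\, \pil^{(1+\epsilon)Rh}.
\]
Because $E(\epsilon) = \lim_{k} -\frac{1}{k}\log_2 \barPf^{(k)}(\epsilon)$ and the constant prefactor $1/(q-1)$ is annihilated after division by $k$, the whole problem reduces to computing the exponential growth rate of this finite sum, whose two ingredients are the spectral shape $G(\omega)$ and the limiting value of the base $\pil$.

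First I would pin down the asymptotics of $\pil$. From the proof of Theorem~\ref{theorem:rateless} we have $\pil = \sum_{j=1}^{\dmax}\Omega_j \sum_{i=0}^{\min\{j,l\}} \pifroml\, \qi$ with the hypergeometric weight $\pifroml = \binom{l}{i}\binom{h-l}{j-i}/\binom{h}{j}$. For $l = \lfloor \omega h\rfloor$ and each fixed pair $(i,j)$, this weight converges to the binomial probability $\binom{j}{i}\omega^i(1-\omega)^{j-i}$, and since $\dmax$ is finite the sum has a bounded number of terms, so the convergence is uniform in $\omega \in [0,1]$. Evaluating the resulting limit with $\qi = \frac{1}{q}\big(1 + (-1)^i/(q-1)^{i-1}\big)$ yields a continuous limit $p_\omega := \lim_{h\to\infty}\pi_{\lfloor\omega h\rfloor}$; in the binary case one obtains $p_\omega = \frac{1}{2}\big(1 + \sum_{j}\Omega_j(1-2\omega)^j\big) = 1 - \varrho_\omega$, i.e.\ $\varrho_\omega$ is precisely the asymptotic probability that an output symbol is \emph{nonzero} while $p_\omega$ is the probability that it is zero. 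Consequently $\log_2 \pi_{\lfloor\omega h\rfloor}\xrightarrow{\mathsf{u}} \log_2 p_\omega = \log_2(1-\varrho_\omega)$ on every subinterval of $(0,1]$ on which $p_\omega$ stays bounded away from $0$ (as $\log_2$ is uniformly continuous there).

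Next I would combine the two uniform limits. Setting $g_h(\omega) = \frac{1}{h}\log_2 \weoensemble^{(k)}_{\lfloor\omega h\rfloor} + (1+\epsilon)R\,\log_2 \pi_{\lfloor\omega h\rfloor}$, the spectral-shape hypothesis $\frac{1}{h}\log_2 \weoensemble^{(k)}_{\lfloor\omega h\rfloor}\xrightarrow{\mathsf{u}}G(\omega)$ together with the uniform convergence just established gives $g_h\xrightarrow{\mathsf{u}} g(\omega) := G(\omega) + (1+\epsilon)R\log_2(1-\varrho_\omega)$ on $(0,1]$. Each summand equals $2^{h\,g_h(l/h)}$, and bracketing the sum of at most $h$ nonnegative terms between its maximum and $h$ times its maximum,
\[
\max_{1\le l\le h} 2^{h\,g_h(l/h)} \leq \sum_{l=1}^{h} 2^{h\,g_h(l/h)} \leq h\max_{1\le l\le h} 2^{h\,g_h(l/h)},
\]
then taking $\frac{1}{h}\log_2$ and letting $h\to\infty$ (so that $\frac{1}{h}\log_2 h\to 0$), uniform convergence forces $\frac{1}{h}\log_2 \sum_l 2^{h g_h(l/h)}\to \sup_{\omega\in(0,1]} g(\omega)$. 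Since $k = Rh$, dividing by $R$ and negating the limit yields
\[
E(\epsilon) \geq -\frac{1}{R}\sup_{\omega\in(0,1]} g(\omega) = -\sup_{\omega\in(0,1]}\Big[\tfrac{1}{R}G(\omega) + (1+\epsilon)\log_2(1-\varrho_\omega)\Big],
\]
which is the asserted bound once $1-\varrho_\omega$ is identified with the asymptotic per-symbol probability $p_\omega$ of producing a zero. (Note that the sum starts at $l=1$, which is exactly why the supremum ranges over $(0,1]$ rather than $[0,1]$.)

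The crux, and the reason the \emph{uniform} convergence of the spectral shape is assumed rather than mere pointwise convergence, is the interchange of the limit in $h$ with the supremum over $\omega$: only uniform convergence lets me pass from $\max_l g_h(l/h)$ to $\sup_\omega g(\omega)$ and simultaneously absorb the $\lfloor\omega h\rfloor$ rounding. The one genuinely delicate point is the behavior of $\log_2\pi_{\lfloor\omega h\rfloor}$ near values of $\omega$ at which $p_\omega\to 0$: there the limit $g$ equals $-\infty$, but such $\omega$ only push the exponent up and can be excised from the supremum (equivalently, one works on $\{\omega : p_\omega \geq \eta\}$ and lets $\eta\downarrow 0$), so they create no real obstruction. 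Everything else---the hypergeometric-to-binomial limit and the two-sided bracketing of the sum---is routine.
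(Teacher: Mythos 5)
Your proof is correct and follows essentially the same route as the paper's: start from the union bound of Corollary~\ref{corollary:rateless}, rewrite each summand as $2^{h\,g_h(l/h)}$, bracket the sum by its largest term, and invoke uniform convergence to interchange the limit in $h$ with the supremum over $\omega$. The one substantive point your derivation surfaces is a sign inconsistency in the paper itself: you correctly compute $\pi_{\lfloor\omega h\rfloor}\to\frac{1}{2}\bigl(1+\sum_j\Omega_j(1-2\omega)^j\bigr)$, which equals $1-\varrho_\omega$ with $\varrho_\omega$ as defined in the theorem statement, whereas the paper's proof asserts $\pi_{\lfloor\omega h\rfloor}\xrightarrow{\mathsf{u}}\varrho_\omega$ (citing \cite{lazaro:JSAC}) and writes $\log_2\varrho_\omega$ in the final bound; your version with $\log_2(1-\varrho_\omega)$ is the internally consistent one (equivalently, the bracket in the definition of $\varrho_\omega$ should read $1+(1-2\omega)^j$, so that $\varrho_\omega$ is the limiting probability of a \emph{zero} output symbol rather than of a nonzero one). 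Beyond that, you supply two details the paper leaves implicit --- the uniform hypergeometric-to-binomial limit establishing $\pi_{\lfloor\omega h\rfloor}\xrightarrow{\mathsf{u}}p_\omega$, and the excision of the set where $p_\omega$ vanishes so that $\log_2$ remains uniformly continuous --- and both are handled correctly.
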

\begin{proof}
For a general Raptor code ensemble sequence, we re-write  the upper bound of on the decoding failure probability from Corollary \ref{corollary:rateless} as
\[
\barPf^{(k)}  \leq   \frac{1}{q-1} \sum_{\omega \in \mathcal{F}_h} \weoensemble^{(k)}_{\lfloor\omega h\rfloor}  \pi_{\lfloor\omega h\rfloor}^{k(1+\epsilon)}
\]
where $\mathcal{F}_h=\left\{\frac{l}{h}\right\}$ with $l=1,\ldots,h$. Following \eqref{eq:errorexpdef}, we have that
\begingroup
\allowdisplaybreaks
\begin{align}
    E(\epsilon) \mkern-4mu &=\lim_{k\rightarrow \infty} -\frac{1}{k}\log_2 \barPf^{(k)}(\epsilon) \\
    & \geq \lim_{h\rightarrow \infty} \mkern-6mu -\frac{1}{hR}\log_2 \frac{1}{q-1} \mkern-8mu \sum_{\omega \in \mathcal{F}_h} \weoensemble^{(hR)}_{\lfloor\omega h\rfloor}  \pi_{\lfloor\omega h\rfloor}^{hR(1+\epsilon)} \\
    &=\lim_{h\rightarrow \infty} \mkern-6mu -\frac{1}{hR}\log_2 \frac{1}{q-1}  \mkern-8mu \sum_{\omega \in \mathcal{F}_h} \mkern-5mu 2^{\log_2 \weoensemble^{(hR)}_{\lfloor\omega h\rfloor} + hR(1+\epsilon)\log_2 \pi_{\lfloor\omega h\rfloor}} \\
    &\geq \lim_{h\rightarrow \infty} \mkern-6mu -\frac{1}{hR}\log_2 \mkern-4mu \left\{ \mkern-4mu h \sup_{\omega \in \mathcal{F}_h} \mkern-6mu \left[ \mkern-3mu 2^{\log_2 \weoensemble^{(hR)}_{\lfloor\omega h\rfloor} + hR(1+\epsilon)\log_2 \pi_{\lfloor\omega h\rfloor}} \mkern-4mu\right] \mkern-4mu \right\}\\[-1mm]
    &= - \lim_{h\rightarrow \infty}\sup_{\omega \in \mathcal{F}_h}\left[ \frac{1}{R}\log_2 \weoensemble^{(hR)}_{\lfloor\omega h\rfloor} + (1+\epsilon)\log_2 \pi_{\lfloor\omega h\rfloor}\right]\\ 
    &= - \lim_{h\rightarrow \infty}\sup_{\omega \in (0,1]}\left[ \frac{1}{R}\log_2 \weoensemble^{(hR)}_{\lfloor\omega h\rfloor} + (1+\epsilon)\log_2 \pi_{\lfloor\omega h\rfloor}\right].
    \label{eq:errorexpgen1}
\end{align}
\endgroup
If $\frac{1}{h} \log_2 \weoensemble^{(hR)}_{\lfloor\omega h\rfloor}$ converges uniformly to $G(\omega)$ in $[0,1]$, by observing that $ \pi_{\lfloor\omega h\rfloor} \xrightarrow{\mathsf u} \varrho_\omega$ (see \cite[Sec. III]{lazaro:JSAC}), the order of the  limit and the supremum operations in \eqref{eq:errorexpgen1} can be inverted, yielding \eqref{eq:errexplb}.
\end{proof}
\begin{remark}
Observe that the error exponent lower bound is monotonically increasing with $\epsilon$. Let us assume next that, for a given Raptor code ensemble sequence, there exist an $\epsilon^\star>0$ s.t. the right-hand side of \eqref{eq:errexplb} is strictly positive for all $\epsilon>\epsilon^\star$. We can conclude that  the Raptor code ensemble sequence is characterized by a decoding failure probability that decays exponentially fast in $k$ for $\epsilon>\epsilon^\star$. The value of $\epsilon^\star$ can be regarded as an upper bound on the \ac{ML} decoding threshold of the Raptor code ensemble. It is important to stress that this bound on the \ac{ML} decoding threshold may not be tight since it does not capture the performance in the region $\epsilon\leq\epsilon^\star$. In this latter region, the decoding failure probability may still become vanishing small as $k$ grows large at a sub-exponential rate (e.g., only polynomially-fast in $k$).
\end{remark}

\section{Examples of Application to Raptor Codes and Raptor Code Ensembles}\label{sec:numres}

In this section, we apply the results of Sections \ref{sec:bounds} and \ref{sec:errexp} to Raptor codes and Raptor code ensembles.
For the analysis, we use the \ac{LT} output degree distribution employed by standard R10 Raptor codes \cite{MBMS16:raptor,luby2007rfc}, given by
\begin{align}
\Omega_{\mathsf{A}}(\x) &= 0.0098\x + 0.4590\x^2+ 0.2110\x^3+0.1134\x^4   \\
& \mathrel{\phantom{=}}+ 0.1113\x^{10} + 0.0799\x^{11} + 0.0156\x^{40}.
\label{eq:dist_mbms}
\end{align}
\subsection{Raptor Code over $\GF2$ with a Hamming Outer Code}
Consider a binary Raptor code over $\GF2$ with a Hamming outer code.
The weight enumerator of a binary Hamming code of length $h=2^t-1$ and dimension $k=h-t$ can be derived easily using the  recursion $(i+1)\, A_{i+1} + A_i + (h-i+1)\, A_{i-1}= \binom{h}{i}$ with $A_0=1$ and $A_1=0$ \cite{MacWillimas77:Book}. The weight distribution obtained from this recursion can then be incorporated in Theorem~\ref{theorem:rateless} to derive the corresponding upper bound on the failure probability. 
The lower bounds established by Theorem~\ref{th:LB_01LT} (binary case) can also be derived, by employing the Hamming code biweight enumerator, an expression of which was developed in \cite{MacWillimas77:Book}.

\begin{figure}[t]
    \centering
    \includegraphics[width=\columnwidth]{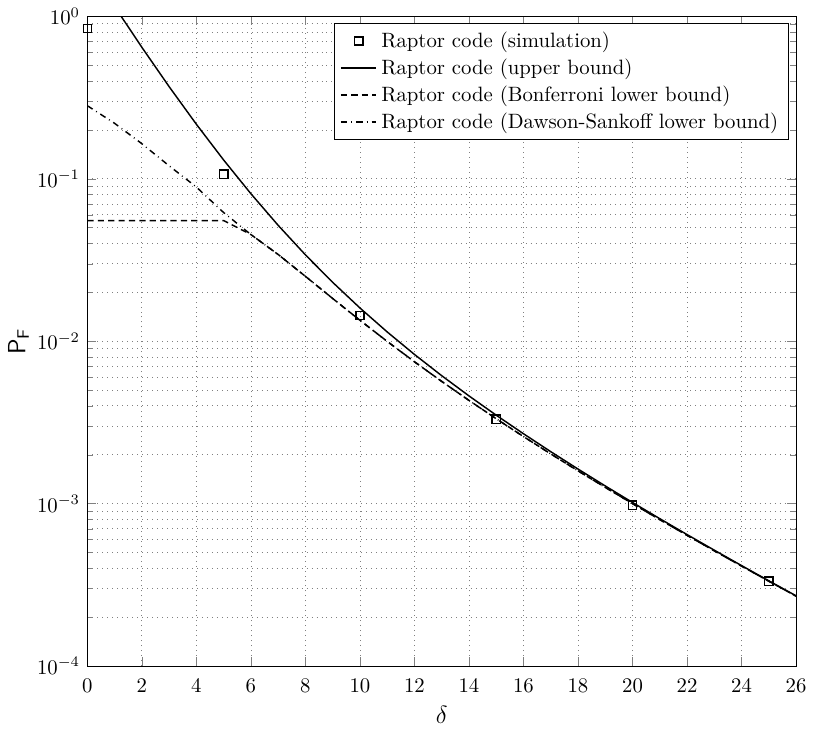}
    \caption[Probability of Decoding failure $\Pf$ vs the absolute overhead for a Raptor with a $(63,57)$ Hamming outer code]{Decoding failure probability $\Pf$ versus the absolute overhead $\delta$ for a binary Raptor code with a $(63,57)$ Hamming outer code and LT distribution $\Omega_{\mathsf{A}}(x)$. Markers: simulation results. Solid: upper bound (Theorem~\ref{theorem:rateless}). Dotted: Degree-two Bonferroni lower bound (Theorem~\ref{th:LB_01LT}). Dot-dashed: Dawson-Sankoff lower bound (Theorem~\ref{th:LB_01LT}).}\label{fig:Hamming_sim}
\end{figure}

Fig.~\ref{fig:Hamming_sim} shows the decoding failure rate for a Raptor code over $\GF2$ employing a $(63,57)$ binary Hamming outer code as a function of the absolute overhead, $\absoverhead$, together with the upper bound from Theorem~\ref{theorem:rateless} and the lower bounds from Theorem~\ref{th:LB_01LT} (binary case).
In order to obtain the values of failure rate, Monte Carlo simulations were run for each $\absoverhead$ until $200$ errors were collected using inactivation decoding. It can be observed how the upper bound is very tight and how the gap between the upper and lower bounds is very small already for values of $\delta$ in the order of $10$. Interestingly, the order-two Bonferroni and the Dawson-Sankoff bounds are practically coincident for $\delta \geq 6$ while for $\delta<6$ the Dawson-Sankoff bound turns to be remarkably tighter.\footnote{The difference $S_1-S_2$ is actually increasing for $\delta \in \{0,\dots,5\}$, it reaches a maximum at $\delta=5$ and then decreases. For $\delta \in \{0,\dots,4\}$ the difference is even negative. However, since the failure probability cannot increase as $\delta$ increases, we can apply the value taken by $S_1-S_2$ at $\delta=5$ to all $\delta < 5$. In contrast,  Dawson-Sankoff bound decreases monotonically over the whole range of $\delta$.}

\subsection{Raptor Code Ensembles with Linear Random Outer Codes}
Next, consider a Raptor code ensemble over $\GF{q}$, with \ac{LT} degree distribution $\Omega_{\mathsf{A}}(x)$ and in which the outer code is picked from the uniform parity-check ensemble, with parity-check matrix of size $(h-k) \times h$ and characterized by \ac{i.i.d.} entries with uniform distribution in $\GF{q}$. The expected weight enumerator for an outer code drawn randomly in $\codensemble$ 
is known to be 
$\weoensemble_{\l} = \binom{h}{\l} q^{-(h-k)}  (q-1)^l$. 
The expected composition enumerator can be obtained from the expected weight enumerator, as discussed in Appendix~\ref{sec:appendix_composition}, while the expected bicomposition enumerator can be obtained as shown in Appendix~\ref{app:bicomp}. 

To obtain the experimental values of the expected decoding failure rate, 
$6000$ different outer codes were generated. For each outer code and for each overhead value, $1000$ inactivation decoding attempts were carried out. The average  failure rate was calculated by averaging the failure rates of the individual Raptor codes. 
To generate an outer code, an $(h-k)\times h$ parity-check matrix was drawn randomly by picking its elements independently and uniformly in~$\GF{q}$.

In Fig.~\ref{fig:pf_k_64_h70} we show simulation results for $k=64$ and $h=70$. Three different Raptor code ensembles were considered, one constructed over $\GF{2}$,  one constructed over $\GF{4}$, and one constructed over $\GF{4}$ with a $0/1$ LT code. We can observe how in all cases the upper bounds are tight, even for small values of $\absoverhead$.
Comparing the two ensembles over $\GF{4}$, it is remarkable that employing a $0/1$ LT code results only in a small performance degradation, which vanishes as $\absoverhead$ increases. Both order-two Bonferroni and Dawson-Sankoff lower bounds are displayed for the binary ensemble. Again, the Dawson-Sankoff bound turns out to be remarkably tighter for small~$\delta$.
\begin{figure}[t]
    \centering
    \includegraphics[height=7.75cm]{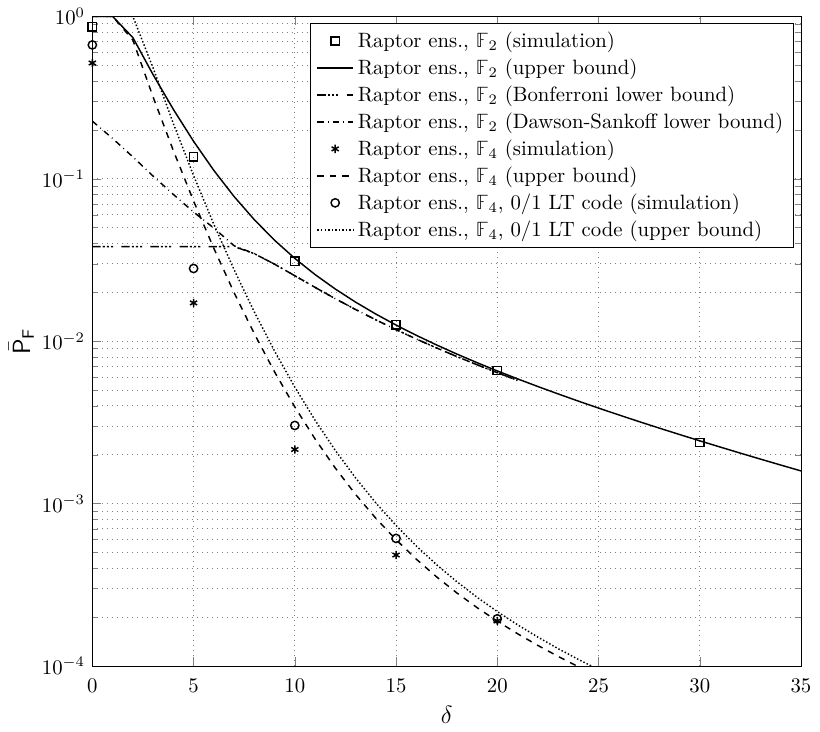}
    \caption{Expected probability of decoding failure $\barPf$ vs absolute overhead for Raptor code ensembles where the outer code is drawn randomly from the uniform parity-check ensemble with $k=64$ and $h=70$. LT distribution: $\Omega_{\mathsf{A}}(x)$.  Lines: upper and lower bounds. Markers: simulation results.}\label{fig:pf_k_64_h70}
\end{figure}

In Fig.~\ref{fig:eexp} lower bounds on the error exponents of various binary Raptor code ensemble sequences are provided. The Raptor code ensemble sequences are defined by the degree distribution $\Omega_{\mathsf{A}}(\x)$ and linear random outer code sequences with (outer) code rates $R=0.90,0.95$ and $0.98$. When the outer code is picked from a binary linear random code ensemble, the error exponent lower bound of \eqref{eq:errexplb} reduces to 
\[
    E(\epsilon) \geq - \sup_{\omega \in (0,1]} \left[\frac{H_b(\omega) +R-1}{R}+ (1+\epsilon)\log_2 \varrho_\omega \right]
\]
where $H_b(\omega)=-\omega \log_2 \omega - (1-\omega) \log_2 (1-\omega)$ is the binary entropy function. The error exponent lower bound for linear random fountain codes of \eqref{eq:errorexp_LRFC} is provided as a reference. As intuition suggests, the error exponent  lower bound for Raptor codes approaches the one of linear random fountain codes as the outer code rate decreases. The upper bounds on the \ac{ML} decoding thresholds are $\epsilon^\star\approx 6 \times 10^{-2}$ for $R=0.98$, $\epsilon^\star\approx1.33\times 10^{-2}$ for $R=0.95$, and $\epsilon^\star\approx 5\times 10^{-4}$ for $R=0.90$.

\begin{figure}[t]
    \centering
    \includegraphics[height=7.75cm]{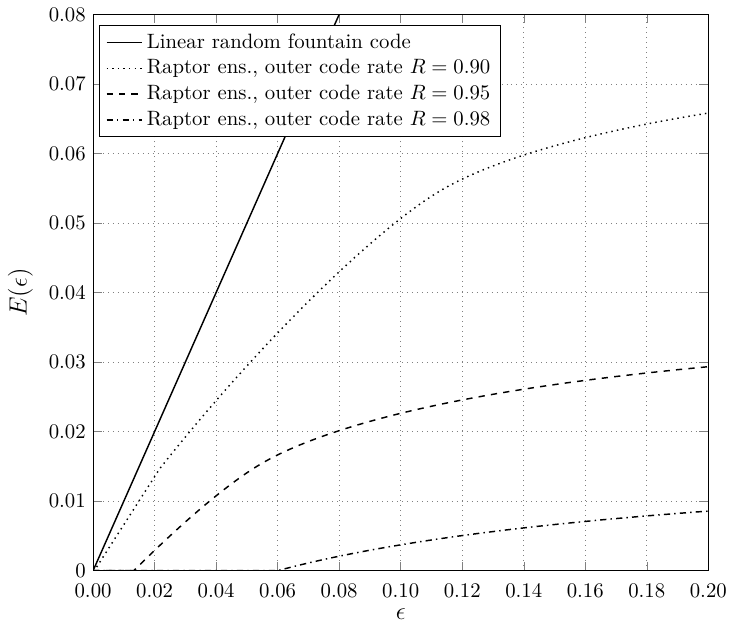}
    \caption{Lower bounds on the error exponent vs. relative overhead $\epsilon$ for binary Raptor code ensemble sequences defined by the degree distribution $\Omega_{\mathsf{A}}(\x)$ and linear random outer code sequences with (outer) code rates $R=0.90,0.95$ and $0.98$. The error exponent lower bound for linear random fountain codes of \eqref{eq:errorexp_LRFC} is provided as reference.}\label{fig:eexp}
\end{figure}

\subsection{Raptor Code Ensembles with Regular LDPC Outer Codes}
We now consider ensembles of Raptor codes in which the outer code is drawn from a $(d_v, d_c)$ regular \acf{LDPC} code ensemble, 
where $d_v$ and $d_c$ are the variable and check node degrees, respectively. In order to draw a code from this ensemble we first generate a random permutation of the $h d_v = (h-k) d_c$ edges between check and  variable nodes. Then we assign to each edge a non-binary label picked uniformly at random in $\GF{q} \backslash \{ 0\}$. The average weight enumerator for this ensemble is reviewed in Appendix~\ref{sec:appendix_composition}, where an expression of its expected composition enumerator is also derived. 

In order to simulate the average probability of decoding failure of the ensemble, $10000$ different outer codes were generated. For each outer code and overhead value, $100$ decoding attempts were carried out. The average probability of decoding failure was obtained averaging the probabilities of decoding failure obtained with the different outer codes.

\begin{figure}[t]
    \centering
    \includegraphics[height=7.75cm]{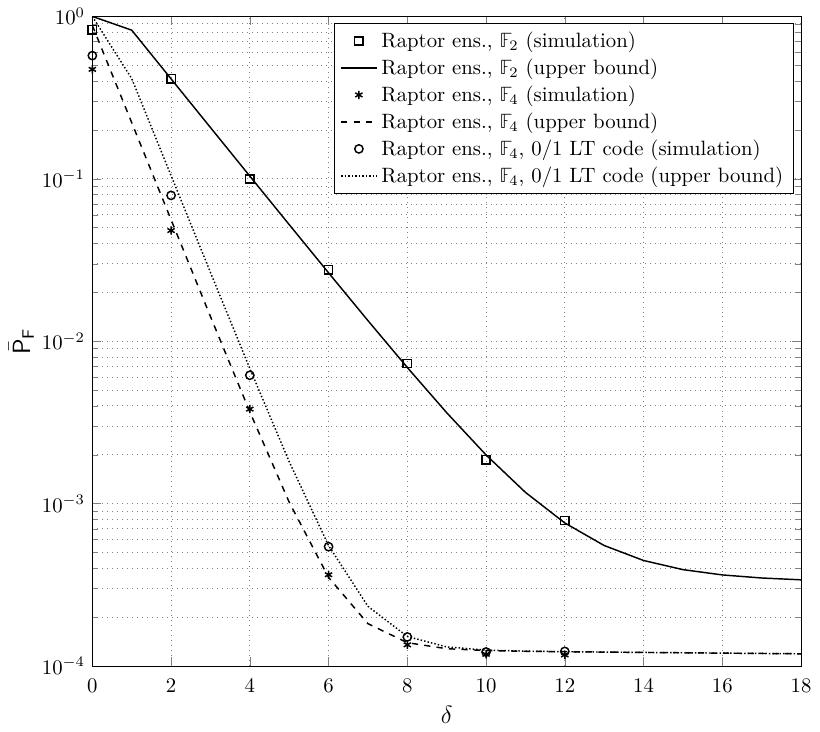}
    \caption{Average probability of decoding failure $\Pf$ vs absolute overhead for two Raptor code ensembles where the outer code is randomly drawn from the $(d_v=3,d_c=15)$ regular LDPC ensemble  with $k=1000$ input symbols and $h=1250$ intermediate symbols. LT distribution: $\Omega_{\mathsf{A}}(x)$. Lines: upper bounds. Markers: simulation results.}\label{fig:LDPC}
\end{figure}

Fig.~\ref{fig:LDPC} shows the average probability of decoding failure for three ensembles of Raptor codes where the outer code is randomly drawn from the $(d_v=3,d_c=15)$ regular LDPC ensemble with $k=1000$ input symbols and $h=1250$ intermediate symbols. The first ensemble is constructed over  $\mathbb{F}_2$, the second over $\mathbb{F}_4$ and the third is also constructed over $\mathbb{F}_4$ but with a $0/1$ LT code. 
It can be observed how the upper bounds are very tight. 
Furthermore, as $\absoverhead$ increases the performance of the ensemble with a $0/1$ LT code quickly converges to that of the ordinary ensemble over $\mathbb{F}_4$.

\subsection{Multi-Edge Type Raptor Code Ensembles}

Next we consider multi-edge type Raptor codes with a bivariate LT output degree distribution given by
$\Omega_{\mathsf A} (\x) \left( \z^2 + \z^3\right)/2$.\footnote{This degree distribution is inspired by the one used in RaptorQ codes \cite{lubyraptorq}, where for type A intermediate symbols (called LT symbols in \cite{lubyraptorq}) a conventional LT output degree distribution is used, whereas for type B intermediate symbols (referred to as permanently inactivated symbols in \cite{lubyraptorq}) degrees $2$ and $3$ are chosen with probability $1/2$. See  \cite{lubyraptorq} for more details.}

We consider first a multi-edge type Raptor code over $\GF{2}$ where the outer code is a $(1023,1013)$ Hamming code, with $h_A= 900$ intermediate symbols of type A and $h_B=123$ intermediate symbols of type B.
In order to obtain the bivariate weight enumerator of the Hamming code, the bivariate weight enumerator of the dual code was first obtained by enumerating all its codewords. Then, the extension of the MacWilliams identity developed in Appendix~\ref{sec:appendix_MacWilliams} was applied.
Fig.~\ref{fig:Hamming_MET}  shows the average decoding failure probability, as well as its upper bound. It can be observed how the upper bound is tight.
\begin{figure}[t]
    \centering
    \includegraphics[height=7.75cm]{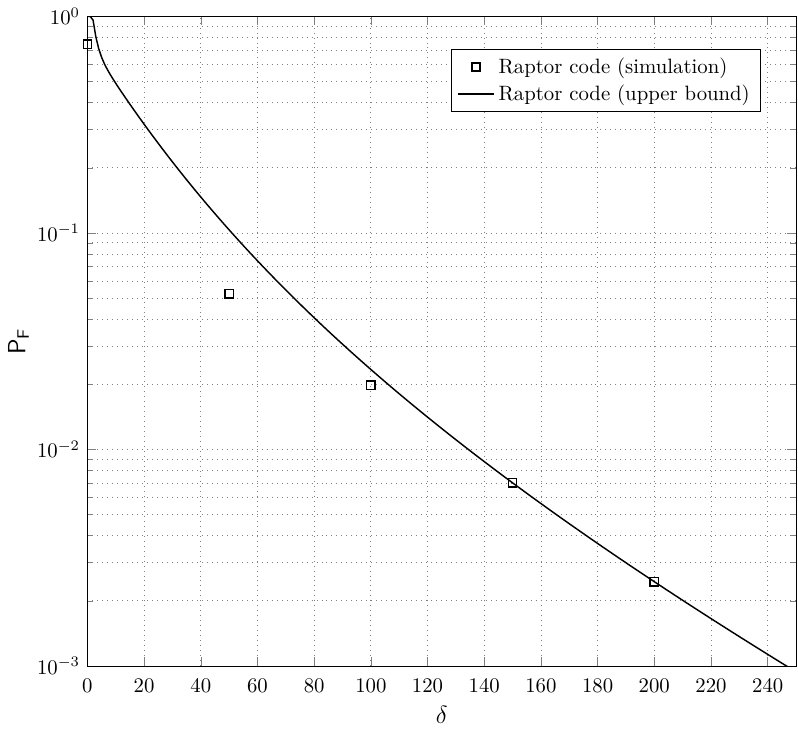}
    \caption{Decoding failure probability $\Pf$ vs absolute overhead for a multi-edge type Raptor code where the outer code is a 
    $(1023,1013)$ Hamming code with $h_A = 900$ and $h_B=123$.
    LT distribution: $\Omega_{\mathsf{A}}(x)(z^2 + z^3)/2$. Line: upper bound. Markers: simulation results.}
    \label{fig:Hamming_MET}
\end{figure}

Next, we consider multi-edge type Raptor code ensembles where the outer code is again drawn from the $(d_v, d_c)$ regular LDPC code ensemble. In particular, 
the outer code is randomly drawn from the $(5,55)$ regular LDPC ensemble with $k=100$ input symbols and $h=110$ intermediate symbols. Out of the $110$ intermediate symbols, $100$ are of class $A$ and $10$ of class $B$. The average bivariate weight enumerator for this ensemble is given by
\begin{equation}
  \weoensemble_{a,b}={\frac{\binom{h_A}{a}\binom{h_B}{b}}{\binom{h}{a+b}}} \weoensemble_{a+b}.
\end{equation}
from which the average bivariate composition enumerator can be obtained through Proposition~\ref{prop:comp_ensemble_bivar} in  Appendix~\ref{sec:appendix_composition}.

Fig.~\ref{fig:LDPC_MET} shows the average probability of decoding failure for three ensembles of multi-edge type Raptor codes, one constructed over  $\GF2$,  another over $\GF4$, and a third one also constructed over $\GF4$ but with a $0/1$ LT code.
It can be observed how the upper bounds  are very tight in this case too. If we compare the the probability of failure of the two ensembles built over $\GF4$, we can see how their performance is almost the same. It is remarkable how restricting the LT code to use only binary labels does not result in an appreciable performance loss.

\section{Code Design Examples} \label{sec:code_design}
In this section we provide several code design examples that illustrate the practical impact of the derived bounds.

\begin{figure}
    \centering
    \includegraphics[height=7.51cm]{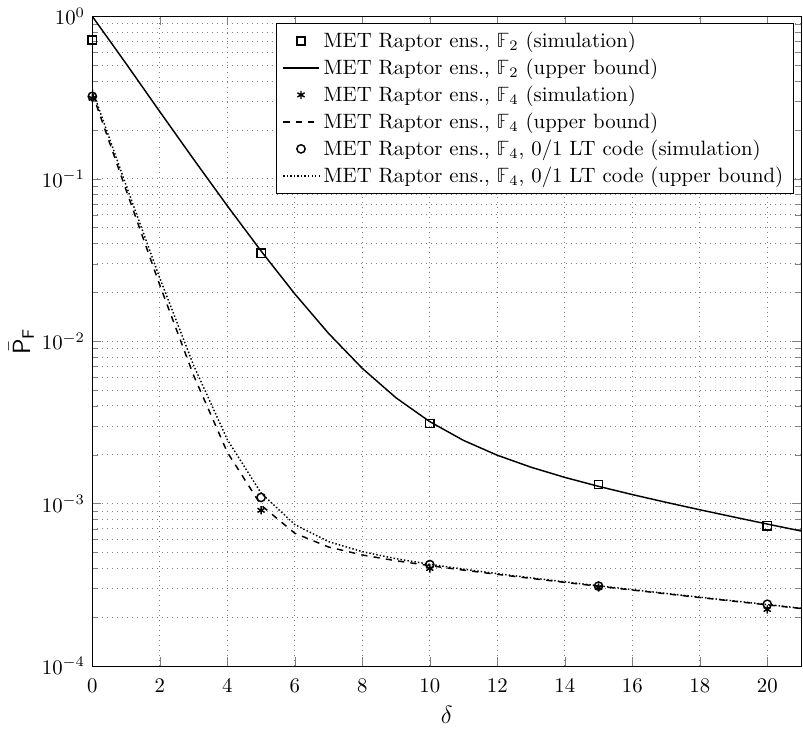}
    \caption{Average probability of decoding failure $\barPf$ vs absolute overhead for three multi-edge type Raptor code ensembles where the outer code is randomly drawn from the $(5,55)$ regular LDPC ensemble  with $k=100$ and $h=110$, with $h_A = 100$ and $h_B=10$.
    LT distribution: $\Omega_{\mathsf{A}}(x)(z^2 + z^3)/2$. Lines: upper bounds. Markers: simulation results.}\label{fig:LDPC_MET}
\end{figure}

\subsection{Design of a Binary Raptor code with an LDPC Outer Code}\label{sec:code_design:inner}

We consider the case in which the outer code ensemble is given and run a computer search in order to find an LT output degree distribution that optimizes a given metric subject to some design constraints. 
In particular, we consider Raptor code ensembles where the outer code is picked from the ${(d_v=3, d_c=33)}$ binary regular LDPC ensemble with $k=1000$  and $h=1100$, and we set as requirement 
minimizing the inactivation decoding complexity subject to a decoding failure probability not exceeding $10^{-3}$.

Inactivation decoding \cite{shokrollahi2005systems} is the efficient \ac{ML} decoding algorithm  
used to decode standardized Raptor codes \cite{MBMS16:raptor,luby2007rfc}.
It can be seen as an extension of iterative (peeling) decoding where, whenever the iterative decoding process stops, an input symbol is declared as inactive, so that iterative decoding is resumed. At the end, one is left with a number of input symbols that have been inactivated, and whose values have to be recovered by means of Gaussian elimination. 
After doing so, all input symbols can be resolved  by back-substitution (i.e., using iterative decoding).
The complexity of inactivation decoding is generally dominated by the Gaussian elimination step, whose complexity is cubic on the number of inactivations. 
Thus, minimizing the number of inactivations can be used as a proxy for minimizing the decoding complexity.

The degree distribution $\Omegarten$, given in \eqref{eq:dist_mbms}, has been designed for inactivation decoding. However, as it can be observed in Fig.~\ref{fig:LDPC_Design}, if we use $\Omegarten$  we do not fulfill the probability of failure constraint, since there is an error floor around $2 \times 10^{-3}$. Thus, we need carry out an ad-hoc design.

The analysis presented in \cite{lazaro2017inactivation} can be used to determine the expected number of inactivations for LT codes. Extending the analysis to Raptor codes is not easy, but, as it was shown in \cite{lazaro2017inactivation}, when the parity-check matrix of the outer code is considerably denser than the generator matrix of the inner LT code, it is possible to design Raptor codes that require few inactivations by optimizing the LT output degree distribution in isolation.\footnote{Note that this heuristic observation holds true also for the case where the outer code parity-check matrix is not dense, e.g., to the case where the outer code is an \ac{LDPC} code, provided that the average check node degree of the \ac{LDPC} code is considerably larger than the average output degree of the \ac{LT} code.} In other words, if we design an LT degree distribution that requires few inactivations, and then construct a Raptor code using this degree distribution for the inner LT code,  we obtain a Raptor code that requires few inactivations.

Following this approach, we can use simulated annealing \cite{kirkpatrick1983optimization} to design an LT degree distribution that minimizes the number of inactivations for the LT code in isolation, under the constraint on the decoding failure probability for the resulting Raptor code, estimated using the upper bounds derived in this paper.
By using this approach we obtained the following degree distribution
\begin{align}
    \Omega_{\mathsf{B}} & = \mbox{ \small $0.0108 \x + 0.4557 \x^2  +  0.1959 \x^3   + 0.1195 \x^4   + 0.0245 \x^5$}  \\[-4mm]
     & \mathrel{\phantom{=}} \mbox{ \small $+0.0243 \x^6 + 0.0357  \x^{10}
    +  0.0412 \x^{11} + 0.0440 \x^{15} $}\\
    & \mathrel{\phantom{=}} \mbox{ \small $+ 0.0196 \x^{21}   + 0.0115 \x^{26} 
    + 0.0088 \x^{30} + 0.0085 \x^{40}$}. \label{deg_dist_ldpc}
\end{align}

\begin{figure}
    \centering
    \includegraphics[width=\columnwidth]{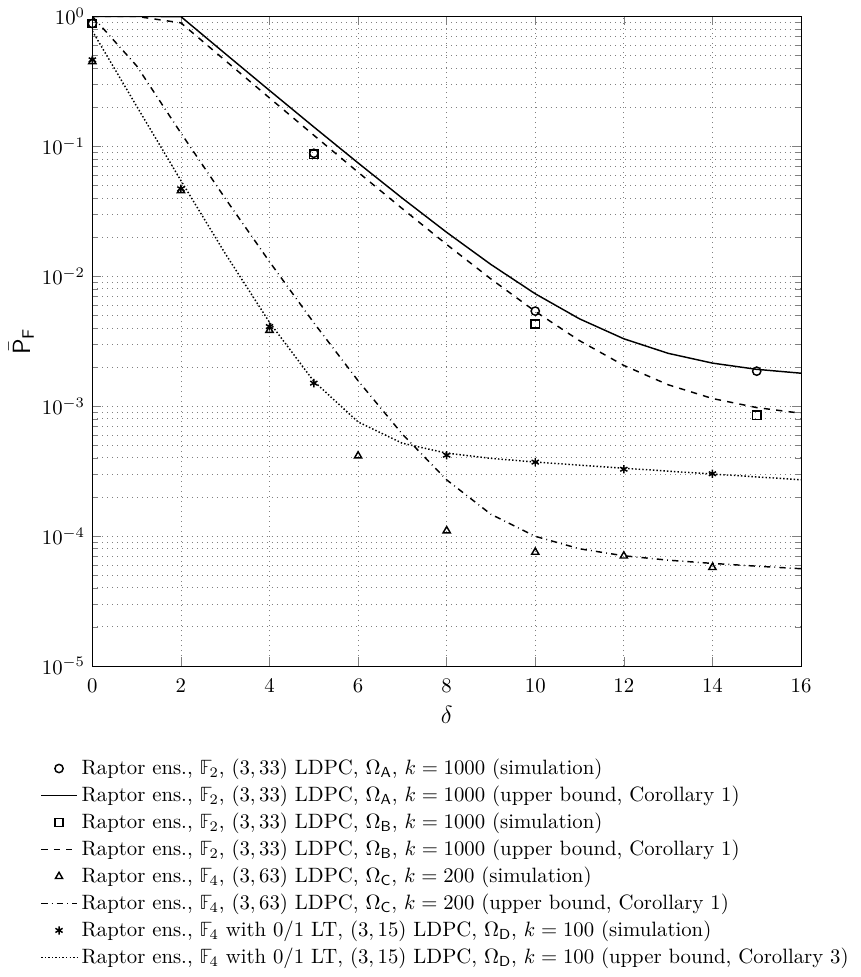}
    \caption{Average probability of decoding failure $\barPf$ vs absolute overhead for 4 different Raptor code ensembles.  
    The first and second ensemble have outer codes randomly drawn from the binary $(d_v=3, d_c= 33)$ regular LDPC ensemble  with $k=1000$ input symbols. For the second and third ensembles the outer codes are randomly drawn from the $(d_v=3, d_c= 63)$ regular LDPC ensemble  with $k=200$ and the $(d_v=3, d_c= 15)$ regular LDPC ensemble  with $k=100$. 
    The LT degree distributions are $\Omega_{\mathsf{A}}$, $\Omega_{\mathsf{B}}$, $\Omega_{\mathsf{C}}$ and $\Omega_{\mathsf{D}}$, respectively.
    Lines: upper bounds. Markers: simulation results.}\label{fig:LDPC_Design}
\end{figure}

Fig.~\ref{fig:LDPC_Design} shows the average probability of decoding failure and its upper bound in Corollary~\ref{corollary:rateless} for the designed ensemble based on $\Omega_{\mathsf{B}}$.  We can observe how the Raptor code ensemble meets the design requirement, since $\barPf < 10^{-3}$ for $\absoverhead=15$. 

If we now consider the number of inactivations, we have that the designed Raptor code ensemble, which employs $\Omega_{\mathsf{B}}$,  needs in average $94$ inactivations for an absolute overhead $\absoverhead=15$. In constrast, the Raptor code ensemble employing $\Omegarten$ needs $87$.
This confirms how a reduction in the number of inactivations forces the failure rate to jump above the maximum tolerable value.

\subsection{Design of a Nonbinary Raptor code with an LDPC Outer Code}\label{sec:code_design:nonbinary}

This design example is similar to the previous one, but this time we focus on a nonbinary Raptor code ensemble. In particular we aim at designing a Raptor code ensemble over $\GF4$, where the outer code is taken from the  $(d_v=3, d_c=63)$ regular LDPC ensemble with $k=200$  and $h=210$. The goal is minimizing the number of inactivations\footnote{The analysis in \cite{lazaro2017inactivation} is also valid for non-binary codes. The number of inactivations is a product of the first phase of inactivation decoding, triangulation, which is equivalent to column and row swapping and does not carry out any operations over the finite field. Thus, the number of inactivations only depends on the elements of the generator matrix of the LT code being zero or nonzero, and not on the particular value in  $\GF{q} \setminus \{0\}$ that the elements take.} subject to $\barPf \leq 10^{-4}$ at $\delta=10$. Using simulated annealing, the following degree distribution is obtained:
\begin{align}
    \Omega_{\mathsf{C}}(\x) &= \mbox{ \small $ 0.0214 \x + 0.3213 \x^2  +  0.2971 \x^3   + 0.0276 \x^4   $} \\
    & \mathrel{\phantom{=}} \mbox{ \small  $ + 0.0252 \x^5  +0.0418 \x^9    + 0.0458  \x^{13}  
    +  0.0654 \x^{18}  $}\\
    & \mathrel{\phantom{=}} \mbox{ \small $ + 0.0457 \x^{23}+ 0.0612 \x^{30} + 0.0295 \x^{35} 
    + 0.0180 \x^{40} $}. \label{deg_dist_nonbinary}
\end{align}
Fig.~\ref{fig:LDPC_Design} shows the average probability of decoding failure for the  ensemble obtained from the code design. We can observe how the constraint on $\barPf$ is fulfilled. The average number of inactivation needed for decoding at $\absoverhead=10$ is approximately~$32$.

\subsection{Design of a Raptor Code with a  $0/1$ LT Code}\label{sec:code_design_met_etc}

We now address the design of a nonbinary Raptor code ensemble with a $0/1$ LT code. We aim at designing a Raptor code ensemble over $\GF4$, where the outer code is taken from the  $(d_v=3, d_c=15)$ regular LDPC ensemble with $k=100$  and $h=125$. The goal is minimizing the number of inactivations subject to $\barPf \leq 2 \times 10^{-3}$ at $\delta=5$. Using simulated annealing, the following degree distribution is obtained:
\begin{align}
    \Omega_{\mathsf{D}}(\x) & = 0.0095 \x + 0.3896 \x^2  +  0.3159 \x^3   + 0.0843 \x^4   \\
    & \mathrel{\phantom{=}} + 0.0611 \x^{10}  +0.0585 \x^{15} + 0.0811  \x^{22}. \label{deg_dist_01}
\end{align}
Fig.~\ref{fig:LDPC_Design} shows the average probability of decoding failure for the designed ensemble. We can observe how the constraint on $\barPf$ is fulfilled. The average number of inactivations needed for decoding at $\absoverhead=5$ is approximately $22$.

\section{Conclusions}\label{sec:Conclusions}

In this paper we have considered different Raptor code constructions over  $\GF{q}$ under \acs{ML} decoding, deriving tight upper and lower bounds to the probability of decoding failure. 
The bounds are first derived for Raptor codes with a deterministic outer code, and then they are extended to Raptor code ensembles in which the outer code is drawn at random from an ensemble of linear block codes.
In all cases the upper bounds require the knowledge of the weight enumerator of the outer code (ensemble) or its composition enumerator, whereas the lower bounds require the knowledge of the joint weight/composition enumerators of the outer code (ensemble).
By means of extensive simulations we have illustrated how the bounds presented in this paper are tight.
A framework for the analysis of the error exponent of Raptor code ensemble sequences is introduced, which allows deriving a lower bound on the error exponent. The result allows gaining further insights on the performance of Raptor code ensemble sequences, by identifying relative overhead regions where an exponential (in the input block size) decay of the error probability can be achieved.
The work is completed by selected examples of Raptor code design based on the bounds derived in this paper.
To the best of the authors' knowledge, this is the first work which considers Raptor codes with a generic $q$-ary outer code. An open question relates to the concentration properties of Raptor code ensembles.

\appendices

\section{Sum of Random Uniform Variables in $\GF{2^m} \backslash \{0\}$}\label{sec:appendix_sum}

The following lemma is used in the proof of Theorem~\ref{theorem:rateless}.

\begin{lemma}\label{lemma:galois}
Let $X_1$, $X_2$ ... $X_l$  be discrete i.i.d random variables uniformly distributed over $\GF{2^m} \backslash \{0\}$. Then
\[
\Pr \{X_1 + X_2+ \hdots + X_l = 0 \}= \frac{1}{q} \left( 1 + \frac{(-1)^l}{(q-1)^{l-1}}\right)
\]
where $q=2^m$.
\end{lemma}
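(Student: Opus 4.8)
The plan is to prove the lemma by setting up and solving a first-order linear recurrence on the number of summands. Define $p_l = \Pr\{X_1 + \cdots + X_l = 0\}$ and write $S_{l-1} = X_1 + \cdots + X_{l-1}$ for the partial sum of the first $l-1$ variables. First I would condition on whether $S_{l-1}$ vanishes. Since $X_l$ is independent of $S_{l-1}$ and uniform over $\GF{2^m}\setminus\{0\}$, the event $S_{l-1} + X_l = 0$ forces $X_l = S_{l-1}$ (using $-1=1$ in characteristic $2$; in general $X_l = -S_{l-1}$). If $S_{l-1}=0$ this is impossible because $X_l\neq 0$; if $S_{l-1}\neq 0$ then $S_{l-1}$ is a fixed nonzero target hit with conditional probability exactly $1/(q-1)$. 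This yields
\begin{equation*}
p_l = \frac{1}{q-1}\,(1 - p_{l-1}), \qquad p_1 = 0 ,
\end{equation*}
for all $l\geq 2$, the base case $p_1=0$ being immediate from $X_1\neq 0$.

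Next I would solve this recurrence. Its unique fixed point is $p^{\star}=1/q$, so putting $e_l = p_l - 1/q$ gives the homogeneous relation $e_l = -\tfrac{1}{q-1}\,e_{l-1}$, whence $e_l = e_1\bigl(-1/(q-1)\bigr)^{l-1}$. Substituting $e_1 = p_1 - 1/q = -1/q$ produces
\begin{equation*}
p_l = \frac{1}{q} + \frac{1}{q}\,\frac{(-1)^l}{(q-1)^{l-1}} = \frac{1}{q}\left(1 + \frac{(-1)^l}{(q-1)^{l-1}}\right),
\end{equation*}
which is precisely the claimed expression. (A quick sanity check at $l=2,3$ recovers $p_2=1/(q-1)$ and $p_3=(q-2)/(q-1)^2$, matching the formula.)

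As an alternative that also makes the general-characteristic case transparent, I could count solutions directly via the additive characters $\psi$ of $(\GF{2^m},+)$: the number of tuples in $(\GF{2^m}\setminus\{0\})^l$ summing to zero is $\tfrac{1}{q}\sum_{\psi}\bigl(\sum_{x\neq 0}\psi(x)\bigr)^l$, where the trivial character contributes $(q-1)^l$ and each of the $q-1$ nontrivial characters contributes $(-1)^l$ because $\sum_{x\in\GF{q}}\psi(x)=0$. Dividing by $(q-1)^l$ reproduces the statement. Either route is elementary; the only step requiring genuine care is the conditioning argument, namely correctly distinguishing that a nonzero partial sum $S_{l-1}$ is cancelled by $X_l$ with probability $1/(q-1)$ whereas a zero partial sum can never be cancelled. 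This asymmetry is exactly what generates the alternating $(-1)^l$ factor, and getting it right is the crux of the proof.
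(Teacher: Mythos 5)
Your proof is correct, but it takes a genuinely different route from the paper. The paper works in the additive group of $\GF{2^m}$, identifies it with $\mathbb{Z}_2^m$, and computes the $l$-fold convolution of the uniform-on-nonzeros mass function via the $m$-dimensional two-point DFT, reading off $P_W(0)$ by inverting the transform; this is essentially your ``alternative'' character-sum argument, and as the paper itself notes in a footnote it is written only for characteristic $2$. Your primary argument instead conditions on whether the partial sum $S_{l-1}$ vanishes to obtain the recurrence $p_l = \tfrac{1}{q-1}(1-p_{l-1})$, $p_1=0$, and solves it by shifting to the fixed point $1/q$. Both the recurrence and its solution check out (the fixed-point computation, the geometric decay of $e_l$, and the sanity checks at $l=2,3$ are all right), and the key observation --- that a nonzero partial sum is cancelled with probability exactly $1/(q-1)$ while a zero partial sum is never cancelled --- is correctly identified and justified. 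What your route buys is that it is completely elementary and, unlike the paper's DFT argument, applies verbatim to any prime power $q$ (it only uses uniformity of $X_l$ over the $q-1$ nonzero field elements and independence), so it directly supplies the ``simple extension'' to general characteristic that the paper only asserts. What the paper's transform approach buys is a mechanical recipe that generalizes to other questions about convolutions of distributions on the additive group (e.g., the probability of hitting an arbitrary target value, not just $0$) without re-deriving a recurrence each time.
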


\begin{proof}
Observe that the additive group of $\GF{2^m}$ is isomorphic to the vector space $\mathbb Z_2^m$. Thus, we may let
$X_1$, $X_2$ ... $X_l$  be i.i.d random variables with uniform probability mass function over the vector space $\mathbb Z_2^m \backslash \{0\}$.

Let us introduce the auxiliary random variable $W = X_1 + X_2+ \hdots + X_l$ and let us denote by $P_W(w)$ and by $P_X(x)$ the probability mass functions of $W$ and $X_i$, respectively, where
\[
P_X(x) =
\begin{cases}
 0 & \text{if } x=0 \\
 \frac{1}{q-1} & \text{otherwise.}
 \end{cases}
\]
Due to independence we have $P_W = P_X \ast  P_X \ast \hdots \ast P_X$ which, taking the $m$-dimensional  two-points \ac{DFT} $\msr{J} \{\cdot\}$ of both sides, yields $ \msr J \{ P_W(w) \} =  \left( \msr J \{ P_X(x) \} \right)^l$. Next, since
\[
\hat P_X(t) = \msr J \{ P_X(x) \}=
\begin{cases}
 1 & \text{if } t=0 \\
 \frac{-1}{q-1} & \text{otherwise}
 \end{cases}
\]
we have
\[
\hat P_W(t) = \msr J \{ P_W(w) \} =
\begin{cases}
 1 & \text{if } t=0 \\
 \frac{(-1)^l}{(q-1)^l} & \text{otherwise.}
 \end{cases}
\]
We are interested in  $P_W(0)$ whose expression corresponds to
\begin{equation}
 P_W(0) = \frac{1}{q} \sum_t \hat P_W(t) = \frac{1}{q} + \frac{1}{q} (q-1)  \frac{(-1)^l}{(q-1)^l}
\end{equation}
from which the statement follows.
\end{proof}

The result in this lemma appears in \cite{schotsch:2013}. However, the proof in \cite{schotsch:2013} uses a different approach based on a known result on the number of closed walks of length $l$ in a complete graph of size $q$ from a fixed but arbitrary vertex back to itself.

\section{An Extension of the MacWilliams Identity}\label{sec:appendix_MacWilliams}

Consider a linear block code $\precodegeneric \subset \GF{q}^h$. The same way we defined its bivariate weight enumerator in \eqref{eq:weo_bivariate}, we can define its $h$-variate enumerator polynomial as
\[
\weo(x_1,\hdots, x_h) = \sum_{i_1=0}^{1} \hdots \sum_{i_h=0}^{1}  \weo_{i_1, \hdots, i_h} \prod_{j=1}^{h} x_j^{i_j}
\]
where $ \weo_{i_1, \hdots, i_h}$ denotes the multiplicity of codewords with $\hw(v_1) =i_1$, $\hw(v_2) =i_2$, ... and $\hw(v_h) =i_h$, i.e., the number of codewords with support $(i_i,i_2,\dots,i_h)$.
The following proposition establishes an extension of the MacWilliams identity for $h$-variate weight enumerators.

\begin{prop}\label{proposition:MacWilliams}
Let $\precodegeneric$ be an $(h,k)$ linear block code over $\GF{q}$ with $h$-variate weight enumerator $ \weo(x_1,\hdots, x_h)$. Let $\precodegeneric^\perp$ be the dual of $\precodegeneric$ and denote its $h$-variate weight enumerator by $ \weodual(x_1,\hdots, x_h)$. Then
\begin{align}\label{eq:macwi}
 \weodual(x_1,\hdots, x_h) &= q^{-k}  \prod_{i=1}^{h} \left( 1 + (q-1) x_i \right) \\
 & \mathrel{\phantom{=}} \times 
  \weo \left( \frac{1-x_1}{1+ (q-1) x_1} \, \hdots , \frac{1-x_h}{1+ (q-1) x_h} \right).
\end{align}
\end{prop}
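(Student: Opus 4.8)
The plan is to prove the multivariate MacWilliams identity \eqref{eq:macwi} by the standard character-sum (Fourier-analytic) argument, adapted to keep track of each coordinate separately rather than lumping coordinates together by total Hamming weight. The key object is the discrete Fourier transform (additive character sum) over the ambient space $\GF{q}^h$, and the crux is that the indicator of membership in the dual code can be written as an average of characters over the code itself.

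First I would fix a nontrivial additive character $\chi$ of $\GF{q}$, i.e.\ a group homomorphism $\chi\colon (\GF{q},+)\to\mathbb{C}^\times$ with $\sum_{a\in\GF{q}}\chi(a)=0$, and recall the orthogonality relation $\sum_{a\in\GF{q}}\chi(\langle a,y\rangle)=q$ if $y=0$ and $0$ otherwise. The standard starting point is the identity, valid for any function $g\colon\GF{q}^h\to\mathbb{C}$,
\begin{equation*}
\sum_{\mathbf{y}\in\precodegeneric^\perp} g(\mathbf{y}) \;=\; \frac{1}{|\precodegeneric|}\sum_{\mathbf{r}\in\GF{q}^h} g(\mathbf{r}) \sum_{\mathbf{v}\in\precodegeneric}\chi\!\left(\langle \mathbf{r},\mathbf{v}\rangle\right),
\end{equation*}
which follows from the fact that $\frac{1}{|\precodegeneric|}\sum_{\mathbf{v}\in\precodegeneric}\chi(\langle\mathbf{r},\mathbf{v}\rangle)$ is exactly the indicator that $\mathbf{r}\in\precodegeneric^\perp$.

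Next I would choose the separable test function that produces the $h$-variate enumerator. Taking $g(\mathbf{r})=\prod_{i=1}^h x_i^{\,\mathbb{1}[r_i\neq 0]}$, the left-hand side becomes exactly $\weodual(x_1,\dots,x_h)$, since grouping codewords of $\precodegeneric^\perp$ by their support $(\mathbb{1}[y_1\neq0],\dots,\mathbb{1}[y_h\neq0])$ reproduces \eqref{eq:weo_bivariate} extended to $h$ variables. The right-hand side factorizes across coordinates because both $g$ and the character split as products: swapping the order of summation and writing $\langle\mathbf{r},\mathbf{v}\rangle=\sum_i r_i v_i$ gives
\begin{equation*}
\weodual(x_1,\dots,x_h) \;=\; \frac{1}{|\precodegeneric|}\sum_{\mathbf{v}\in\precodegeneric}\;\prod_{i=1}^h\left(\sum_{r\in\GF{q}} x_i^{\,\mathbb{1}[r\neq0]}\,\chi(r\,v_i)\right).
\end{equation*}
The inner single-coordinate sum is then evaluated by cases on whether $v_i=0$: if $v_i=0$ it equals $1+(q-1)x_i$, and if $v_i\neq0$ it equals $1 + x_i\sum_{r\neq0}\chi(r v_i) = 1 - x_i$, using $\sum_{r\in\GF{q}}\chi(r v_i)=0$. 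Factoring out $1+(q-1)x_i$ from every coordinate converts the $v_i=0$ factor to $1$ and the $v_i\neq0$ factor to $\frac{1-x_i}{1+(q-1)x_i}$, so the product over $i$ becomes $\prod_i\bigl(1+(q-1)x_i\bigr)$ times $\prod_i\bigl(\tfrac{1-x_i}{1+(q-1)x_i}\bigr)^{\mathbb{1}[v_i\neq0]}$. Recognizing the latter product summed over $\mathbf{v}\in\precodegeneric$ as $\weo$ evaluated at the transformed arguments, and using $|\precodegeneric|=q^k$, yields \eqref{eq:macwi}.

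\textbf{The main obstacle} I anticipate is purely bookkeeping rather than conceptual: one must verify that the separable substitution $x_i\mapsto \frac{1-x_i}{1+(q-1)x_i}$ is applied coordinatewise in exactly the right slots, and that the case analysis $v_i=0$ versus $v_i\neq0$ correctly matches the definition of the $h$-variate enumerator (which records support, not the actual field value, in each coordinate). A subtle point worth checking explicitly is that the single-coordinate sum for $v_i\neq0$ is independent of which nonzero value $v_i$ takes, which is precisely what makes the per-coordinate factor depend only on the support indicator; this is where the triviality of $\chi$ on the whole additive group (via $\sum_{r}\chi(rv_i)=0$ for $v_i\neq0$) does the essential work. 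No convergence or analytic issues arise since everything is a finite sum of polynomials in the formal variables $x_1,\dots,x_h$.
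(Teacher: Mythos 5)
Your proposal is correct and follows essentially the same route as the paper's proof: a nontrivial additive character of $(\GF{q},+)$, the orthogonality relation identifying $\frac{1}{|\precodegeneric|}\sum_{\vecv\in\precodegeneric}\charact(\langle\mathbf{r},\vecv\rangle)$ with the indicator of $\mathbf{r}\in\precodegeneric^\perp$, coordinate-wise factorization of the character sum, and the case analysis $v_i=0$ versus $v_i\neq 0$ yielding $1+(q-1)x_i$ versus $1-x_i$. The only difference is organizational (you state the Poisson-summation-type identity upfront, while the paper defines $g(\vecu)$ and evaluates $\sum_{\vecu\in\precodegeneric}g(\vecu)$ in two ways), which does not change the substance.
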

\begin{proof}
The proof builds on that that of the MacWilliams identity for linear block codes over $\mathbb {F}_{q}$ \cite{vanlint:book}.
We start by rewriting $ \weo(x_1,\hdots, x_h)$ as
\[
\weo(x_1,\hdots, x_h) = \sum_{\vecv \in \precodegeneric} \prod_{i=1}^{h} {x_i}^{\hw(v_i)}
\]
Let us now  define function $g(\vecu)$ as follows
\[
g(\vecu) = \sum_{\vecv \in  \mathbb {F}_{q}^h } \charact \left( \langle \vecu , \vecv \rangle \right)  \prod_{i=1}^{h} {x_i}^{\hw(v_i)}
\]
where $\charact$ is a non-trivial character of $\left( \mathbb {F}_{q} , + \right)$.

We have
\begingroup
\allowdisplaybreaks
\begin{align}\label{eq:mw1}
  \sum_{ \vecu \in \precodegeneric } g(\vecu) &=  \sum_{ \vecu \in \precodegeneric } \sum_{\vecv \in  \mathbb {F}_{q}^h } \charact \left( \langle \vecu , \vecv \rangle \right)
  \prod_{i=1}^{h} {x_i}^{\hw(v_i)} \\ 
  & =  \sum_{\vecv \in  \mathbb {F}_{q}^h }    \prod_{i=1}^{h} {x_i}^{\hw(v_i)}  \sum_{ \vecu \in \precodegeneric }  \charact \left( \langle \vecu , \vecv \rangle \right) \\
  &= \sum_{\vecv \in  \precodegeneric^\perp }   \prod_{i=1}^{h} {x_i}^{\hw(v_i)}  \sum_{ \vecu \in \precodegeneric }  \charact \left( \langle \vecu , \vecv \rangle \right) \\
  & \mathrel{\phantom{=}} 
  +  \sum_{\vecv \notin  \precodegeneric^\perp }    \prod_{i=1}^{h} {x_i}^{\hw(v_i)}  \sum_{ \vecu \in \precodegeneric }  \charact \left( \langle \vecu , \vecv \rangle \right)  \\
  &=  \sum_{\vecv \in  \precodegeneric^\perp }  \prod_{i=1}^{h} {x_i}^{\hw(v_i)}   \sum_{ \vecu \in \precodegeneric }  \charact \left( 0 \right) = \sum_{\vecv \in  \precodegeneric^\perp }   \prod_{i=1}^{h} {x_i}^{\hw(v_i)}   \, | \precodegeneric |  \\
  &= | \precodegeneric | \,  \weodual(x_1,\hdots, x_n)
\end{align}
\endgroup

Let us now rewrite $g(\vecu)$ as follows
\begin{align}\label{eq:gu}
  g(\vecu) &=  \sum_{\vecv \in  \mathbb {F}_{q}^h }   \prod_{i=1}^{h} {x_i}^{\hw(v_i)}   \, \charact \left(  u_1 v_1 + \hdots + u_h v_h \right) \\
   &=  \sum_{\vecv \in  \mathbb {F}_{q}^h }  \prod_{i=1}^{h} {x_i}^{\wh(v_i)}  \charact \left(  u_i v_i \right) \\
   &=  \prod_{i=1}^{h} \sum_{v \in  \mathbb {F}_{q} }  {x_i}^{\wh(v)}  \charact \left(  u_i v \right)
\end{align}

Let us now look at the inner summation, we have

\[
\sum_{v \in \mathbb {F}_{q} } \mkern-4mu {x_i}^{\wh(v)}  \charact \left(  u_i v \right) =
\begin{cases}
1+ (q-1) x_i, & \mbox{if } u_i=0 \\
1 + x_i \mkern-22mu \sum \limits_{\alpha \in \mathbb {F}_{q} \backslash \{ 0\}} \mkern-20mu \charact \left(  \alpha \right)= 1 - x,\mkern-10mu  & \mbox{otherwise.}
\end{cases}
\]

Thus, we can write
\begin{align}\label{eq:gu2}
  g(\vecu) &= \prod_{i=1}^{h} (1-x_i)^{\wh(v_i)} \left( 1 + (q-1) x_i \right)^{1 - \wh(v_i)}
\end{align}

Finally, if we replace \eqref{eq:gu2} into \eqref{eq:mw1} we obtain
\begingroup
\allowdisplaybreaks
\begin{align}\label{eq:bxz}
 \weodual(x,z) &= \frac{1}{| \precodegeneric |} \sum_{ \vecu \in \precodegeneric } g(\vecu)  \\
 &= \frac{1}{| \precodegeneric |}   \sum_{ \vecu \in \precodegeneric } \prod_{i=1}^{h} (1-x_i)^{\wh(v_i)} \left( 1 + (q-1) x_i \right)^{1 - \wh(v_i)}  \\
   &= q^{-k}  \prod_{i=1}^{h} \left( 1 + (q-1) x_i \right) \\
   & \qquad \qquad \, \,\times  \weo \left( \frac{1-x_1}{1+ (q-1) x_1} \, {\hdots} , \frac{1-x_h}{1+ (q-1) x_h} \right)
\end{align}
\endgroup
\end{proof}
The result in Proposition~\ref{proposition:MacWilliams} is strongly related to the result derived in \cite[Appendix]{battail1979replication}, where a similar analysis is used to derive a maximum-a-posteriori decoding algorithm for a code based on its dual. However, for the sake of completeness, we decided to include the result in the form of a Theorem with its corresponding proof.

Now that we have a MacWilliams identity for $h$-variate weight enumerators it is easy to derive a similar result for bi-variate weight enumerators.
\begin{prop}\label{theorem:MacWilliams_bivariate}
Let $\precodegeneric$ be an $(h,k)$ linear block code over $\GF{q}$ in which the $h$ codeword symbols are divided into $h_A$ symbols of class $A$ and $h_B=h-h_A$ of class $B$, with bivariate weight enumerator of $\weo(x,z)$. Let $\precodegeneric^\perp$ be the dual of $\precodegeneric$ and denote its bivariate weight enumerator by $\weodual(x,z)$. Then
\begin{align}\label{eq:macwibi}
\weodual(x,z) &= q^{-k} \left( 1 + (q-1) x \right)^{h_A} \left( 1 + (q-1) z \right)^{h_B} \\
& \mathrel{\phantom{=}} \times
\weo \left( \frac{1-x}{1+ (q-1) x} \, , \frac{1-z}{1+ (q-1) z} \right).
\end{align}
\end{prop}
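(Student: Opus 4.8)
The plan is to obtain the bivariate identity \eqref{eq:macwibi} as an immediate specialization of the $h$-variate MacWilliams identity already established in Proposition~\ref{proposition:MacWilliams}, so that no new character-sum computation is required. The guiding observation is that the bivariate weight enumerator $\weo(x,z)$ is nothing but the $h$-variate enumerator $\weo(x_1,\dots,x_h)$ evaluated after identifying variables according to symbol class.

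First I would record this identification precisely. Assigning $x_i=x$ to each of the $h_A$ variables indexed by a class-$A$ position and $x_i=z$ to each of the $h_B$ variables indexed by a class-$B$ position turns the monomial $\prod_{i=1}^{h} x_i^{\hw(v_i)}$ into $x^{\hw(\vecv_A)} z^{\hw(\vecv_B)}$ for every codeword $\vecv=(\vecv_A,\vecv_B)$. Summing over $\vecv\in\precodegeneric$ shows that under this substitution $\weo(x_1,\dots,x_h)$ collapses exactly to $\weo(x,z)$, and the same statement holds verbatim for the dual enumerator $\weodual$.

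Next I would apply this very substitution to both sides of \eqref{eq:macwi}. The left-hand side becomes $\weodual(x,z)$ by the previous step. On the right-hand side, the product factors as $\prod_{i=1}^{h}\bigl(1+(q-1)x_i\bigr)=\bigl(1+(q-1)x\bigr)^{h_A}\bigl(1+(q-1)z\bigr)^{h_B}$, since each factor depends only on the class of its index. Inside the remaining evaluation of $\weo$, the $h$ rational arguments take only the two values $\frac{1-x}{1+(q-1)x}$ (in class-$A$ slots) and $\frac{1-z}{1+(q-1)z}$ (in class-$B$ slots); applying the identification of the first step in reverse, this evaluation is precisely the bivariate enumerator $\weo\!\left(\frac{1-x}{1+(q-1)x},\frac{1-z}{1+(q-1)z}\right)$. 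Collecting the three pieces yields \eqref{eq:macwibi}.

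The only point requiring care — and it is minor — is the consistency of the last substitution: one must check that feeding the $h$-variate enumerator the two-valued argument vector genuinely reproduces the bivariate enumerator of $\precodegeneric$, rather than a partially substituted polynomial. This is immediate from the definition of $\weo(x,z)$ as a specialization of $\weo(x_1,\dots,x_h)$, but it is worth stating explicitly because it is the single place where the class partition interacts with the rational arguments produced by Proposition~\ref{proposition:MacWilliams}. Beyond this bookkeeping, all of the analytic content has already been discharged in Proposition~\ref{proposition:MacWilliams}, so the proof is purely a matter of variable identification.
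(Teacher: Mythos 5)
Your proposal is correct and follows exactly the paper's own route: the paper likewise proves Proposition~\ref{theorem:MacWilliams_bivariate} by substituting $x_i=x$ for the $h_A$ class-$A$ positions and $x_i=z$ for the $h_B$ class-$B$ positions into the $h$-variate identity of Proposition~\ref{proposition:MacWilliams}. Your write-up merely makes explicit the bookkeeping (that the two-valued substitution collapses the $h$-variate enumerator to the bivariate one on both sides) that the paper leaves implicit.
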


\begin{proof}
We just need to introduce the variable changes $x_i=x$ for $i=1,\hdots, h_A$ and  $x_i=z$ for $i=h_A+1,\hdots, h$ in  Proposition~\ref{proposition:MacWilliams}.
\end{proof}
Note that the special case of Proposition~\ref{theorem:MacWilliams_bivariate} for $h_B=h_A$ is proposed in \cite[Chapter~5.6]{MacWillimas77:Book} as an exercise.

\section{Average Composition Enumerators of Some Codes Ensembles}\label{sec:appendix_composition}

This appendix provides results on the average composition enumerator of some code ensembles. The following proposition states that, in some cases, the average composition enumerator can be easily derived from the average weight enumerator.
\begin{prop}\label{prop:comp_ensemble}
Consider an ensemble $\codensemble$ of linear block codes, all with block length $h$, along with a probability measure on each such code. Let $\weoensemble_l$ be the expected weight enumerator of a random code $\precodegeneric \in \codensemble$. Assume that $\Pr\{ \vecv \in \precodegeneric | \comp(\vecv)= \labv \} = \Pr\left\{ \vecv \in \precodegeneric | \wh(\vecv) = \sum_{i=1}^{q-1} \lab_i \right\}$ for all $\vecv \in \GF{q}^h$. Then
\begin{equation}\label{eq:composition_from_weight_enum}
\lenumensemble  = \weoensemble_l \binom{l}{\lab_1, \lab_2, \hdots, \lab_{q-1}} \left(q-1\right) ^{-l} 
\end{equation}
where $\l = \sum_{i=1}^{q-1} \lab_i$.
\end{prop}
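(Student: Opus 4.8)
The plan is to compute both enumerators directly as expectations of counting functions and then to exploit the stated hypothesis to reduce a composition count to a weight count. First I would write, via linearity of expectation and codeword-membership indicators,
\[
\weoensemble_l = \sum_{\vecv \in \GF{q}^h:\, \wh(\vecv)=l} \Pr\{\vecv \in \precodegeneric\}, \qquad \lenumensemble = \sum_{\vecv \in \GF{q}^h:\, \comp(\vecv)=\labv} \Pr\{\vecv \in \precodegeneric\},
\]
so that each expected enumerator is a sum of membership probabilities over the appropriate set of length-$h$ vectors, the probability being taken over the draw of $\precodegeneric$ from $\codensemble$.

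The key step is to invoke the hypothesis $\Pr\{\vecv \in \precodegeneric \mid \comp(\vecv)=\labv\} = \Pr\{\vecv \in \precodegeneric \mid \wh(\vecv)=l\}$, with $l=\sum_{i=1}^{q-1}\lab_i$, which says that the probability that a fixed vector is a codeword depends only on its Hamming weight. I would denote this common value by $p_l$. Then every vector of weight $l$ contributes the same $p_l$, and in particular so does every vector of composition $\labv$, since such vectors have weight exactly $l$.

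Next I would carry out the elementary counting. There are $\binom{h}{l}(q-1)^l$ vectors of weight $l$ and $\binom{h}{\lab_0,\lab_1,\dots,\lab_{q-1}}$ vectors of composition $\labv$ (with $\lab_0 = h-l$). Hence $\weoensemble_l = \binom{h}{l}(q-1)^l\, p_l$ and $\lenumensemble = \binom{h}{\lab_0,\dots,\lab_{q-1}}\, p_l$. Eliminating $p_l$ gives
\[
\lenumensemble = \weoensemble_l \, \frac{\binom{h}{\lab_0,\lab_1,\dots,\lab_{q-1}}}{\binom{h}{l}(q-1)^l},
\]
and the final step is the simplification $\binom{h}{\lab_0,\dots,\lab_{q-1}}/\binom{h}{l} = l!/(\lab_1!\cdots\lab_{q-1}!) = \binom{l}{\lab_1,\dots,\lab_{q-1}}$, obtained by cancelling $h!$ and using $\lab_0 = h-l$, which yields \eqref{eq:composition_from_weight_enum}.

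I do not anticipate a genuine obstacle here; the argument is essentially bookkeeping. The only point requiring care is the precise reading of the assumption: it must be understood as the statement that the codeword-membership probability is invariant across all vectors sharing a common Hamming weight, which is exactly what permits the single scalar $p_l$ to be factored out of both sums. Everything else is the standard enumeration of weight-$l$ and composition-$\labv$ vectors together with the cancellation of factorials.
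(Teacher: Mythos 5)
Your argument is correct and follows essentially the same route as the paper's proof: both express $\lenumensemble$ as the count $\binom{h}{\labv}$ of vectors with composition $\labv$ times the common membership probability, identify that probability with $\weoensemble_l / \bigl(\binom{h}{l}(q-1)^l\bigr)$ via the hypothesis, and conclude with the multinomial identity $\binom{h}{\labv}=\binom{h}{l}\binom{l}{\lab_1,\dots,\lab_{q-1}}$. Your version merely makes the linearity-of-expectation step more explicit.
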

\begin{IEEEproof}
We can express $\lenumensemble$ as the number of vectors of composition $\labv$ times the probability that each such vector is a codeword. Letting ${l =\sum_{i=1}^{q-1} \lab_i = \wh(\vecv)}$ we can write
\begingroup
\allowdisplaybreaks
\begin{align*}
\lenumensemble &= \binom{h}{\labv} \Pr\{ \vecv \in \precodegeneric | \comp(\vecv)= \labv \} = \binom{h}{\labv} \Pr \{ \vecv \in \precodegeneric | \wh(\vecv) = l \} \notag \\
&= \binom{h}{\labv}  \frac{\weoensemble_{\l}}{{h \choose l}(q-1)^{\l}} \end{align*}
\endgroup
The last obtained expression yields \eqref{eq:composition_from_weight_enum} by applying the identity $\binom{h}{\labv}=\binom{h}{l} \binom{l}{f_1,f_2,\dots,f_{q-1}}$.
\end{IEEEproof}
Examples of ensembles for which the assumption on Proposition~\ref{prop:comp_ensemble} holds are the uniform parity-check ensemble and the (regular and irregular) \ac{LDPC} code ensembles.
\subsubsection{Uniform parity-check ensemble} For a uniform parity-check ensemble defined by a random parity-check matrix of size $(h-k) \times h$ with \ac{i.i.d.} entries uniformly distributed in $\GF{q}$ we have $\weoensemble_{\l} = \binom{h}{\l} (q-1)^{\l} q^{-(h-k)}$ and therefore \eqref{eq:composition_from_weight_enum} leads to
\[
\lenumensemble = \binom{h}{\labv} \, q^{-(h-k)}.
\]
\subsubsection{Regular \ac{LDPC} ensemble} Consider a $(d_v, d_c)$ regular \ac{LDPC} code ensemble of length $h$, where $d_v$ and $d_c$ are the variable and check node degrees, respectively. The ensemble is defined by all possible permutations of the $h d_v = (h-k) d_c$ edges between check and variable node sockets and by all possible ways to label the edges with nonzero symbols. Each edge permutation is picked with uniform probability and the label of each edge is drawn uniformly at random in $\GF{q} \backslash \{ 0\}$. The average weight enumerator for this ensemble is given by \cite{burshtein:04,kasai:2008}
\begin{align}\label{eq:ldpc}
  \weoensemble_l &= \binom{h}{l}
  \frac{\text{coeff} \left( p(x)^{h \, d_v /d_c}, x^{l \, d_v} \right) }{\binom{h \, d_v}{ l \, d_v} (q-1)^{ l ( d_v -1) }}
\end{align}
where
$p(x) = \frac{1}{q} \left( 1 + (q-1) x \right)^{d_c} + \frac{q-1}{q} (1-x)^{d_c}$. Hence, applying \eqref{eq:composition_from_weight_enum} we obtain
\begin{align*}
\lenumensemble = \binom{h}{\labv} \frac{\text{coeff} \left( p(x)^{h \, d_v /d_c}, x^{l d_v} \right)}{\binom{h \, d_v}{ l \, d_v}} (q-1)^{- \l d_v}
\end{align*}
Proposition~\ref{prop:comp_ensemble} can be extended to bivariate enumerators using the same proof argument.
\begin{prop}\label{prop:comp_ensemble_bivar}
Consider an ensemble $\codensemble$ of linear block codes, all with block length $h=h_A+h_B$, along with a probability measure on each such code. Let $\weoensemble_{l,s}$ be the expected bivariate weight enumerator of a random code $\precodegeneric \in \codensemble$. Assume that $\Pr\{ \vecv \in \precodegeneric | \comp(\vecv_A)= \labv_A, \comp(\vecv_B)= \labv_B \} = \Pr\left\{ \vecv \in \precodegeneric | \wh(\vecv_A) = \sum_{i=1}^{q-1} \lab_{A,i}, \wh(\vecv_B) = \sum_{i=1}^{q-1} \lab_{B,i} \right\}$ for all $\vecv = (\vecv_A, \vecv_B) \in \GF{q}^h$. Then
\begin{align}
\lenumbiensemble &= \weoensemble_{l,s}  \, \binom{l}{\lab_{A,1}, \lab_{A,2}, \hdots, \lab_{A,q-1}} \left(q-1\right) ^{-l} \\ 
& \mathrel{\phantom{=}} \times 
\binom{s}{\lab_{B,1}, \lab_{B,2}, \hdots, \lab_{B,q-1}} \left(q-1\right) ^{-s}
\label{eq:composition_from_weight_enum_bi}
\end{align}
where $\sum_{i=1}^{q-1} l = \lab_{A,i}$ and $s = \sum_{i=1}^{q-1}$.
\end{prop}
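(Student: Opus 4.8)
The plan is to follow the proof of Proposition~\ref{prop:comp_ensemble} almost verbatim, extending every counting step so that it factorizes over the two symbol classes $A$ and $B$. First I would write $\lenumbiensemble$, by linearity of expectation, as the number of pairs $(\vecv_A,\vecv_B)\in\GF{q}^{h_A}\times\GF{q}^{h_B}$ with $\comp(\vecv_A)=\labv_A$ and $\comp(\vecv_B)=\labv_B$, multiplied by the probability that one such fixed pair is a codeword of a random $\precodegeneric\in\codensemble$. Since the compositions of the two sub-blocks are counted independently, the number of such pairs is the product of multinomial coefficients $\binom{h_A}{\labv_A}\binom{h_B}{\labv_B}$, so that
\[
\lenumbiensemble = \binom{h_A}{\labv_A}\binom{h_B}{\labv_B}\,\Pr\{\vecv\in\precodegeneric \mid \comp(\vecv_A)=\labv_A,\comp(\vecv_B)=\labv_B\}.
\]

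At this point I would invoke the proposition's hypothesis to replace the composition-conditioned membership probability by the weight-conditioned one, $\Pr\{\vecv\in\precodegeneric \mid \wh(\vecv_A)=l,\wh(\vecv_B)=s\}$, with $l=\sum_{i=1}^{q-1}\lab_{A,i}$ and $s=\sum_{i=1}^{q-1}\lab_{B,i}$. To turn this last probability into the bivariate weight enumerator I would count the pairs of prescribed split weight: there are $\binom{h_A}{l}(q-1)^{l}$ sub-blocks $\vecv_A$ of weight $l$ and $\binom{h_B}{s}(q-1)^{s}$ sub-blocks $\vecv_B$ of weight $s$, whence
\[
\Pr\{\vecv\in\precodegeneric \mid \wh(\vecv_A)=l,\wh(\vecv_B)=s\} = \frac{\weoensemble_{l,s}}{\binom{h_A}{l}(q-1)^{l}\binom{h_B}{s}(q-1)^{s}}.
\]

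Substituting this back and regrouping, the result follows by applying the multinomial factorization $\binom{h_A}{\labv_A}=\binom{h_A}{l}\binom{l}{\lab_{A,1},\dots,\lab_{A,q-1}}$ (and the analogous identity for the $B$-block), exactly as in the univariate proof: the binomial factors $\binom{h_A}{l}$ and $\binom{h_B}{s}$ cancel against the denominator, leaving $\weoensemble_{l,s}$ times the two multinomials and the factors $(q-1)^{-l}$ and $(q-1)^{-s}$, which is precisely \eqref{eq:composition_from_weight_enum_bi}. I do not anticipate a serious obstacle, as the whole argument is a routine adaptation of Proposition~\ref{prop:comp_ensemble}. The only point requiring a little care is verifying that both the bicomposition count and the split-weight count factorize cleanly across the classes $A$ and $B$, so that the two multinomial identities may be applied separately; the factorization hypothesis on the conditional membership probability is exactly what legitimizes the substitution from composition to weight in this bivariate setting.
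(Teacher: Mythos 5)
Your proposal is correct and follows exactly the route the paper intends: the paper proves Proposition~\ref{prop:comp_ensemble_bivar} simply by stating that the argument of Proposition~\ref{prop:comp_ensemble} extends verbatim, and your write-up is precisely that extension, with the counting, the hypothesis-based substitution from compositions to split weights, and the multinomial factorization all carried out separately over the $A$ and $B$ blocks. No gaps.
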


\section{Average Bicomposition Enumerator of Uniform Parity-Check Ensembles}\label{app:bicomp}
This appendix provides results on the average bicomposition and biweight enumerators of some ensembles.
\begin{prop}\label{prop:rand_bicomp}
Consider the uniform parity-check ensemble defined by a random parity-check matrix of size $(h-k) \times h$ with \ac{i.i.d.} entries with uniform distribution in $\mathbb F_q$. For all $\bkappa \in \cK_{q,h}$, the expected joint composition enumerator for a random code drawn for the ensemble is
\begin{align*}
\avgJCEF_{\bkappa} = {h \choose \bkappa} q^{-2(h-k)}.
\end{align*}
\end{prop}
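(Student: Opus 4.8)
The plan is to use linearity of expectation to reduce the computation of $\avgJCEF_{\bkappa}$ to a counting problem together with a single per-vector-pair probability. Since the joint composition enumerator $\JCEF_{\bkappa}$ merely counts ordered codeword pairs of a fixed joint composition, I would first write
\begin{align*}
\avgJCEF_{\bkappa} = \Exp_{\precodegeneric}\bigl[\JCEF_{\bkappa}\bigr] = \sum_{\substack{(\vecv,\bfz)\in\GF{q}^h\times\GF{q}^h \\ \jc(\vecv,\bfz)=\bkappa}} \Pr\bigl\{\vecv\in\precodegeneric \ \text{and}\ \bfz\in\precodegeneric\bigr\},
\end{align*}
where the sum ranges over all \emph{vector} pairs of joint composition $\bkappa$ and the probability is taken over the random parity-check matrix $\mathbf{H}$. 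The number of such vector pairs is exactly the multinomial coefficient $\binom{h}{\bkappa}$: prescribing $\jc(\vecv,\bfz)=\bkappa$ amounts to distributing the $h$ coordinate positions among the $q^2$ cells of $\bkappa$, the cell $(s,t)$ pinning down both the value of $\vecv$ (namely $0$ if $s=0$, else $\alpha^{s-1}$) and the value of $\bfz$ in those positions.

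Next I would evaluate the common probability. By Lemma~\ref{lemma:KD}, the hypothesis $\bkappa\in\cK_{q,h}$ forces every pair $(\vecv,\bfz)$ of joint composition $\bkappa$ to be linearly independent. For such a pair, $\{\vecv\in\precodegeneric\}\cap\{\bfz\in\precodegeneric\}$ is equivalent to $\mathbf{H}\transpose{\vecv}=\mathbf{0}$ and $\mathbf{H}\transpose{\bfz}=\mathbf{0}$. The rows of $\mathbf{H}$ are i.i.d.\ uniform over $\GF{q}^h$, and because $\vecv$ and $\bfz$ are linearly independent the map $\mathbf{h}\mapsto(\langle\mathbf{h},\vecv\rangle,\langle\mathbf{h},\bfz\rangle)$ is a surjective $\GF{q}$-linear map onto $\GF{q}^2$. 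Hence, for a uniform row $\mathbf{h}$, the pair of inner products is uniform over $\GF{q}^2$ and equals $(0,0)$ with probability $q^{-2}$. Multiplying over the $h-k$ independent rows gives $\Pr\{\vecv\in\precodegeneric\ \text{and}\ \bfz\in\precodegeneric\}=q^{-2(h-k)}$, the same value for every pair with joint composition $\bkappa$.

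Combining the two ingredients yields $\avgJCEF_{\bkappa}=\binom{h}{\bkappa}\,q^{-2(h-k)}$. The main obstacle is the per-row probability step: one must verify that linear independence of $\vecv$ and $\bfz$ --- which is precisely what $\bkappa\in\cK_{q,h}$ secures through Lemma~\ref{lemma:KD} --- makes the $2\times h$ matrix with rows $\vecv$ and $\bfz$ have rank $2$, so that the joint annihilation probability per row is $q^{-2}$ rather than $q^{-1}$. Everything else is bookkeeping: confirming that the pair count is the multinomial coefficient and that the probability is identical across all pairs with the given joint composition, which legitimizes pulling it out of the sum.
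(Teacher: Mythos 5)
Your proposal is correct, and it follows the same overall decomposition as the paper's proof: both write $\avgJCEF_{\bkappa}$ as the number $\binom{h}{\bkappa}$ of vector pairs with joint composition $\bkappa$ times a common membership probability, and both reduce that probability to a per-row quantity $\mathsf{p}_{\bkappa}$ raised to the power $h-k$ using independence of the rows of $\mathbf{H}$. Where you genuinely diverge is in establishing $\mathsf{p}_{\bkappa}=q^{-2}$. The paper splits into five cases according to the support structure encoded by $\bkappa$ (partially overlapping supports, one support containing the other, disjoint supports, equal supports without proportionality) and verifies the value case by case; only in its last case does it invoke anything like a rank argument, and even there only on a $2\times 2$ submatrix. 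You instead observe once and for all that $\bkappa\in\cK_{q,h}$ forces $\vecv$ and $\bfz$ to be linearly independent (this is exactly the content of Lemma~\ref{lemma:KD}, whose proof does not actually use the code structure, so applying it with $\precodegeneric=\GF{q}^h$ is legitimate), whence the map $\mathbf{h}\mapsto(\vecv\transpose{\mathbf{h}},\bfz\transpose{\mathbf{h}})$ is a surjective linear map onto $\GF{q}^2$ and the image of a uniform $\mathbf{h}$ is uniform on $\GF{q}^2$. This single rank-two argument subsumes all five of the paper's cases and is the cleaner route; what the paper's case analysis buys in exchange is only that it is entirely self-contained and elementary, never appealing to the fiber-uniformity of surjective linear maps. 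Your counting of the pairs via the multinomial coefficient and the justification for pulling the common probability out of the sum are both sound.
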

\begin{IEEEproof}
The parameter $\avgJCEF_{\bkappa}$ may be expressed as the total number of pairs $(\mathbf{r}_1,\mathbf{r}_2) \in \bbF{q}^h \times \bbF{q}^h $ with joint composition $\bkappa$, times the probability that both $\mathbf{r}_1$ and $\mathbf{r}_2$ are codewords given that their joint composition is $\bkappa$. Hence, we can write
 \begin{align*}
\avgJCEF_{\bkappa} &= {h \choose \bkappa} \Pr\{ \{\mathbf{r}_1 \transpose{\mathbf{H}} = \mathbf{0}\} \cap \{ \mathbf{r}_2 \transpose{\mathbf{H}} = \mathbf{0}\} | \jc(\mathbf{r}_1,\mathbf{r}_2) = \bkappa \}  \\
& = {h \choose \bkappa} \mathsf{p}_{\bkappa}^{h-k}
\end{align*}
where, letting $\mathbf{h}$ be the generic row of $\mathbf{H}$, $\mathsf{p}_{\bkappa} = \Pr \{ \{\mathbf{r}_1 \transpose{\mathbf{h}} = 0 \} \cap \{ \mathbf{r}_2 \transpose{\mathbf{h}} = 0\} | \jc(\mathbf{r}_1,\mathbf{r}_2) = \bkappa \}$. If $\bkappa \in \cK_{q,h}$ then five different cases may occur; next we show that in all of them we have $\mathsf{p}_{\bkappa} = q^{-2}$. We repeatedly exploit the following property: if $\mathbf{r} \in \bbF{q}^h$ and $\mathbf{h}$ is a random vector in $\bbF{q}^h$ whose elements are uniform i.i.d. random variables in $\bbF{q}$, then $\Pr \{ \mathbf{r} \, \transpose{\mathbf{h}} = \beta \} = q^{-1}$ for all $\beta \in \bbF{q}$. For the sake of notational simplicity, we denote by $E_{\bkappa}$ the event that $\jc(\mathbf{r}_1,\mathbf{r}_2) = \bkappa$.\\
\indent\emph{Case 1:} $|\bkappa_1|>0$, $|\bkappa_2|>0$, $|\bkappa_3|>0$ ($\mathbf{r}_1$ and $\mathbf{r}_2$ have partially overlapping supports). Without loss of generality, assume $\mathbf{r}_1 = (\mathbf{r}_{1,1} | \mathbf{r}_{1,2} | \mathbf{0} | \mathbf{0})$ and $\mathbf{r}_2=(\mathbf{0} | \mathbf{r}_{2,1} | \mathbf{r}_{2,2} | \mathbf{0})$, where $\mathbf{r}_{1,1}$, $\mathbf{r}_{1,2}$, $\mathbf{r}_{2,1}$, and $\mathbf{r}_{2,2}$ are nonzero and all subvectors occupying the same position have the same length. Letting $\mathbf{h}=(\mathbf{h}_1 | \mathbf{h}_2 | \mathbf{h}_3 | \mathbf{h}_4)$ we have $\mathsf{p}_{\bkappa} = \Pr \{ {\{\mathbf{r}_{1,1} \transpose{\mathbf{h}}_1 + \mathbf{r}_{1,2} \transpose{\mathbf{h}}_2 = 0 \}} \cap \{\mathbf{r}_{2,1} \transpose{\mathbf{h}}_2 + \mathbf{r}_{2,2} \transpose{\mathbf{h}}_3 = 0 \} | E_{\bkappa} \} = \Pr \{ \mathbf{r}_{1,1} \transpose{\mathbf{h}}_1 + \mathbf{r}_{1,2} \transpose{\mathbf{h}}_2 = 0 | E_{\bkappa} \} \, \Pr \{ \mathbf{r}_{2,1} \transpose{\mathbf{h}}_2 + \mathbf{r}_{2,2} \transpose{\mathbf{h}}_3 = 0 | E_{\bkappa} \} = (q^{-1}) (q^{-1}) = q^{-2}$, where we exploited independence of $\mathbf{h}_1$, $\mathbf{h}_2$, and $\mathbf{h}_3$.\\
\indent\emph{Case 2:} $|\bkappa_1|>0$, $|\bkappa_2|=0$, $|\bkappa_3|>0$ (the support of $\mathbf{r}_2$ includes that of $\mathbf{r}_1$). Same argument with $\mathbf{r}_{1,1}=\mathbf{0}$.\\
\indent\emph{Case 3:} $|\bkappa_1|=0$, $|\bkappa_2|>0$, $|\bkappa_3|>0$ (the support of $\mathbf{r}_1$ includes that of $\mathbf{r}_2$). Same argument with $\mathbf{r}_{2,2}=\mathbf{0}$.\\
\indent\emph{Case 4:} $|\bkappa_1|>0$, $|\bkappa_2|>0$, $|\bkappa_3|=0$ ($\mathbf{r}_1$ and $\mathbf{r}_2$ have disjoint supports). Same argument with $\mathbf{r}_{1,2}=\mathbf{r}_{2,1}=\mathbf{0}$.\\
\indent\emph{Case 5:} $|\bkappa_1| = |\bkappa_2| = 0$, $|\bkappa_3|>0$, $\kappa_{0,0}+\sum_{i=1}^{q-1} \kappa_{i,(i+b)\mathrm{mod}q}<h$ for all $b \in \{0,\dots,q-2\}$ ($\mathbf{r}_1$ and $\mathbf{r}_2$ have the same support but are not linearly dependent). Let $\mathbf{r}_1=(r_{1,0}, \dots, r_{1,h-1})$, $\mathbf{r}_2=(r_{2,0},\dots,r_{2,h-1})$ and $\mathbf{h}=(\mathrm{h}_0,\dots,\mathrm{h}_{h-1})$. Since $\mathbf{r}_1$ and $\mathbf{r}_2$ are nonzero and not linearly dependent, there exist $s$ and $t$ such that the vectors $(r_{1,s}, r_{1,t})$ and $(r_{2,s}, r_{2,t})$ are linearly independent. Letting $\beta_1=-\sum_{i=0,i\neq s,t}^{h-1} r_{1,i} \mathrm{h}_i$ and $\beta_2=-\sum_{i=0,i\neq s,t}^{h-1} r_{2,i} \mathrm{h}_i$ we obtain $\mathsf{p}_{\bkappa} = \Pr \{ \{ r_{1,s} \mathrm{h}_s + r_{1,t} \mathrm{h}_t = \beta_1 \} \cap \{ r_{2,s} \mathrm{h}_s + r_{2,t} \mathrm{h}_t = \beta_2 \} | E_{\bkappa} \}$. Linear independence of $(r_{1,s}, r_{1,t})$ and $(r_{2,s}, r_{2,t})$ implies that for any $\beta_1$ and $\beta_2$ there exists a unique pair $(\mathrm{h}_s,\mathrm{h}_t)$ fulfilling the two equations. Since all pairs are equiprobable and their number is $q^2$ we have $\mathsf{p}_{\bkappa}=q^{-2}$.
\end{IEEEproof}

The following result is a direct consequence of Proposition~\ref{prop:rand_bicomp} in the binary case.
\begin{prop}\label{lemma:LRC}
Consider the uniform parity-check ensemble defined by a random parity-check matrix of size $(h-k) \times h$ with \ac{i.i.d.} entries with uniform distribution in $\mathbb F_2$. For all $\btau \in \cT_{2,h}$, the expected joint composition enumerator for a random code drawn for the ensemble is
\begin{align*}
\avgJWEF_{\btau} = {h \choose \btau} 4^{-(h-k)}.
\end{align*}
\end{prop}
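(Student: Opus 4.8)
The plan is to obtain this statement as a direct specialization of Proposition~\ref{prop:rand_bicomp} to the binary field, so essentially no new computation is required. First I would instantiate Proposition~\ref{prop:rand_bicomp} with $q=2$: for the same uniform parity-check ensemble, this already gives, for every $\bkappa \in \cK_{2,h}$, the identity $\avgJCEF_{\bkappa} = \binom{h}{\bkappa} 2^{-2(h-k)}$. All of Cases~1--5 of that proof are available verbatim, so I do not need to revisit the $\mathsf{p}_{\bkappa} = q^{-2}$ argument.

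Next I would translate the result from joint compositions to joint weights by invoking the bijection recalled in Remark~\ref{remark:binary_equiv}. For $q=2$ a joint composition $\bkappa = [\kappa_{s,t}]$, $s,t \in \{0,1\}$, corresponds to the joint weight $\btau = (\kappa_{0,0}, \kappa_{0,1}, \kappa_{1,0}, \kappa_{1,1})$; as observed in Section~\ref{sec:results_joint}, this correspondence maps $\cK_{2,h}$ onto $\cT_{2,h}$, and under it the bicomposition and biweight enumerators coincide, so that $\avgJCEF_{\bkappa} = \avgJWEF_{\btau}$. Hence the hypothesis $\btau \in \cT_{2,h}$ in the claim is exactly the image of the hypothesis $\bkappa \in \cK_{2,h}$ used in Proposition~\ref{prop:rand_bicomp}.

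Finally I would reconcile the two sides. The multinomial coefficient $\binom{h}{\bkappa}$ is, by the matrix convention fixed in the Preliminaries, the multinomial over the four entries $\kappa_{0,0},\kappa_{0,1},\kappa_{1,0},\kappa_{1,1}$, which equals $\binom{h}{\btau}$ with $\btau=(\tau_0,\tau_1,\tau_2,\tau_3)$ as above; moreover $2^{-2(h-k)} = 4^{-(h-k)}$. Chaining these identities yields $\avgJWEF_{\btau} = \avgJCEF_{\bkappa} = \binom{h}{\bkappa} 2^{-2(h-k)} = \binom{h}{\btau} 4^{-(h-k)}$, which is the assertion.

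Since the proposition is explicitly flagged as a direct consequence, I do not expect any substantive obstacle. The only point requiring care is the bookkeeping of the binary bijection: I must check that Remark~\ref{remark:binary_equiv} simultaneously identifies the index sets ($\cK_{2,h}\leftrightarrow\cT_{2,h}$), the enumerators ($\avgJCEF_{\bkappa}=\avgJWEF_{\btau}$), and the multinomial coefficients ($\binom{h}{\bkappa}=\binom{h}{\btau}$), so that the substitution $q=2$ transports the formula without any residual mismatch between the $2\times 2$ matrix indexing and the four-component weight vector.
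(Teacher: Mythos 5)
Your proposal is correct and follows exactly the paper's own argument: the paper likewise proves this proposition by invoking the $q=2$ equivalence of joint compositions and joint weights from Remark~\ref{remark:binary_equiv} and specializing Proposition~\ref{prop:rand_bicomp}, so that $\avgJWEF_{\btau} = \avgJCEF_{\bkappa} = \binom{h}{\bkappa}2^{-2(h-k)} = \binom{h}{\btau}4^{-(h-k)}$. Your additional bookkeeping of the correspondences $\cK_{2,h}\leftrightarrow\cT_{2,h}$ and $\binom{h}{\bkappa}=\binom{h}{\btau}$ is just a more explicit rendering of the same step.
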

\begin{IEEEproof}
Recall from Remark~\ref{remark:binary_equiv} that for $q=2$ the two concepts of joint composition and joint weight become equivalent so that, letting $\btau = \jw(\bkappa)$, we can write $\avgJWEF_{\btau} = \avgJCEF_{\bkappa}$. 
\end{IEEEproof}

\section*{Acknowledgment}
The authors would like to thank the Associate Editor and the  anonymous reviewers for their valuable comments, which helped to substantially improve the paper.


\end{document}